\pdfoutput=1
\documentclass{siamart171218}
\usepackage[T1]{fontenc}
\usepackage[latin9]{inputenc}
\usepackage{array}
\usepackage{./svg}
\usepackage{verbatim}
\usepackage{textcomp}
\usepackage{bm}
\usepackage{amsfonts}
\usepackage{amssymb}
\usepackage{amsmath}
\usepackage{graphicx}
\usepackage{setspace}

\usepackage{graphicx,epstopdf}

\makeatletter

\providecommand{\tabularnewline}{\\}
\newsiamremark{remark}{Remark} 


\makeatother
\usepackage[english]{babel}

\title{Multi-clusters in networks of adaptively coupled phase oscillators\thanks{Submitted to the editors August 31, 2018.
		\funding{This work was supported by the German Research Foundation (DFG) within SCHO 307/15-1 and YA 225/3-1 "DFG-RSF: Complex dynamical networks: effects of heterogeneity, adaptivity, and delays".
	}}}
\author{
		Rico Berner\footnotemark[2] \footnotemark[3] 
		\and Eckehard Schöll\thanks{Institute of Theoretical Physics, Technische Universit\"at Berlin, Hardenbergstr. 36, D-10623 Berlin, Germany, (\email{rico.berner@physik.tu-berlin.de}).}
		\and Serhiy Yanchuk\thanks{Institute of Mathematics, Technische Universit\"at Berlin, Strasse des 17. Juni 136, D-10623 Berlin, Germany.}
}

\headers{Multi-clusters in adaptive networks}{Rico Berner, Eckehard Schöll, and Serhiy Yanchuk}

\begin{document}
\maketitle
\begin{abstract}
Dynamical systems on networks with adaptive couplings appear naturally
in real-world systems such as power grid networks, social networks
as well as neuronal networks. We investigate a paradigmatic system
of adaptively coupled phase oscillators inspired by neuronal networks
with synaptic plasticity. One important behaviour of such systems reveals
splitting of the network into clusters of oscillators with the same
frequencies, where different clusters correspond to different frequencies. Starting
from one-cluster solutions we provide existence criteria for multi-cluster solutions and present their explicit form. The phases of the oscillators within
one cluster can be organized in different patterns: antipodal, double antipodal, and splay type. Interestingly, multi-clusters are shown to exist
where different clusters exhibit different patterns. For instance,
an antipodal cluster can coexist with a splay cluster. We also provide stability conditions for one- and
multi-cluster solutions. These conditions, in particular,
reveal a high level of multistability. 
\end{abstract}

\begin{keywords}
	phase oscillators, adaptive networks, synaptic plasticity	
\end{keywords}

\begin{AMS}
	34D06, 37F99, 41A60
\end{AMS}

\section{Introduction\label{sec:Introduction}}

Collective behavior in networks of coupled oscillatory systems
has attracted a lot of attention in the last decades. Depending on
the network and the specific dynamical system, various synchronization
patterns of increasing complexity were explored. Even in simple models
of coupled oscillators, patterns such as complete synchronization
\cite{KUR84,PIK01}, cluster synchronization where the network
splits into groups of synchronous elements \cite{DAH12}, or special types of spatial
coexistence of coherent (synchronized) and incoherent (desynchronized)
states have been found~\cite{OME11,ABR04,KUR02a}. Apart from the
theoretical importance, these complex solutions are found in a wide range
of experimental systems including optoelectronic networks \cite{SOR13},
chemical networks \cite{TIN12,HAG12,VAN00}, neural networks \cite{HAM07},
ecological and climate systems \cite{BLA99a}.

The real-world systems are often described by the networks that change
their structure over time. For instance, the synaptic connections
between neurons change depending on relative timings of neuronal spiking
\cite{ABB00,LUE16,MAR97a,CAP08a}. Thus the network structure reorganize
adaptively in response to the dynamics. Similarly, chemical systems
have been reported \cite{JAI01}, where the reaction rates adapt dynamically
depending on the variables of the system. Activity-dependent plasticity
is also common in biological or social systems \cite{GRO08a}. Alternatively, adapting the network topology has also successfully been used in order to control cluster synchronization in delay-coupled networks \cite{LEH14}.

In this article, we consider a network of $N$ coupled phase oscillators
with adaptive coupling
\begin{align}
\frac{d\phi_{i}}{dt} &=\omega-\frac{1}{N}\sum_{j=1}^{N}\kappa_{ij}\sin(\phi_{i}-\phi_{j}+\alpha),\label{eq:PhiDGL_general} \\
\frac{d\kappa_{ij}}{dt}&=-\epsilon\left(\sin(\phi_{i}-\phi_{j}+\beta)+\kappa_{ij}\right),\label{eq:KappaDGL_general}
\end{align}
where $\phi_{i}$ represents the phase of the $i$th
oscillator ($i=1,\dots,N$) and $\omega$ the natural frequency. The
interaction between the phase oscillators is described by the coupling
matrix $\kappa(t):=\left(\kappa_{ij}(t)\right){}_{i,j\in{1,\dots,N}}$,
$\kappa_{ij}(t)\in[-1,1]$. Thus, the phase space of
the system $\left(\phi(t),\kappa(t)\right)\in\mathbb{T}^{N}\times[-1,1]^{N^{2}}$
is $(N+N^{2})$-dimensional with $\mathbb{T}^N$ denoting the $N$-torus. The parameter $\alpha$ can be considered
as a phase-lag of the interaction \cite{SAK86}. System (\ref{eq:PhiDGL_general})\textendash (\ref{eq:KappaDGL_general})
has attracted a lot of attention recently \cite{KAS17,AOK09,AOK11,KAS16a,NEK16,GUS15a,PIC11a,TIM14,REN07,AVA18},
since it is a first choice paradigmatic model for the modeling of
the dynamics of adaptive networks. In particular, it generalizes the
Kuramoto (or Kuramoto-Sakaguchi) model with fixed $\kappa$ \cite{ACE05a,KUR84,OME12b,STR00,PIK08}. 

The matrix $\kappa(t)$ characterizes the coupling topology of the network
at time $t$. Assume that $\epsilon$ is a small but not vanishing parameter. Then, the dynamical equation (\ref{eq:KappaDGL_general}) describes the adaptation of the network topology depending
on the dynamics of the network nodes. Using the terminology from
neuroscience, such an adaptation can be also called plasticity ~\cite{AOK11}.
The chosen adaptation function in the form $\sin(\phi_{i}-\phi_{j}+\beta)$
with control parameter $\beta$ can take into account different plasticity
rules that can occur in neuronal networks, see Fig.~\ref{fig:Plasticity_fucntion_betadep}.
For instance, for $\beta=-\pi/2$, the Hebbian rule is obtained where
the coupling $\kappa_{ij}$ is increasing between any two systems
with close phases, i.e., $\phi_{i}-\phi_{j}$ close to zero ~\cite{HEB49,HOP96,SEL02,AOK15}.
If $\beta=0$ the link $\kappa_{ij}$ will be strengthened if the
$j$-th oscillator is advancing the $i$-th. Such a relationship is
typical for spike-timing-dependent plasticity in neuroscience
\cite{CAP08a,MAI07,LUE16,POP13}.

\begin{figure}
	\centering 
	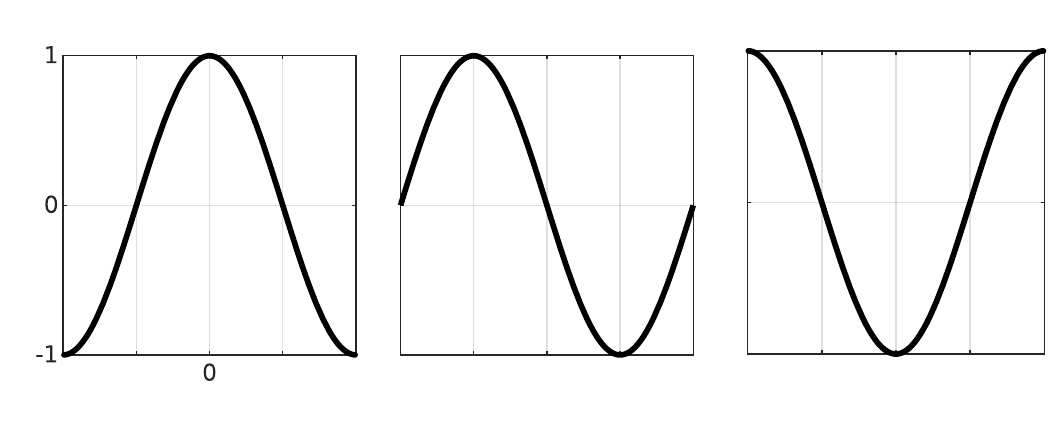
	\caption{The plasticity function $-\sin(\Delta\phi+\beta)$ and corresponding plasticity rules. (a) $\beta=-\frac{\pi}{2}$, (b) $\beta=0$, (c) $\beta=\frac{\pi}{2}.$\label{fig:Plasticity_fucntion_betadep}}
\end{figure}
Let us mention important properties of the model. Firstly, the parameter
$\epsilon\ll1$ separates the time scales of the slow dynamical behaviour
of the coupling strengths and the fast dynamics of the oscillatory
system. Due to the invariance of system (\ref{eq:PhiDGL_general})\textendash (\ref{eq:KappaDGL_general}) with respect to the phase-shift $\phi_i\mapsto\phi_i+\psi$ for all $i=1,\dots,N$ and $\psi\in\mathbb{T}^1$, the frequency $\omega$ can be set to zero in the co-rotating coordinate frame $\phi\mapsto\phi+\omega t$. Moreover, one can restrict the consideration of the coupling weights to the interval $-1\le\kappa_{ij}\le1$ due to the existence of the attracting region $G:=\left\{ \left(\phi_{i},\kappa_{ij}\right):\phi_{i}\in\mathbb{T}^{1},|\kappa_{ij}|\le1,\,i,j=1,\dots,N\right\}$~\cite{KAS17}. 

Finally, let us mention the symmetries of the model
\begin{align*}
	(\alpha,\beta,\phi_{i},\kappa_{ij}) & \mapsto(-\alpha,\pi-\beta,-\phi_{i},\kappa_{ij}),\\
	(\alpha,\beta,\phi_{i},\kappa_{ij}) & \mapsto(\alpha+\pi,\beta+\pi,\phi_{i},-\kappa_{ij}).
\end{align*}
As a result of these symmetries one can restrict the analysis to the parameter region $\alpha\in[0,\pi/2)$ and $\beta\in[-\pi,\pi)$. 

System (\ref{eq:PhiDGL_general})\textendash (\ref{eq:KappaDGL_general})
has been studied numerically in \cite{AOK09,AOK11,NEK16,KAS16a,KAS17}.
In particular, it is shown that starting from uniformly distributed
random initial condition ($\phi_{i}\in[0,2\pi)$, $\kappa_{ij}\in[-1,1]$
for all $i,j\in{1,\dots,N}$) the system can reach different multi-cluster
solutions with hierarchical structure depending on the parameters $\alpha$
and $\beta$. The individual cluster of the multi-clusters consists
of either splay or anti-phase synchronous
solutions. In addition, multi-cluster solutions are reported in an adaptive network of Morris-Lecar bursting neurons with spike timing-dependent plasticity rule~\cite{POP15}.

The goal of this paper is to provide an analytic description of the
one-cluster and multi-cluster solutions. Additionally, we report a new,
mixed multi-cluster solution. These new multi-clusters consist simultaneously
of splay and (in-) anti-phase synchronous solutions. Thus, all one-cluster solutions can serve as building blocks for a multi-cluster solution. We show that multi-clusters corresponding to certain fixed frequency ratio appear in continuous
families, and, moreover, multi-clusters with different frequency ratios
can coexist for the same parameter values. We derive explicit algebraic
equations for the frequencies of coexisting multi-cluster solutions.
In a particular case of two clusters, these equations can be explicitly solved. 

The structure of the paper is as follows. Sections~\ref{sec:One-cluster}
and \ref{sec:Multi-cluster-states} present our main results on the
existence of one- and multi-cluster solutions, respectively. The multi-cluster
solutions are illustrated using numerical integration of (\ref{eq:PhiDGL_general})\textendash (\ref{eq:KappaDGL_general}).
Section~\ref{sec:StabilityAnalysis} investigates stability of these
solutions. Finally, section~\ref{sec:conclusion} summarizes our main results and puts them into the context of general adaptive networks. The proofs of the propositions in sections~\ref{sec:One-cluster}
and \ref{sec:Multi-cluster-states} are presented in Appendix~\ref{sec:Proofs}.

\section{One-cluster solutions\label{sec:One-cluster}}

We start with the collective dynamics, where all oscillators
are synchronous up to phase shifts, i.e. $\phi_{i}=s(t)+a_{i}$. It
is easy to see from~(\ref{eq:PhiDGL_general}), that $ds/dt=\text{const}$
in this case, and, hence, $s(t)=\Omega t$ with some constant frequency
$\Omega$. Moreover, due to the symmetry of system (\ref{eq:PhiDGL_general})\textendash (\ref{eq:KappaDGL_general}) with respect to the phase-shift $\phi_{i}\mapsto\phi_{i}-a_{1}$ one can consider $a_{1}=0$ without loss of generality. Therefore, we define the
following solutions.
\begin{definition}
\label{def:PhaseOscStates} Phase oscillators $\phi_{i}(t)$, $i=1,\dots,N$
are said to be \\
(i) \textbf{in-phase synchronous} if $\phi_{i}(t)=s(t)$ for all $i$;
\\
(ii) \textbf{anti-phase synchronous} if $\phi_{i}(t)=s(t)+a_{i}$ with
$a_{i}\in\{0,\pi\}$ and there are $i\neq j$ such that $a_{i}\ne a_{j}$;\\
(iii) \textbf{rotating-waves} if $\phi_{i}(t)=s(t)+(i-1)2\pi k/N$, where
$k\in\left\{ 1,\dots,N\right\} $ is the wave number;\\
(iv) \textbf{phase-locked} if $\phi_{i}=s(t)+a_{i}$ with arbitrary
$a_{i}\in\mathbb{T}^{1}$.
\end{definition}
Note that the following implications hold: rotating-waves with $k=0$
and $k=N/2$ are in-phase and anti-phase synchronous, respectively.
Rotating-waves, in-phase, and anti-phase solutions are phase-locked.
The term ''rotating-wave'' relates to the rotating symmetry of the
solutions with respect to the spatial coordinate given by the index
$i$. These solutions are also known as twisted states~\cite{WIL06}. 

The following quantity is used for measuring the synchronization.
\begin{definition}
The\textbf{ $\mathbf{n}$th order parameter} for the state $\bm{\phi}\in\mathbb{T}^{N}$
is defined as 
\begin{align}
Z_{n}(\mathbf{\bm{\phi}})=R_{n}(\bm{\phi}) & e^{\mathrm{i}\theta_{n}(\bm{\phi})}:=\frac{1}{N}\sum_{j=1}^{N}e^{\mathrm{i}n\phi_{i}},\label{eq:Nth_OrderParameter}
\end{align}
where $n\in\mathbb{N}$. The symbols $R_{n}>0$ and $\theta_{n}\in\mathbb{T}^{1}$
denote the modulus and the phase of the order parameter, respectively.
\end{definition}
Note, the symbol $\mathrm{i}$ denotes the imaginary unit to distinguish it from the index $i$. It is straightforward to check the values of the $n$th order parameter
for the special solutions defined above. For instance, for an in-phase
synchronous solution it holds $Z_{n}(\bm{\phi}(t))=e^{\mathrm{\mathrm{i}}n\Omega t}$
and, hence $R_{n}=1$ for all $n\in\mathbb{N}$ and $t\in\mathbb{R}$.
For the other solutions the results are summarized in Table~\ref{tab:Nth_OrderParameter_DiffStates}.
If $\bm{\phi}(t)$ is a phase-locked solution, the modulus of the $n$th
order parameter does not depend on $\Omega t$, and, hence, it will
be often referred to as $R_{n}(\mathbf{a})$, where $\mathbf{a}=(a_{1},\dots,a_{N})^{T}$
is the phase shift vector.
\begin{table}
\begin{centering}
\begin{tabular}{|>{\centering}m{5cm}|>{\centering}m{6cm}|}
\hline 
\noalign{\vskip\doublerulesep}
State & $n$th order parameter\tabularnewline[\doublerulesep]
\hline 
\noalign{\vskip\doublerulesep}
\hline 
\noalign{\vskip\doublerulesep}
in-phase

$\phi_{i}=\Omega t$ & $R_{n}=1$\tabularnewline[\doublerulesep]
\noalign{\vskip\doublerulesep}
\hline 
\noalign{\vskip\doublerulesep}
anti-phase

$\phi_{i}=\Omega t+a_{i}$, $a_{i}\in\left\{ 0,\pi\right\} $ & $R_{2n}=1$, $R_{2n+1}=\left|2\frac{N_{1}}{N}-1\right|$, where $N_{1}$
is the number of $a_{i}=0$\tabularnewline[\doublerulesep]
\noalign{\vskip\doublerulesep}
\hline 
\noalign{\vskip\doublerulesep}
rotating-wave

$\phi_{i}=\Omega t+i\frac{2\pi}{N}k$, $k\ne0,\frac{N}{2}$  & $R_{n}=1$ if $\frac{n\cdot k}{N}\in\mathbb{N}_0$ and $R_{n}=0$ otherwise\tabularnewline[\doublerulesep]
\noalign{\vskip\doublerulesep}
\hline 
\noalign{\vskip\doublerulesep}
phase-locked

$\phi_{i}=\Omega t+a_{i}$, $a_{i}\in\mathbb{T}^{1}$ & $R_{n}=\frac{1}{N}\left|\sum_{j=1}^{N}e^{\mathrm{i}na_{j}}\right|$\tabularnewline[\doublerulesep]
\hline 
\noalign{\vskip\doublerulesep}
\end{tabular}
\par\end{centering}
\caption{$n$th order parameter for the phase-locked solutions introduced in
Definition~\ref{def:PhaseOscStates}\label{tab:Nth_OrderParameter_DiffStates}.}
\end{table}
With the help of the order parameter we can introduce the following three types of phase-locked states.
\begin{definition}
	\label{def:PhaseOscStatesSPLAYANTDOUBLE} Phase oscillators $\phi_{i}(t)$, $i=1,\dots,N$
	are said to form a\\
	(i) \textbf{(Splay cluster)} if $R_{2}(\bm{\phi})=0$;\\
	(ii) \textbf{(Antipodal cluster)} if $R_{2}(\bm{\phi})=1$, \textit{i.e.}, $\phi_{i}\in\left\{ 0,\pi\right\}$
	for all $i=1,\dots,N$;
	\\
	(iii) \textbf {(Double antipodal cluster)} if $\phi_{i}\in\left\{ 0,\pi,\psi,\psi+\pi\right\}$ for all $i=1,\dots,N$ with $\psi\in (0,\pi)$.
\end{definition}
Note that if $\bm{\phi}$ is in-phase or anti-phase synchronous, $\bm{\phi}$ forms an antipodal cluster as well. Rotating waves from Definition~\ref{def:PhaseOscStates} are also known as splay states or incoherent clusters~\cite{CHO09}. For all theses rotating waves, it holds $R_1(\bm{\phi})=R_2(\bm{\phi})=0$. As it will be shown in Proposition~\ref{prop:1ClusterSolutions}, system (\ref{eq:PhiDGL_general})\textendash (\ref{eq:KappaDGL_general}) generically possesses solutions with $R_2(\bm{\phi})=0$ rather than $R_1(\bm{\phi})=0$. Both uniformity criteria are clearly related since $R_2(\bm{\phi})=R_1(2\bm{\phi})$. We use the notion splay cluster in order to stress that the phases are uniformly distributed around the unit circle with respect to the second moment of the order parameter.
The following result describes all possible phase-locked solutions in the
system of adaptively coupled oscillators~(\ref{eq:PhiDGL_general})\textendash (\ref{eq:KappaDGL_general}). We call these solutions one-cluster solutions, since all oscillators possess the same frequency.
\begin{proposition}
\label{prop:1ClusterSolutions} System (\ref{eq:PhiDGL_general})\textendash (\ref{eq:KappaDGL_general})
possesses the following phase-locked solutions 
\begin{align}
\phi_{i} & = \Omega t+a_{i},\label{eq:pPL}\\
\kappa_{ij} & = -\sin(a_{i}-a_{j}+\beta),\quad i,j=1,\dots,N\label{eq:kPL}
\end{align}
if and only if one of the following three conditions is fulfilled:\\
(i) the phases $a_{i}$ form a splay cluster, i.e., $R_{2}(\mathbf{a})=0$;\\
(ii) the phases $a_{i}$ form an antipodal cluster, i.e., $R_{2}(\mathbf{a})=1$;
\\
(iii) the phases $a_{i}$ form a double antipodal cluster with $m\in\left\{ 1,\dots,N-1\right\} $, $a_{i}\in\left\{ 0,\pi,\psi_{m},\psi_{m}+\pi\right\}$,
$i=1,\dots,N$ and $\psi_{m}$ being the unique modulo $2\pi$ solution
to the following equation
\begin{align*}
\frac{N-m}{m}\sin(\psi-\alpha-\beta)=\sin(\psi+\alpha+\beta),
\end{align*}
and the number of phase shifts $a_{i}$ such that $a_{i}\in\{0,\pi\}$
equals to $m$. 

The corresponding frequencies are given by
\begin{equation}
\Omega=\begin{cases}
\cos(\alpha-\beta)/2 & \mbox{if }\,R_{2}(\mathbf{a})=0,\\
\sin\alpha\sin\beta & \mbox{if }\,R_{2}(\mathbf{a})=1,\\
\cos(\alpha-\beta)/2-\frac{1}{2}R_{2}(\mathbf{a})\cos(\psi) & \text{in case (iii)}.
\end{cases}\label{eq:OmegaPL}
\end{equation}
\end{proposition}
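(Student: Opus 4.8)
The plan is to substitute the phase-locked ansatz \eqref{eq:pPL} directly into \eqref{eq:PhiDGL_general}--\eqref{eq:KappaDGL_general} and read off the constraints it imposes. Since for such an ansatz every phase difference $\phi_i-\phi_j=a_i-a_j$ is constant in time, equation \eqref{eq:KappaDGL_general} reduces to $d\kappa_{ij}/dt=0$, which holds if and only if $\kappa_{ij}=-\sin(a_i-a_j+\beta)$; this already forces the form \eqref{eq:kPL}, and that part of the system is then automatically satisfied. Inserting this $\kappa_{ij}$ into \eqref{eq:PhiDGL_general} and applying the product-to-sum identity to $\sin(a_i-a_j+\beta)\sin(a_i-a_j+\alpha)$, I would rewrite the frequency equation, using the second-order parameter $Z_2=R_2e^{\mathrm{i}\theta_2}=\frac1N\sum_j e^{\mathrm{i}2a_j}$, in the compact form
\[
\Omega=\omega+\tfrac12\cos(\alpha-\beta)-\tfrac12 R_2(\mathbf{a})\cos\!\bigl(2a_i+\alpha+\beta-\theta_2(\mathbf{a})\bigr),\qquad i=1,\dots,N,
\]
where $\omega=0$ in the co-rotating frame.

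The key observation is that a phase-locked solution exists precisely when this right-hand side is independent of $i$, i.e. when $R_2\cos(2a_i+\alpha+\beta-\theta_2)$ is the same for every $i$. If $R_2(\mathbf{a})=0$ this is automatic and yields a splay cluster with $\Omega=\tfrac12\cos(\alpha-\beta)$, case (i). If $R_2(\mathbf{a})\neq0$, the cosine itself must be constant in $i$; since $\cos(x)=\mathrm{const}$ has at most two roots $x$ modulo $2\pi$, the quantities $2a_i\bmod2\pi$ take at most two values, and hence --- using $a_1=0$ --- the $a_i$ lie either in $\{0,\pi\}$ or in $\{0,\pi,\psi,\psi+\pi\}$. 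The first alternative is exactly $R_2=1$ (all $e^{\mathrm{i}2a_j}$ aligned), giving the antipodal cluster with $\Omega=\tfrac12[\cos(\alpha-\beta)-\cos(\alpha+\beta)]=\sin\alpha\sin\beta$, case (ii); the second is the double antipodal cluster, case (iii).

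It remains to pin down $\psi$ in the double antipodal case. Writing $m$ for the number of indices with $a_i\in\{0,\pi\}$ (so $2a_i=0$) and $N-m$ for those with $2a_i=2\psi$, I would impose equality of $\Omega$ across the two groups; expanding the two resulting cosine sums with the addition theorem and cancelling a common factor $2\sin\psi\neq0$ reduces the condition to $(2m-N)\cos(\alpha+\beta)\sin\psi+N\sin(\alpha+\beta)\cos\psi=0$, which is precisely the stated equation $\frac{N-m}{m}\sin(\psi-\alpha-\beta)=\sin(\psi+\alpha+\beta)$. Being of the form $\tan\psi=\mathrm{const}$, it has a unique root $\psi_m\in(0,\pi)$ (the companion root $\psi_m+\pi$ produces the same set $\{0,\pi,\psi_m,\psi_m+\pi\}$). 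The same computation identifies $\theta_2\equiv\psi+\alpha+\beta\pmod\pi$, whence $\cos(2a_i+\alpha+\beta-\theta_2)=\cos\psi$ and back-substitution yields $\Omega=\tfrac12\cos(\alpha-\beta)-\tfrac12 R_2(\mathbf{a})\cos\psi$, which reduces to cases (i)--(ii) in the limits $R_2=0$ and $R_2=1$. Sufficiency in all three cases follows by reversing these steps. The main obstacle is the classification in the ``only if'' direction --- establishing rigorously that $i$-independence of the cosine forces at most two values of $2a_i$ and hence exactly the antipodal/double antipodal geometry --- together with fixing the correct sign of $\cos\psi$ when the self-consistent $\theta_2$ is substituted back into the frequency.
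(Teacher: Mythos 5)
Your proposal follows essentially the same route as the paper's proof: substitute the ansatz so that \eqref{eq:KappaDGL_general} forces \eqref{eq:kPL}, rewrite the phase equation as $\Omega=\tfrac12\cos(\alpha-\beta)-\tfrac12\Re\bigl(e^{-\mathrm{i}(2a_i+\alpha+\beta)}Z_2(\mathbf{a})\bigr)$, and demand that the right-hand side be independent of $i$; the trichotomy then follows from $R_2=0$ versus ``the cosine is constant in $i$,'' the latter forcing $2a_i$ to take at most two values modulo $2\pi$. Your way of closing the double antipodal case --- equating $\Omega$ across the two groups and cancelling $2\sin\psi$ --- is algebraically equivalent to the paper's: the paper instead writes $Z_2=R_2e^{-\mathrm{i}\gamma}$ and imposes the self-consistency relation whose imaginary part (the second equation of \eqref{eq:gamma_cond}, with $\gamma=-\psi-\alpha-\beta$, $q_1=m/N$) is exactly your constraint $\frac{N-m}{m}\sin(\psi-\alpha-\beta)=\sin(\psi+\alpha+\beta)$.

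The one genuine gap is the point you flag yourself: fixing the sign of $\cos\psi$ in the frequency formula. Your computation only yields $\theta_2\equiv\psi+\alpha+\beta\pmod{\pi}$, so it gives $\Omega=\tfrac12\cos(\alpha-\beta)\mp\tfrac12R_2(\mathbf{a})\cos\psi$ with an undetermined sign, and the ambiguity is not cosmetic: the same physical configuration admits the two labelings $(m,\psi)$ and $(N-m,\pi-\psi)$, which both satisfy the constraint equation but flip the sign of $\cos\psi$, so only one of them makes the stated formula true. This is precisely what the paper's order-parameter formulation buys: the \emph{real} part of \eqref{eq:gamma_cond}, i.e.\ $q_1\cos\gamma+q_2\cos(\gamma+2\vartheta)=R_2(\mathbf{a})\ge 0$, together with the explicit $\sin\nu$/$\cos\nu$ pair, pins $\gamma$ uniquely modulo $2\pi$ --- that is, it selects the branch $\theta_2=\psi+\alpha+\beta$ exactly, for which $R_2\ge0$ and $\Omega=\tfrac12\cos(\alpha-\beta)-\tfrac12R_2(\mathbf{a})\cos\psi$ holds as stated. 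You could close your version in one line: under your constraint, $Z_2e^{-\mathrm{i}(\psi+\alpha+\beta)}=\tfrac1N\bigl(m\cos(\psi+\alpha+\beta)+(N-m)\cos(\psi-\alpha-\beta)\bigr)$ is real, and choosing the labeling of the two groups for which this quantity is nonnegative makes $\theta_2=\psi+\alpha+\beta$ hold modulo $2\pi$, not merely modulo $\pi$.
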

The proof of this and other propositions are given in App.~\ref{sec:Proofs}. Note that for the special cases $\alpha=0$ or $\alpha=\pi$ and $\beta=\alpha+\pi/2$
or $\beta=\alpha+3\pi/2$ solutions with $R_{2}(\mathbf{a})\notin\left\{ 0,1\right\} $
were discussed in \cite{GUS15a}. Moreover, similar solutions were found
in experimental settings with delay-coupled chemical oscillators~\cite{BLA13}.

Note that conditions (i)\textendash (iii) of Proposition \ref{prop:1ClusterSolutions} imply that there are three possible types of one-cluster solutions: splay, antipodal, and double antipodal. We illustrate these solutions in Figs.~\ref{fig:1Cl_Illustration}(a\textendash c). Without loss of generality, we neglect self-coupling for all numerical simulations. Hence, the entries of the coupling matrix $\kappa$ are zero on the diagonal. Indeed, in the new co-rotating coordinate frame $\phi\mapsto\phi-(1/N)\sin(\alpha)\sin(\beta)t$, the system ~\eqref{eq:PhiDGL_general}--\eqref{eq:KappaDGL_general} with self-coupling is equivalent to the same system without self-coupling.

\begin{figure}
	\begin{center}
		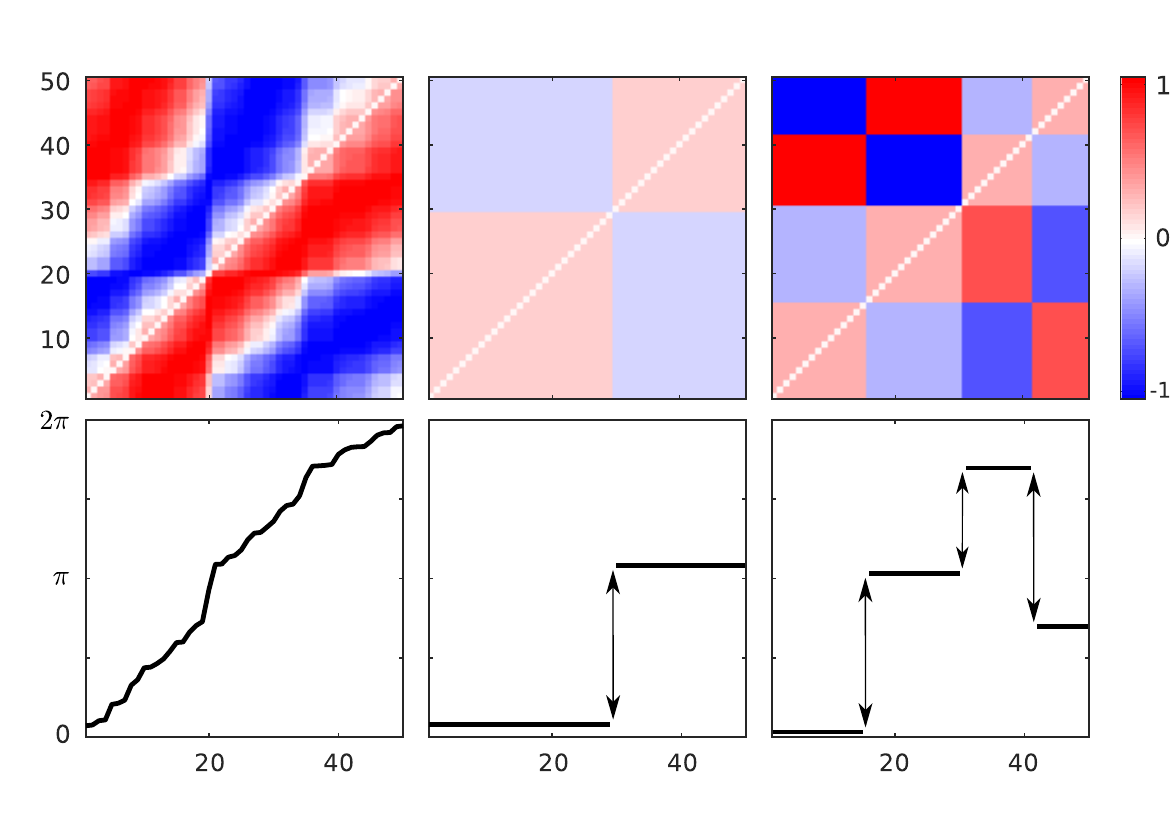
	\end{center}%
\caption{\label{fig:1Cl_Illustration}Illustration of the three types of one-cluster
solutions given by (\ref{eq:pPL})\textendash (\ref{eq:kPL}) for an
ensemble of $50$ oscillators. One-cluster solutions (a) of splay type ($R_{2}(\mathbf{a})=0$)
for $\alpha=0.3\pi$, $\beta=0.1\pi$, (b) of antipodal type ($R_{2}(\mathbf{a})=1$),
for $\alpha=0.2\pi$, $\beta=-0.95\pi$ and (c) of double antipodal type satisfying
condition (iii) of Proposition \ref{prop:1ClusterSolutions} with
$m=30$ for $\alpha=0.3\pi$, $\beta=-0.15\pi$.
}
\end{figure}

In the following, we will focus our study on the splay and antipodal clusters with $R_{2}(\mathbf{a})\in\{0,1\}$,
i.e. the phase-locked solutions given by cases (i) and (ii) of Proposition
\ref{prop:1ClusterSolutions}. We further remark that the phase-locked solutions (\ref{eq:pPL})\textendash (\ref{eq:kPL}) are relative equilibria with respect to the phase-shift defined in section~\ref{sec:Introduction}, \textit{i.e.}, they are equilibria in the co-rotating frame $\phi\mapsto\phi+\Omega t$.

If $R_{2}(\mathbf{a})=1$, Proposition \ref{prop:1ClusterSolutions}
implies that $a_{i}$ are either $0$ or $\pi$. Therefore, there
are $2^{N-1}$ isolated solutions of this kind. Note that the in-phase synchronous solution is an antipodal one-cluster solution.

The situation is different for the splay cluster. The relation $R_{2}(\mathbf{a})=0$ gives
the $N-2$ parametric ($N>2$) family 
\begin{multline} \label{eq:S}
S:=\Bigl\{(\phi_{i},\kappa_{ij}):\phi_{i}=\Omega t+a_{i},\,\,\kappa_{ij}=-\sin(a_{i}-a_{j}+\beta),\\
\sum_{j=1}^{N}\sin(2a_{j})=\sum_{j=1}^{N}\cos(2a_{j})=0\Bigr\},
\end{multline}
where $\Omega=\cos(\alpha-\beta)/2$. Moreover, analogously to~\cite{ASH08}, one can show that $S$ is the union of $N-2$ dimensional manifolds.

The structure of the solution family (\ref{eq:S}) is illustrated in Figs.~\ref{fig:1Cl_type1_manifold}(a\textendash c)
for $N=2,3,4$.
\begin{figure}
	\begin{center}
		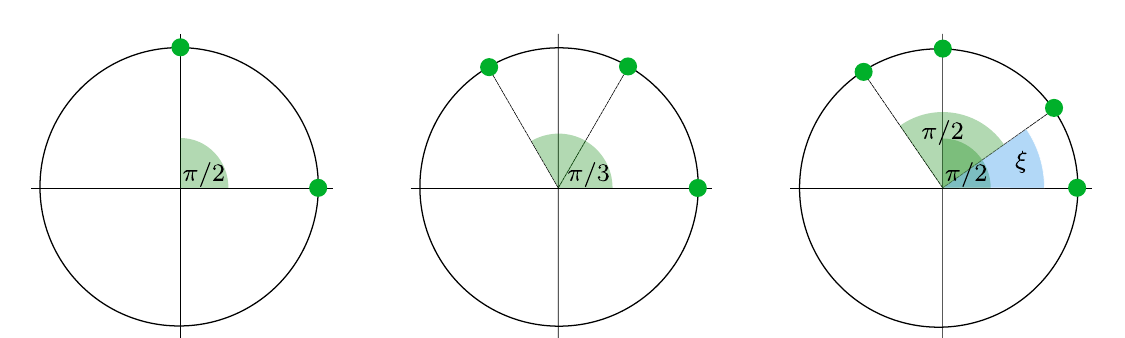
	\end{center}
\caption{\label{fig:1Cl_type1_manifold}Illustration of the family of solutions
$S$ (a) $N=2$, (b) $N=3$, (c) $N=4$.}
\end{figure}
Figure~\ref{fig:1Cl_type1_manifold}(a) shows one of the two disjoint
one-dimensional subsets of $S$ for the case of two adaptively coupled
oscillators modulo common rotation of the phases on the circle. In
fact, the oscillators have to have a phase shift of $\pi/2$ in order
to meet the condition $R_{2}(\mathbf{a})=0$, i.e. $\mathbf{a}=\left(\gamma,\gamma+\pi/2\right)$,
$\gamma\in[0,2\pi)$. The dimension of $S$ for $N=2$ is one. For a system consisting of three or four phase oscillators
the dimension of $S$ is either $1$ or $2$, respectively. For $N=3$,
one has $\mathbf{a}=\left(\gamma,\gamma+\pi/3,\gamma+2\pi/3\right)$,
see Fig.~\ref{fig:1Cl_type1_manifold}(b). For the case $N=4$, we
have $\mathbf{a}=(\gamma,\gamma+\xi,\gamma+\pi/2,\gamma+\xi+\pi/2),$
$\gamma,\xi\in[0,2\pi)$, see Fig.~\ref{fig:1Cl_type1_manifold}(c).

Note that the set of phases satisfying the condition $R_1(\mathbf{a})=0$ was described in~\cite{ASH08,BUR11,ASH16a}. Our case of splay clusters can be related to this set using the equality $R_2(\mathbf{a})=R_1(2\mathbf{a})=0$.

Rotating-waves are a particular case of the splay cluster, namely, the following corollary holds. 
\begin{corollary}
\label{cor:1ClusterRWSol}For any $k\in\left\{ 1,\dots,N-1\right\} ,$ $k\ne N/2$
the rotating-wave 
\begin{align*}
\phi_{i} & = \Omega t+(i-1)\frac{2\pi k}{N},\\
\kappa_{ij} & = -\sin\left((i-j)\frac{2\pi k}{N}+\beta\right),
\end{align*}
with $\Omega=\cos(\alpha-\beta)/2$ is a solution of system~(\ref{eq:PhiDGL_general})\textendash (\ref{eq:KappaDGL_general}).
\end{corollary}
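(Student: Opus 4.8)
The plan is to recognize the rotating-wave as a special case of the splay cluster and then invoke case (i) of Proposition~\ref{prop:1ClusterSolutions}. Concretely, I would set $a_i = (i-1)\frac{2\pi k}{N}$ and verify the defining condition $R_2(\mathbf{a}) = 0$ of a splay cluster; once this is done, the proposition supplies both the coupling matrix and the frequency for free.

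First I would compute the second-order parameter using its geometric-sum representation. Writing $q := e^{\mathrm{i} 4\pi k/N}$, one has $\sum_{j=1}^N e^{2\mathrm{i}a_j} = \sum_{j=0}^{N-1} q^{j}$. Provided $q \ne 1$, the finite geometric series equals $(q^N - 1)/(q-1)$, and since $q^N = e^{\mathrm{i}4\pi k} = 1$ for integer $k$, the numerator vanishes. Hence $R_2(\mathbf{a}) = 0$ precisely when $q \ne 1$.

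Next I would check that the degenerate case $q = 1$, equivalently $N \mid 2k$, is exactly the one excluded by the hypothesis. For $k \in \{1,\dots,N-1\}$ the quantity $2k$ lies strictly between $0$ and $2N$, so the only multiple of $N$ it can equal is $N$ itself, giving $k = N/2$. Since the corollary rules out $k = N/2$, we always have $q \ne 1$, and therefore $R_2(\mathbf{a}) = 0$ for every admissible $k$.

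With the splay condition established, case (i) of Proposition~\ref{prop:1ClusterSolutions} directly yields the phase-locked solution with $\kappa_{ij} = -\sin(a_i - a_j + \beta)$ and frequency $\Omega = \cos(\alpha-\beta)/2$; substituting $a_i - a_j = (i-j)\frac{2\pi k}{N}$ recovers the stated form. I do not anticipate a genuine obstacle here, as the result is essentially a corollary. The only point demanding minor care is the bookkeeping on which $k$ make the geometric ratio equal to one, which is exactly where both the hypothesis $k \ne N/2$ and the range restriction $k \in \{1,\dots,N-1\}$ enter.
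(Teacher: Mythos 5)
Your proposal is correct and follows exactly the paper's route: the paper presents this corollary as an immediate consequence of Proposition~\ref{prop:1ClusterSolutions}, case (i), since rotating-waves with $k\ne 0, N/2$ satisfy $R_2(\mathbf{a})=0$. The only difference is that you spell out the geometric-sum computation showing $R_2(\mathbf{a})=0$ and the bookkeeping for the excluded case $k=N/2$, details the paper leaves implicit (via its Table~\ref{tab:Nth_OrderParameter_DiffStates}).
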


Let us make a short remark, which allows for a better understanding and
interpretation of the phase-locked solutions given in Proposition~\ref{prop:1ClusterSolutions}.
Assume that the phase variables are in a phase-locked solution $\phi_{i}=\Omega t+a_{i}$. Then,
the coupling weights $\kappa_{ij}$ have to satisfy the linear system 
\begin{align}
\dot{\kappa}_{ij} & =-\epsilon\kappa_{ij}-\epsilon\sin(a_{i}-a_{j}+\beta).\label{eq:KappaDGL}
\end{align}
This system has the unique solution $\kappa_{ij}=-\sin(a_{i}-a_{j}+\beta)$ which is constant, bounded on $\mathbb{R}$, and asymptotically stable as $t\to\infty$. Therefore, the specific network connectivity $\kappa_{ij}=-\sin(a_{i}-a_{j}+\beta)$ is associated with a given phase-shift $\mathbf{a}$.

\section{Multi-cluster solutions\label{sec:Multi-cluster-states}}

As previous numerical studies \cite{KAS17} found out, the phase-locked
solutions described in section~\ref{sec:One-cluster} can act as building
blocks for (hierarchical) multi-cluster solutions. 
\begin{definition}
\label{def:PhaseOscStates_MulCl} Phase oscillators $\phi_{i}(t)$
form a \textbf{multi-cluster} if they can be separated
into $M$ groups of phase-locked oscillators (clusters), i.e., for all $\mu\in\{1,\dots,M\}$
the phase oscillators $\phi_{i,\mu}$, $i\in\{1,\dots,N_{\mu}\}$,
from each group $\mu$ satisfy $\phi_{i,\mu}(t)=s_{\mu}(t)+a_{i,\mu}$. 
\end{definition}
The appearance of multi-clusters
is interesting and nontrivial, since such solutions, in contrast to
one-clusters, are no more relative equilibria of (\ref{eq:PhiDGL_general})\textendash (\ref{eq:KappaDGL_general}),
but are periodic or quasi-periodic solutions, which appear due to
the special structure of the equation and adaptive nature of the coupling. The oscillators within one cluster posses a synchronized
temporal dynamics with possible phase lags. In a multi-cluster, the coupling matrix $\kappa$
can be divided into different blocks according to the division by
clusters: $k_{ij,\mu\nu}$ will refer to the coupling weight between
the $i$-th oscillator of cluster $\mu$ to the $j$-th oscillator
of cluster $\nu$.

Depending on the type of the constituting individual clusters, different
multi-clusters may be observed: splay, antipodal and mixed
type. The following sections describe each particular multi-cluster
solution.

\subsection{Multi-cluster solutions of splay type\label{subsec:MC_RotWave}}

The multi-cluster solutions of splay type are composed by the
clusters from the continuous family $S$ of phase-locked solutions with
$R_{2}(\mathbf{a})=0$ and different frequencies. The following proposition
describes them.
\begin{proposition}
\label{prop:MCSol_RotWave}System (\ref{eq:PhiDGL_general})\textendash (\ref{eq:KappaDGL_general})
possesses the multi-cluster solution 
\begin{align}
\phi_{i,\mu}(t) & =\Omega_{\mu}t+a_{i,\mu}, &  & \begin{split} i=1,\dots,N_{\mu}\\ \mu =1,\dots,M\end{split} \label{eq:pRWMC}\\
\kappa_{ij,\mu\nu}(t) & =-\rho_{\mu\nu}\sin(\Delta\Omega_{\mu\nu}t+a_{i,\mu}-a_{j,\nu}+\beta-\psi_{\mu\nu}), &  & \begin{split} j=1,\dots,N_{\nu}\\ \nu =1,\dots,M\end{split}\label{eq:kRWMC}
\end{align}
with pairwise different frequencies $\Omega_{\mu}$, $\Delta\Omega_{\mu\nu}:=\Omega_{\mu}-\Omega_{\nu}$,
$\rho_{\mu\nu}:=\left(1+\left(\Delta\Omega_{\mu\nu}/\epsilon\right)^{2}\right)^{-\frac12}$
and $\psi_{\mu\nu}:=\arctan(\Delta\Omega_{\mu\nu}/\epsilon)$ if and
only if 

$R_{2}(\mathbf{a}_{\mu})=0$ for all $\mu=1,\dots,M$ and the frequencies
$(\Omega_{1},\dots,\Omega_{M})$ solve the following system of equations
\begin{align}
\Omega_{\mu}=\frac{1}{2N}\sum_{\nu=1}^{M}\rho_{\mu\nu}N_{\nu}\cos(\alpha-\beta+\psi_{\mu\nu}), &  & \mu=1,\dots,M.\label{eq:MCRotWave_Omega}
\end{align}
Note that $\rho_{\mu\nu}$ and $\psi_{\mu\nu}$ are functions of $\Omega_{\mu}-\Omega_{\nu}$.
\end{proposition}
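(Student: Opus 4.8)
The plan is to insert the ansatz \eqref{eq:pRWMC}--\eqref{eq:kRWMC} into the governing equations \eqref{eq:PhiDGL_general}--\eqref{eq:KappaDGL_general} and then, in the resulting right-hand side of the phase equation, to split the time-independent part from the oscillating one. I would begin with the coupling equation, which is the easy half. Substituting $\phi_{i,\mu}-\phi_{j,\nu}=\Delta\Omega_{\mu\nu}t+a_{i,\mu}-a_{j,\nu}$ into \eqref{eq:KappaDGL_general} yields, for each pair, the scalar linear equation $\dot\kappa_{ij,\mu\nu}=-\epsilon\kappa_{ij,\mu\nu}-\epsilon\sin(\Delta\Omega_{\mu\nu}t+a_{i,\mu}-a_{j,\nu}+\beta)$, i.e.\ \eqref{eq:KappaDGL} with a single-harmonic forcing at frequency $\Delta\Omega_{\mu\nu}$. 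Its unique bounded solution on $\mathbb{R}$ is again a single harmonic, and matching the $\sin$ and $\cos$ coefficients of the ansatz $-\rho_{\mu\nu}\sin(\Delta\Omega_{\mu\nu}t+a_{i,\mu}-a_{j,\nu}+\beta-\psi_{\mu\nu})$ reproduces exactly $\rho_{\mu\nu}=\cos\psi_{\mu\nu}$ and $\tan\psi_{\mu\nu}=\Delta\Omega_{\mu\nu}/\epsilon$, that is the stated $\rho_{\mu\nu}$ and $\psi_{\mu\nu}$. Hence the coupling block solves \eqref{eq:KappaDGL_general} for arbitrary phases and frequencies, and all remaining content lives in the phase equation.

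Next I would substitute both parts of the ansatz into \eqref{eq:PhiDGL_general}. Every summand $\kappa_{ij,\mu\nu}\sin(\phi_{i,\mu}-\phi_{j,\nu}+\alpha)$ is a product of two sines with the common argument $\Delta\Omega_{\mu\nu}t+a_{i,\mu}-a_{j,\nu}$; the product-to-sum identity turns it into the constant $-\tfrac12\rho_{\mu\nu}\cos(\alpha-\beta+\psi_{\mu\nu})$ plus a second harmonic proportional to $\cos(2\Delta\Omega_{\mu\nu}t+2a_{i,\mu}-2a_{j,\nu}+\alpha+\beta-\psi_{\mu\nu})$. Summing the harmonic over $j=1,\dots,N_\nu$ and using $\sum_{j}e^{-2\mathrm{i}a_{j,\nu}}=N_\nu\overline{Z_2(\mathbf{a}_\nu)}$ shows that the whole oscillating contribution of cluster $\nu$ carries the factor $R_2(\mathbf{a}_\nu)$. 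Therefore $\dot\phi_{i,\mu}$ equals the $i$- and $t$-independent quantity $\tfrac{1}{2N}\sum_{\nu}\rho_{\mu\nu}N_\nu\cos(\alpha-\beta+\psi_{\mu\nu})$, which is precisely the right-hand side of \eqref{eq:MCRotWave_Omega}, plus a remainder whose $\nu$th term is proportional to $R_2(\mathbf{a}_\nu)$ and oscillates at temporal frequency $2\Delta\Omega_{\mu\nu}$; the self-term $\nu=\mu$ is time-independent (as $\Delta\Omega_{\mu\mu}=0$, $\rho_{\mu\mu}=1$, $\psi_{\mu\mu}=0$) but depends on $i$.

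Both implications then follow from requiring $\dot\phi_{i,\mu}=\Omega_\mu$ to be constant in $t$ and $i$. For the ``if'' direction, assuming $R_2(\mathbf{a}_\nu)=0$ for all $\nu$ annihilates the entire remainder, so $\dot\phi_{i,\mu}$ collapses to the constant above and equals $\Omega_\mu$ exactly when the self-consistency system \eqref{eq:MCRotWave_Omega} holds (recall that $\rho_{\mu\nu},\psi_{\mu\nu}$ themselves depend on $\Omega_\mu-\Omega_\nu$). For the ``only if'' direction, I would note that any cluster $\nu$ with $R_2(\mathbf{a}_\nu)\ne0$ injects into the phase equation of every other cluster $\mu'$ a genuinely oscillating term, of frequency $2(\Omega_{\mu'}-\Omega_\nu)\ne0$, which cannot be balanced by the constant $\Omega_{\mu'}$; since $M\ge2$, this forces $R_2(\mathbf{a}_\nu)=0$ for each $\nu$, whereupon the self-terms vanish and the time-average reduces the phase equation to \eqref{eq:MCRotWave_Omega}.

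The main obstacle is making the vanishing of the oscillating remainder rigorous in the ``only if'' direction, since two clusters may force a given cluster $\mu$ at the same temporal frequency: a resonance $2\Omega_\mu=\Omega_{\nu_1}+\Omega_{\nu_2}$ gives $2\Delta\Omega_{\mu\nu_1}=-2\Delta\Omega_{\mu\nu_2}$, so linear independence of $\{e^{\mathrm{i}\omega t}\}$ does not by itself separate the two contributions. The remedy is to use the spatial dependence too: in the equation for oscillator $(i,\mu)$ the two resonant terms carry the conjugate factors $e^{+2\mathrm{i}a_{i,\mu}}$ and $e^{-2\mathrm{i}a_{i,\mu}}$, so evaluating the resulting identity at two indices $i$ with distinct $e^{2\mathrm{i}a_{i,\mu}}$ produces an invertible $2\times2$ system that forces $R_2(\mathbf{a}_{\nu_1})=R_2(\mathbf{a}_{\nu_2})=0$ separately.
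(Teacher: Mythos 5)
Your coupling-equation step, your product-to-sum computation, and your ``if'' direction are all correct and essentially identical to the paper's proof: the right-hand side of \eqref{eq:PhiDGL_general} splits into the constant $\frac{1}{2N}\sum_{\nu}\rho_{\mu\nu}N_{\nu}\cos(\alpha-\beta+\psi_{\mu\nu})$ plus oscillating terms whose $\nu$th amplitude carries the factor $R_{2}(\mathbf{a}_{\nu})$, and you correctly identify resonances $2\Omega_{\mu}=\Omega_{\nu_{1}}+\Omega_{\nu_{2}}$ as the only obstruction to reading off $R_{2}(\mathbf{a}_{\nu})=0$ term by term in the ``only if'' direction. The gap is in your remedy for that obstruction. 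The identity you need to exploit has the form $Ae^{2\mathrm{i}a_{i,\mu}}+Be^{-2\mathrm{i}a_{i,\mu}}=0$ for all $i$, with $A,B$ independent of $i$ and proportional to $R_{2}(\mathbf{a}_{\nu_{1}})$, $R_{2}(\mathbf{a}_{\nu_{2}})$. Evaluating at two indices $i,i'$ gives a $2\times2$ system whose determinant is $2\mathrm{i}\sin\bigl(2(a_{i,\mu}-a_{i',\mu})\bigr)$. First, your invertibility criterion is wrong: for $a_{i,\mu}-a_{i',\mu}=\pi/2$ the factors $e^{2\mathrm{i}a_{i,\mu}}$ and $e^{2\mathrm{i}a_{i',\mu}}$ are distinct yet the determinant vanishes; what you need is $e^{4\mathrm{i}a_{i,\mu}}\ne e^{4\mathrm{i}a_{i',\mu}}$. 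Second, and more seriously, such a pair of indices need not exist. In the ``only if'' direction you may assume nothing about the receiving cluster $\mu$: it could be antipodal (exactly the configuration you are trying to exclude), or it could be a splay cluster supported on $\{0,\pi/2,\pi,3\pi/2\}$ (which even satisfies $R_{2}(\mathbf{a}_{\mu})=0$). In both cases every difference $a_{i,\mu}-a_{i',\mu}$ is a multiple of $\pi/2$, all the $2\times2$ determinants vanish, and the identity collapses to the single equation $A+B=0$, which does not force $A=B=0$. So cluster $\mu$'s equation alone cannot rule out a resonant cancellation, and your argument stalls there.

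The paper closes this hole with a global, purely combinatorial argument that never has to decide whether a given resonant pair can actually cancel. Let $A$ be the set of clusters with $Z_{2}(\mathbf{a}_{\nu})\ne0$. In the equation of any cluster $\mu$, the oscillating terms come from $\nu\in A\setminus\{\mu\}$ at pairwise distinct nonzero frequencies $2\Delta\Omega_{\mu\nu}$, and a term can only be cancelled by a partner at the opposite frequency; hence $A\setminus\{\mu\}$ must decompose into such pairs, so $|A\setminus\{\mu\}|$ is even \emph{for every} $\mu$. This forces $A=\emptyset$ or $A=\{1,\dots,M\}$. In the latter case one takes $\mu$ with $\Omega_{\mu}$ minimal: a partner $\nu'$ of any $\nu$ would have to satisfy $\Omega_{\nu'}=2\Omega_{\mu}-\Omega_{\nu}<\Omega_{\mu}$, contradicting minimality; hence (for $M\ge2$) $A=\emptyset$, i.e.\ $R_{2}(\mathbf{a}_{\nu})=0$ for all $\nu$. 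If you wish to keep your local spatial argument, it can only serve as a shortcut in the nondegenerate case; to make the proof complete you must supplement it with a counting argument of this global type to cover the degenerate phase configurations.
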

Similarly to the one-cluster case, multi-cluster solutions of splay
type give rise to a $(N-2M-1)$-dimensional manifold of solutions
\begin{multline*}
S_{M}:=\Bigl\{(\phi_{i,\mu},\kappa_{ij,\mu\nu}):(\phi_{i,\mu},\kappa_{ij,\mu\nu})\text{ as in \eqref{eq:pRWMC}-\eqref{eq:kRWMC},}\\ R_{2}(\mathbf{a}_{\mu})=0\text{ for all }\mu=1,\dots,M\Bigr\}.
\end{multline*}

Let us remark that the collective frequencies~\eqref{eq:MCRotWave_Omega} are only defined up to a constant due to the phase-shift symmetry of system (\ref{eq:PhiDGL_general})\textendash (\ref{eq:KappaDGL_general}) while the frequency difference is unaffected. An example of a 3-cluster solution of splay type is shown in Figure~\ref{fig:3Cl_RotWave}.
The solution was obtained by integrating system (\ref{eq:PhiDGL_general})\textendash (\ref{eq:KappaDGL_general})
numerically starting from random initial conditions. After sufficiently
long transient time, the order of the oscillators is given by first
sorting the oscillators with respect to their average frequencies.
After that the oscillators with the same frequency are sorted by their
phases. It can be seen from the pictures that the sizes of the three
clusters $N_{\mu}$ ($\mu=1,2,3$) possesses a hierarchical structure,
\textit{i.e.}, $N_{3}< N_{2}< N_{1}$. The coupling strengths between oscillators of the same cluster
vary in a larger range than between those of different clusters. The
coupling between different clusters scales with $\epsilon$ since
$\rho_{\mu\nu}=\epsilon/\Delta\Omega_{\mu\nu}+\mathcal{O}\left(\left(\epsilon/\Delta\Omega_{\mu\nu}\right)^{3}\right)$
and is thus close to zero (uncoupled). The oscillators of the same
cluster evolve in time with the same frequencies $\dot{\phi}_{i,\mu}=\Omega_{\mu}$,
$i=1,\dots,N_{\mu}$. 
\begin{figure}
	\begin{center}
		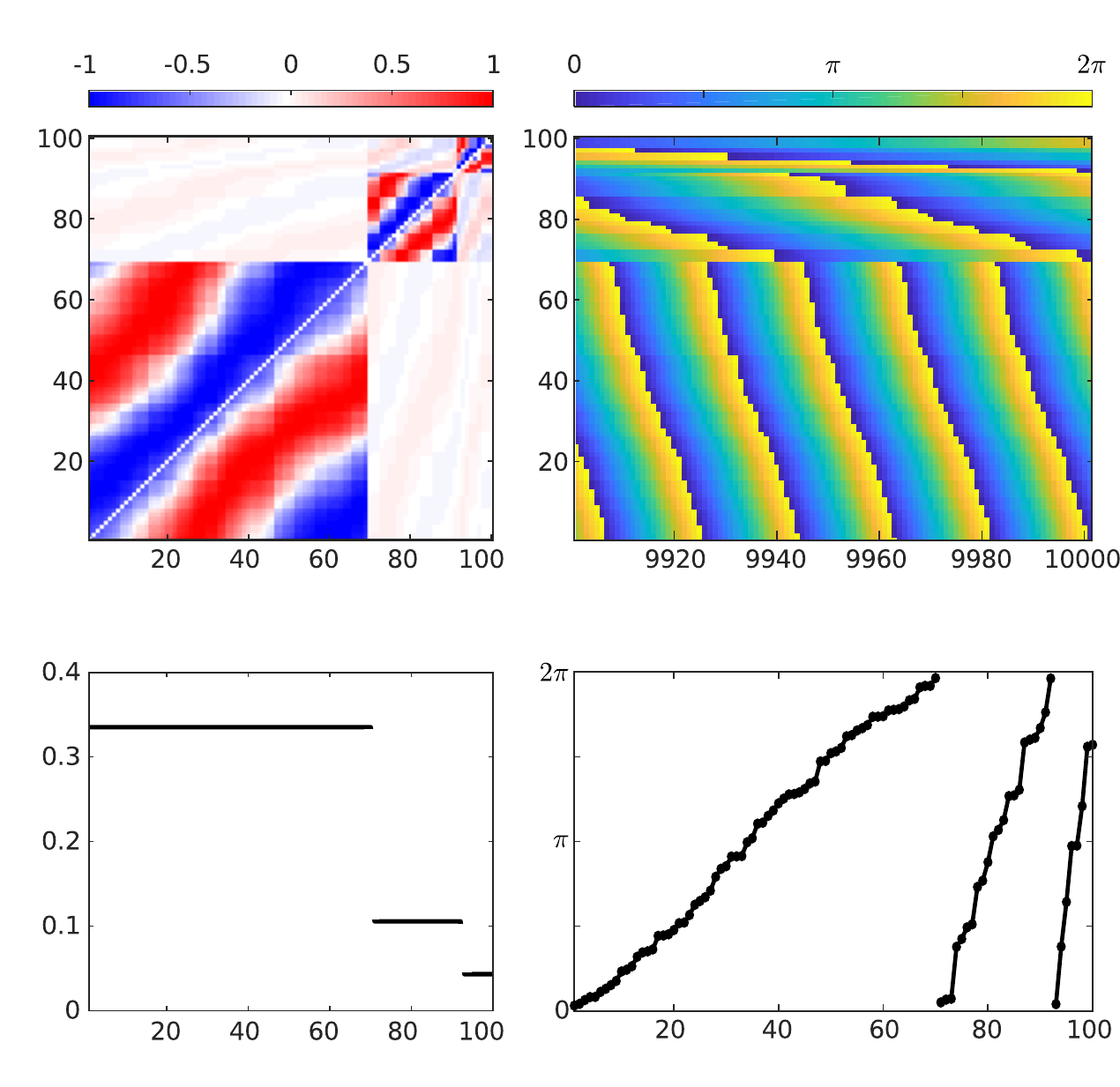
	\end{center}
\caption{Three-cluster of splay type. (a) Coupling weights at $t=10000$
showing three clusters; (b) Distribution of the phases within each
cluster; space-time raster plot; (c) Average frequency of oscillators;
each plateau corresponds to one cluster; (d) Oscillator phases $\phi_{i}(t)$
at fixed time $t=10000$. Parameter values: $\epsilon=0.01$, $\alpha=0.3\pi$,
$\beta=0.23\pi$, and $N=100$. \label{fig:3Cl_RotWave}}
\end{figure}

Let us consider the case of two-clusters in more details. Let $\phi_{i,\mu}$
$\left(\mu=1,2\right)$ with $N_{1}$ and $N_{2}$ being the numbers
of oscillators in cluster $1$ and $2$, respectively. The following
result follows from the Proposition~\ref{prop:MCSol_RotWave}.
\begin{corollary}
Suppose $R_{2}(\mathbf{a}_{\mu})=0$ for $\mu=1,2$, then
\begin{align*}
\phi_{i,1} & =\Omega_{1}t+a_{i,1}, & i=1,\dots,N_{1}\\
\phi_{i,2} & =\Omega_{2}t+a_{i,2}, & i=1,\dots,N_{2}\\
\kappa_{ij,\mu\mu} & =-\sin(a_{i,\mu}-a_{j,\mu}+\beta), & \mu=1,2\\
\kappa_{ij,\mu\nu} & =-\rho_{\mu\nu}\sin(\Delta\Omega_{\mu\nu}t+a_{i,\mu}-a_{j,\nu}+\beta-\psi_{\mu\nu}), & \mu,\nu=1,2;\mu\ne\nu
\end{align*}
is a two-cluster solution of system (\ref{eq:PhiDGL_general})\textendash (\ref{eq:KappaDGL_general})
with
\begin{multline}
\left(\Delta\Omega_{12}\right)_{1,2}=\frac{1}{2}\left(n_{1}-\frac{1}{2}\right)\cos(\alpha-\beta)\\
\pm\frac{1}{2}\sqrt{\left(n_{1}-\frac{1}{2}\right)^{2}\cos^{2}(\alpha-\beta)-2\epsilon(2\epsilon+\sin(\alpha-\beta))},\label{eq:2Cluster_OmegaDiff}
\end{multline}
\begin{equation}
\Omega_{\mu}=\frac{1}{2}\left(n_{\text{\ensuremath{\mu}}}\cos(\alpha-\beta)+\rho_{\mu\nu}n_{\nu}\cos(\alpha-\beta+\psi_{\mu\nu})\right),(\mu,\nu=1,2;\mu\ne\nu),\label{eq:2ClusterOm}
\end{equation}
where $n_{\mu}=N_{\mu}/N$ and $\psi_{\mu\nu}$, $\rho_{\mu\nu}$
as in Proposition~\ref{prop:MCSol_RotWave}. 
\end{corollary}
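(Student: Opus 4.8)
The plan is to specialize Proposition~\ref{prop:MCSol_RotWave} to the case $M=2$ and then solve the resulting self-consistency equations explicitly. First I would invoke Proposition~\ref{prop:MCSol_RotWave} directly: for two clusters the stated forms of $\phi_{i,\mu}$ and $\kappa_{ij,\mu\nu}$ are exactly~\eqref{eq:pRWMC}--\eqref{eq:kRWMC}, and the diagonal blocks simplify because $\Delta\Omega_{\mu\mu}=0$ forces $\rho_{\mu\mu}=1$ and $\psi_{\mu\mu}=0$, yielding $\kappa_{ij,\mu\mu}=-\sin(a_{i,\mu}-a_{j,\mu}+\beta)$. It then remains only to solve the frequency conditions~\eqref{eq:MCRotWave_Omega}. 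Splitting the sum over $\nu$ into the diagonal ($\nu=\mu$) and off-diagonal ($\nu\ne\mu$) contributions and writing $n_\mu=N_\mu/N$ immediately gives~\eqref{eq:2ClusterOm}.

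The core of the work is to extract $\Delta\Omega_{12}$ in closed form. I would first record the symmetries induced by the definitions of $\rho_{\mu\nu}$ and $\psi_{\mu\nu}$: since $\rho_{\mu\nu}$ depends only on $(\Delta\Omega_{\mu\nu})^{2}$ we have $\rho_{12}=\rho_{21}$, and since $\psi_{\mu\nu}=\arctan(\Delta\Omega_{\mu\nu}/\epsilon)$ is odd in $\Delta\Omega_{\mu\nu}$ we have $\psi_{21}=-\psi_{12}$. Abbreviating $\rho:=\rho_{12}$, $\psi:=\psi_{12}$ and $x:=\Delta\Omega_{12}$, subtracting the two equations in~\eqref{eq:2ClusterOm} and using $n_1+n_2=1$ together with the addition formulas for $\cos(\alpha-\beta\pm\psi)$ reduces the expression to
\begin{equation*}
x=\frac{1}{2}\left[(2n_1-1)\cos(\alpha-\beta)\,(1-\rho\cos\psi)-\rho\sin\psi\,\sin(\alpha-\beta)\right].
\end{equation*}

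The step that does the real simplifying work is the identity $\rho=\cos\psi$, which follows at once from $\cos(\arctan u)=(1+u^{2})^{-1/2}$ with $u=x/\epsilon$. Consequently $1-\rho\cos\psi=1-\rho^{2}=x^{2}/(\epsilon^{2}+x^{2})$ and $\rho\sin\psi=\cos\psi\sin\psi=\epsilon x/(\epsilon^{2}+x^{2})$. Substituting these into the displayed relation and clearing the denominator $\epsilon^{2}+x^{2}$ turns the transcendental self-consistency condition into a polynomial one. I would then divide by $x$, which is legitimate precisely because the two clusters have distinct frequencies ($x=\Delta\Omega_{12}\ne0$), arriving at the quadratic
\begin{equation*}
2x^{2}-(2n_1-1)\cos(\alpha-\beta)\,x+\epsilon\bigl(2\epsilon+\sin(\alpha-\beta)\bigr)=0.
\end{equation*}
Solving this quadratic and factoring a $2$ out of both the linear coefficient and the discriminant reproduces~\eqref{eq:2Cluster_OmegaDiff}. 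The main obstacle is thus not any single hard estimate but the bookkeeping: correctly tracking the sign convention $\psi_{21}=-\psi_{12}$ and the cancellation $\rho=\cos\psi$, since it is exactly this identity that collapses the apparently messy trigonometric combination into a clean quadratic, and a sign slip there would produce a spurious equation with no such closed-form root.
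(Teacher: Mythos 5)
Your proposal is correct and follows exactly the route the paper takes: specialize Proposition~\ref{prop:MCSol_RotWave} to $M=2$ (with $\rho_{\mu\mu}=1$, $\psi_{\mu\mu}=0$ giving the diagonal blocks and~\eqref{eq:2ClusterOm}), then solve the self-consistency equations~\eqref{eq:MCRotWave_Omega} explicitly for $\Delta\Omega_{12}$. The paper merely asserts that this system "can be solved explicitly"; your use of $\rho_{12}=\rho_{21}$, $\psi_{21}=-\psi_{12}$, and $\rho=\cos\psi$ to reduce it to the quadratic $2x^{2}-(2n_{1}-1)\cos(\alpha-\beta)\,x+\epsilon\bigl(2\epsilon+\sin(\alpha-\beta)\bigr)=0$ supplies precisely the omitted algebra, and its roots match~\eqref{eq:2Cluster_OmegaDiff}.
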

The explicit expressions for the frequencies $\Delta\Omega_{12}$,
$\Omega_{1,2}$ and other parameters of the solutions follow from
the system of equations (\ref{eq:MCRotWave_Omega}), which can be
solved explicitly leading to (\ref{eq:2Cluster_OmegaDiff})\textendash (\ref{eq:2ClusterOm})
for $M=2$. For any given relative cluster size $n_{\mu}$, equations~(\ref{eq:2Cluster_OmegaDiff})
and (\ref{eq:2ClusterOm}) provide either two, one, or no solutions
corresponding to the two-cluster solution. Hence, for each fixed set of
parameters, there may be up to $2(N-4)$ such solutions. 

Figure~ \ref{fig:2Cl_RotWave_OmegaSols} shows the frequency differences
$\Delta\Omega_{12}$ of these solutions as functions of parameter $\beta$
for different number of oscillations $N$ and adaptation parameters
$\epsilon$. Interestingly, the frequencies of the solutions depend only
on the difference $\alpha-\beta$, see (\ref{eq:2Cluster_OmegaDiff})\textendash (\ref{eq:2ClusterOm}).
By increasing the number of oscillators $N$ in the system, the number
of solutions increases accordingly. This can be seen from Fig.~\ref{fig:2Cl_RotWave_OmegaSols}(a\textendash b)
where we increase the number of oscillators from $N=20$ to $N=50$
with all other parameters fixed. The set of 2-cluster solutions is represented by all $\Delta\Omega_{12}(\beta)$ for a given parameter $\beta$. In accordance with (\ref{eq:2Cluster_OmegaDiff}) the number of solutions increases with increasing
$N$. The region of non-existence of the multi-cluster solutions corresponds
to the cases where the argument beneath the root in~(\ref{eq:2Cluster_OmegaDiff})
becomes negative. The size of the existence gap depends furthermore
on the choice of the time separation parameter $\epsilon$. This can
be seen by comparing Fig.~\ref{fig:2Cl_RotWave_OmegaSols}(b\textendash d)
where we vary the value for $\epsilon.$

\begin{figure}
	\begin{center}
		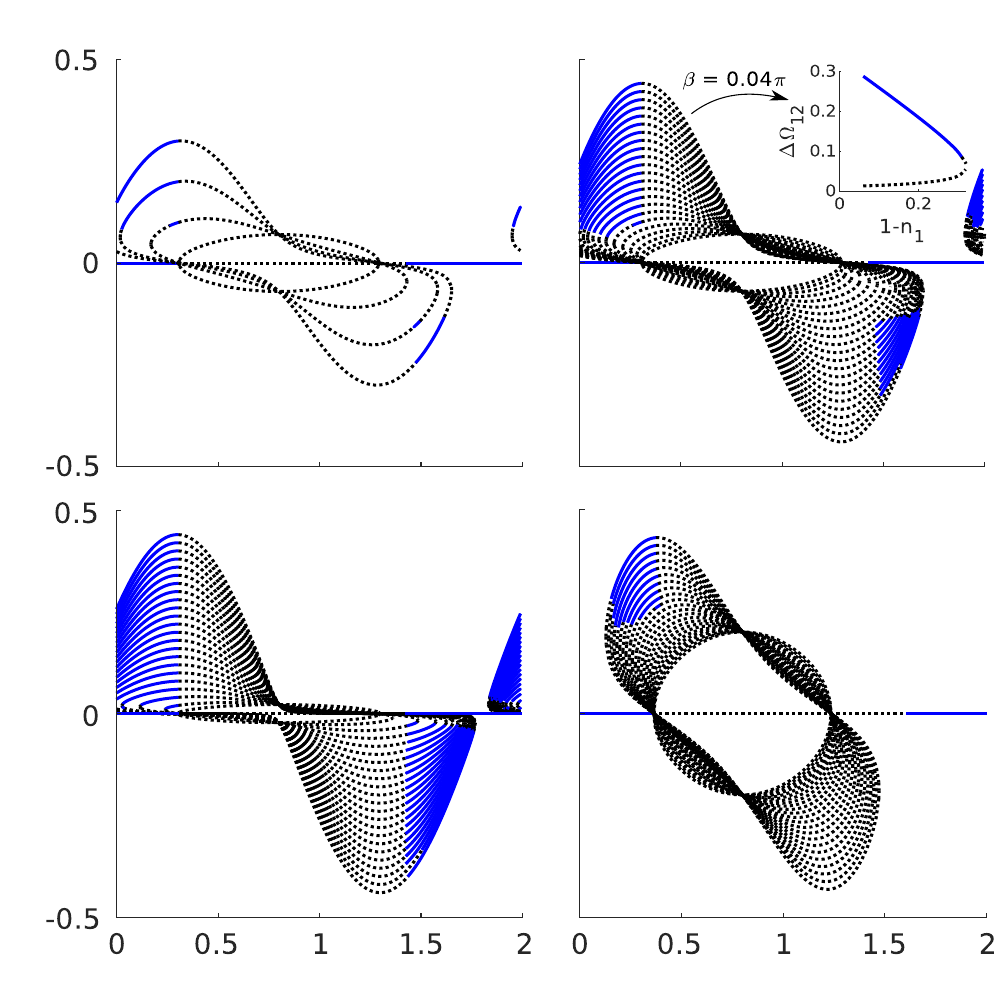
	\end{center}%
\caption{\label{fig:2Cl_RotWave_OmegaSols}The figures show all one- and two-cluster
solutions of splay type for the system (\ref{eq:PhiDGL_general})\textendash (\ref{eq:KappaDGL_general}).
For this, the frequency differences $\Delta\Omega_{12}$ are displayed
corresponding to the equations~(\ref{eq:OmegaPL}) and (\ref{eq:2Cluster_OmegaDiff}).
The dotted lines (black) indicate unstable solutions while the solid
lines (blue) indicate stable solutions. Here, every second solution is
plotted for the sake of visibility. Parameter values: (a) $N=20$,
$\epsilon=0.01$; (b) $N=50$, $\epsilon=0.01$; (c) $N=50$, $\epsilon=0.001$;
(d) $N=50$, $\epsilon=0.1$; $\alpha=0.3\pi$ is fixed for all
panels.}
\end{figure}

\subsection{Multi-cluster solutions of antipodal type\label{subsec:MC_AntiPhase}}

In the case when the oscillators are phase synchronized or in an anti-phase
relation within each cluster,\textit{ }the situation is different
to what was described before. Particularly, the linear growth of the
oscillator phases within each cluster is modulated by periodic or
quasi-periodic terms of order $\epsilon$. That is, the clusters
possess the form $\phi_{i,\mu}(t)=\Omega_{\mu}t+a_{i,\mu}+\epsilon p_{\mu}(t,\epsilon)$.
Here, we give important necessary conditions for the existence of
such solutions and their asymptotic expansion in $\epsilon$. Additionally, we provide numerical results showing these solutions. In particular, we present
a system of equations for the cluster frequencies $\Omega_{\mu}$. 
\begin{proposition}
\label{prop:MCSol_AP_FormalExpansion} Suppose $2a_{i,\mu}=a_{\mu}\,\text{mod\,\,}2\pi$
for all $\mu=1,\dots,M$ and $i=1,\dots,N_{\mu}$. If system (\ref{eq:PhiDGL_general})\textendash (\ref{eq:KappaDGL_general})
possesses antipodal multi-cluster solution $(\phi_{i,\mu},\kappa_{ij,\mu\nu})$
then its first asymptotic expansion in $\epsilon$ is given by
\begin{align*}
\phi_{i,\mu}^{(1)} & =\Omega^{(1)}_{\mu}t+a_{i,\mu}-\epsilon\sum_{\overset{\nu=1}{\nu\ne\mu}}^{M}\frac{n_{\nu}}{4\left(\Delta\Omega^{(1)}_{\mu\nu}\right)^{2}}\cos(2\Delta\Omega^{(1)}_{\mu\nu}t+a_{\mu}-a_{\nu}+\alpha+\beta),\\
\kappa_{ij,\mu\mu}^{(1)} & =-\sin(a_{i,\mu}-a_{j,\mu}+\beta),\\
\kappa_{ij,\mu\nu}^{(1)} & =\frac{\epsilon}{\Delta\Omega^{(1)}_{\mu\nu}}\cos(\Delta\Omega^{(1)}_{\mu\nu}t+a_{i,\mu}-a_{j,\nu}+\beta),\quad\mu\ne\nu
\end{align*}
 with the cluster frequencies $\Omega^{(1)}_{\mu}$ up to first order in
$\epsilon$ whenever the following implicit equation can be solved
\begin{align}
\Omega^{(1)}_{\mu}=\left(n_{\mu}\sin(\alpha)\sin(\beta)-\epsilon\sum_{\overset{\nu=1}{\nu\ne\mu}}^{M}\frac{n_{\nu}}{2\Delta\Omega^{(1)}_{\mu\nu}}\sin(\alpha-\beta)\right).\label{eq:MCAntiPhase_Omega}
\end{align}
 Here, $\mu=1,\dots,M,$ $i,j=1,\dots,N_{\mu}$ and $\Delta\Omega_{\mu\nu}:=\Omega^{(1)}_{\mu}-\Omega^{(1)}_{\nu}$.
\end{proposition}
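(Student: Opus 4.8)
The plan is to treat this as a two-timescale (slow--fast) perturbation problem in the small adaptation rate $\epsilon$ and to construct the first-order expansion by a Poincar\'e--Lindstedt-type argument in which the cluster frequencies are expanded and adjusted so as to cancel secular growth. Concretely, I would start from the form suggested in the text, $\phi_{i,\mu}(t)=\Omega_\mu t+a_{i,\mu}+\epsilon p_\mu(t)+\mathcal{O}(\epsilon^2)$ with $\Omega_\mu=\Omega^{(1)}_\mu+\mathcal{O}(\epsilon^2)$, where the correction $p_\mu$ is taken common to all oscillators of cluster $\mu$. That cluster-coherence is something to be \emph{justified}, not assumed, and is precisely where the antipodal hypothesis enters. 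The first substantive step is to solve the coupling equation \eqref{eq:KappaDGL_general}, which for frozen phases is the linear inhomogeneous system \eqref{eq:KappaDGL}. For intra-cluster pairs the difference $\phi_{i,\mu}-\phi_{j,\mu}=a_{i,\mu}-a_{j,\mu}$ is \emph{exactly} constant (the $\epsilon p_\mu$ terms cancel), so $\kappa_{ij,\mu\mu}$ relaxes to the equilibrium $-\sin(a_{i,\mu}-a_{j,\mu}+\beta)$. For inter-cluster pairs the forcing oscillates at frequency $\Delta\Omega_{\mu\nu}$; solving the scalar linear ODE $\dot\kappa+\epsilon\kappa=-\epsilon\sin(\Delta\Omega_{\mu\nu}t+\cdots)$ for its bounded post-transient response and expanding in $\epsilon$ yields $\kappa^{(1)}_{ij,\mu\nu}=(\epsilon/\Delta\Omega_{\mu\nu})\cos(\Delta\Omega_{\mu\nu}t+a_{i,\mu}-a_{j,\nu}+\beta)$, the in-phase part of the response being $\mathcal{O}(\epsilon^2)$ and hence discarded.

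Next I would insert these weights into the phase equation \eqref{eq:PhiDGL_general}, splitting the sum over $(j,\nu)$ into the diagonal block $\nu=\mu$ and the off-diagonal blocks $\nu\neq\mu$ and linearizing each product $\kappa\sin(\cdots)$ with product-to-sum identities. The decisive step is to invoke $2a_{i,\mu}=a_\mu\ (\mathrm{mod}\,2\pi)$: in every term the $j$-dependence enters only through $2(a_{i,\mu}-a_{j,\nu})$, which collapses to $a_\mu-a_\nu$ and is therefore independent of both $i$ and $j$. The diagonal block then contributes the constant $n_\mu\sin\alpha\sin\beta$ (using $\cos(\alpha-\beta)-\cos(\alpha+\beta)=2\sin\alpha\sin\beta$), while each off-diagonal block contributes, at order $\epsilon$, a constant piece proportional to $\sin(\alpha-\beta)/\Delta\Omega_{\mu\nu}$ together with an oscillating piece at frequency $2\Delta\Omega_{\mu\nu}$. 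Since the resulting right-hand side no longer depends on $i$, the correction $p_\mu$ is indeed common to the whole cluster, which retroactively validates the ansatz. This is exactly where the antipodal condition is indispensable: for a non-antipodal within-cluster arrangement the order-$\epsilon$ forcing would retain its $i$-dependence and the cluster could not remain phase-locked.

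The final step is the solvability (secular-removal) condition. Writing $\dot\phi_{i,\mu}=\Omega_\mu+\epsilon\dot p_\mu$, the constant part of the order-$\epsilon$ right-hand side must be absorbed into $\Omega^{(1)}_\mu$, for otherwise $p_\mu$ would grow linearly in $t$ and destroy the bounded phase-locked form; this balance is precisely the implicit frequency relation \eqref{eq:MCAntiPhase_Omega}. Integrating the remaining purely oscillating part once in $t$ then produces the bounded correction $\epsilon p_\mu$ and hence the $\cos(2\Delta\Omega^{(1)}_{\mu\nu}t+a_\mu-a_\nu+\alpha+\beta)/(\Delta\Omega^{(1)}_{\mu\nu})^2$ term in $\phi^{(1)}_{i,\mu}$.

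I expect the main obstacle to be the asymptotic bookkeeping of this last step rather than any single calculation. One must check that the contributions deliberately omitted at leading order — the $\epsilon p_\mu$ sitting inside the inter-cluster coupling phase, and the $\mathcal{O}(\epsilon^2)$ part of $\kappa$ — feed back into the phase equation only at order $\epsilon^2$, so that the truncation is a genuine asymptotic expansion and not merely a formal one; and one must identify the secular-removal condition with solvability of the self-consistent system \eqref{eq:MCAntiPhase_Omega}, in which $\Omega^{(1)}_\mu$ appears on both sides through $\Delta\Omega^{(1)}_{\mu\nu}$. This self-consistency forces the non-resonance requirement $\Delta\Omega_{\mu\nu}\neq0$ (distinct cluster frequencies), which is exactly what makes every $1/\Delta\Omega_{\mu\nu}$ factor well defined and is the content of the clause ``whenever the implicit equation can be solved.''
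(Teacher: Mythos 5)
Your construction is correct, and at the level of the first-order formulas it reproduces exactly the computation that appears in the paper's appendix; what differs is the organization of the argument. The paper never proves this proposition in isolation: it proves an all-orders statement (Proposition~\ref{thm:FormalExpansionThm}, covering mixed antipodal/splay clusters) by an inductive scheme and obtains the present proposition as the special case in which every cluster is antipodal. In that framework, your informal selection of the ``bounded post-transient response'' of $\dot\kappa_{ij}+\epsilon\kappa_{ij}=-\epsilon\sin(\phi_i-\phi_j+\beta)$ is made precise as the parametrization of the pullback attractor \eqref{eq:DefPullBackAttrkappa} together with Lemma~\ref{lem:KappaFormalExpansion}; your observation that the antipodal hypothesis collapses the $i,j$-dependence becomes a parity statement about Fourier modes valid at every order (Lemma~\ref{lem:Lemma2}); and your Poincar\'e--Lindstedt frequency correction is replaced by a multiple-time-scale ansatz $\Omega_\mu^{(r)}(\tau_0,\dots,\tau_r)$, which is equivalent at first order. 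What your route buys is economy and transparency: the coefficients and the implicit equation \eqref{eq:MCAntiPhase_Omega} fall out of a short computation with product-to-sum identities. What the paper's machinery buys is precisely the item you defer as the ``main obstacle'': the Fa\`a di Bruno control of the composed nonlinearities and the uniform remainder estimates that upgrade the formal construction to a genuine asymptotic expansion valid for $t\in O(1/\epsilon)$ --- in your write-up this verification is announced but not executed, and it is the only part of your plan that is not routine.

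Two further remarks. First, you need not ``retroactively validate'' that the modulation $p_\mu$ is common to the cluster: the hypothesis that $(\phi_{i,\mu},\kappa_{ij,\mu\nu})$ is a multi-cluster solution already entails $\phi_{i,\mu}(t)=s_\mu(t)+a_{i,\mu}$ by Definition~\ref{def:PhaseOscStates_MulCl}, so the common modulation is part of the assumption; the antipodal condition is needed only so that the order-$\epsilon$ forcing is $i$-independent, i.e.\ for consistency of that assumption. Second, carry the sign through your final integration: with $\kappa^{(1)}_{ij,\mu\nu}=\frac{\epsilon}{\Delta\Omega^{(1)}_{\mu\nu}}\cos(\Delta\Omega^{(1)}_{\mu\nu}t+a_{i,\mu}-a_{j,\nu}+\beta)$ one finds $\dot p_\mu=-\sum_{\nu\ne\mu}\frac{n_\nu}{2\Delta\Omega^{(1)}_{\mu\nu}}\sin(2\Delta\Omega^{(1)}_{\mu\nu}t+a_\mu-a_\nu+\alpha+\beta)$, hence $p_\mu=+\sum_{\nu\ne\mu}\frac{n_\nu}{4(\Delta\Omega^{(1)}_{\mu\nu})^2}\cos(2\Delta\Omega^{(1)}_{\mu\nu}t+a_\mu-a_\nu+\alpha+\beta)$; this agrees with the final formula in the paper's appendix proof but is opposite in sign to the modulation term as printed in the proposition, so your direct derivation is exactly the place where that discrepancy would be exposed and settled.
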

This first order perturbation reveals a nonlinear modulation $p_{\mu}$,
which is periodic or quasi-periodic with the frequencies $\Delta\Omega^{(1)}_{\mu\nu}$
given by the differences in the frequencies of the clusters. 

Figure~\ref{fig:3Cl_AntiPhase} shows the numerically obtained 3-cluster
solution of antipodal type. The dynamics of system (\ref{eq:PhiDGL_general})\textendash (\ref{eq:KappaDGL_general})
is shown after a sufficiently long transient so that it represents
dynamically stable solution (more on stability in section~\ref{sec:StabilityAnalysis}).
One can clearly observe three clusters in the coupling
matrix. Similarly to the previous multi-cluster cases, we first sort
the oscillators with respect to their average frequency and subsequently
by their phases. In contrast to the splayed distribution
of the phases described in section~\ref{subsec:MC_RotWave}, the oscillators
within the clusters additionally form two groups, in which the phases
differ by $\pi$.
\begin{figure}
	\begin{center}
		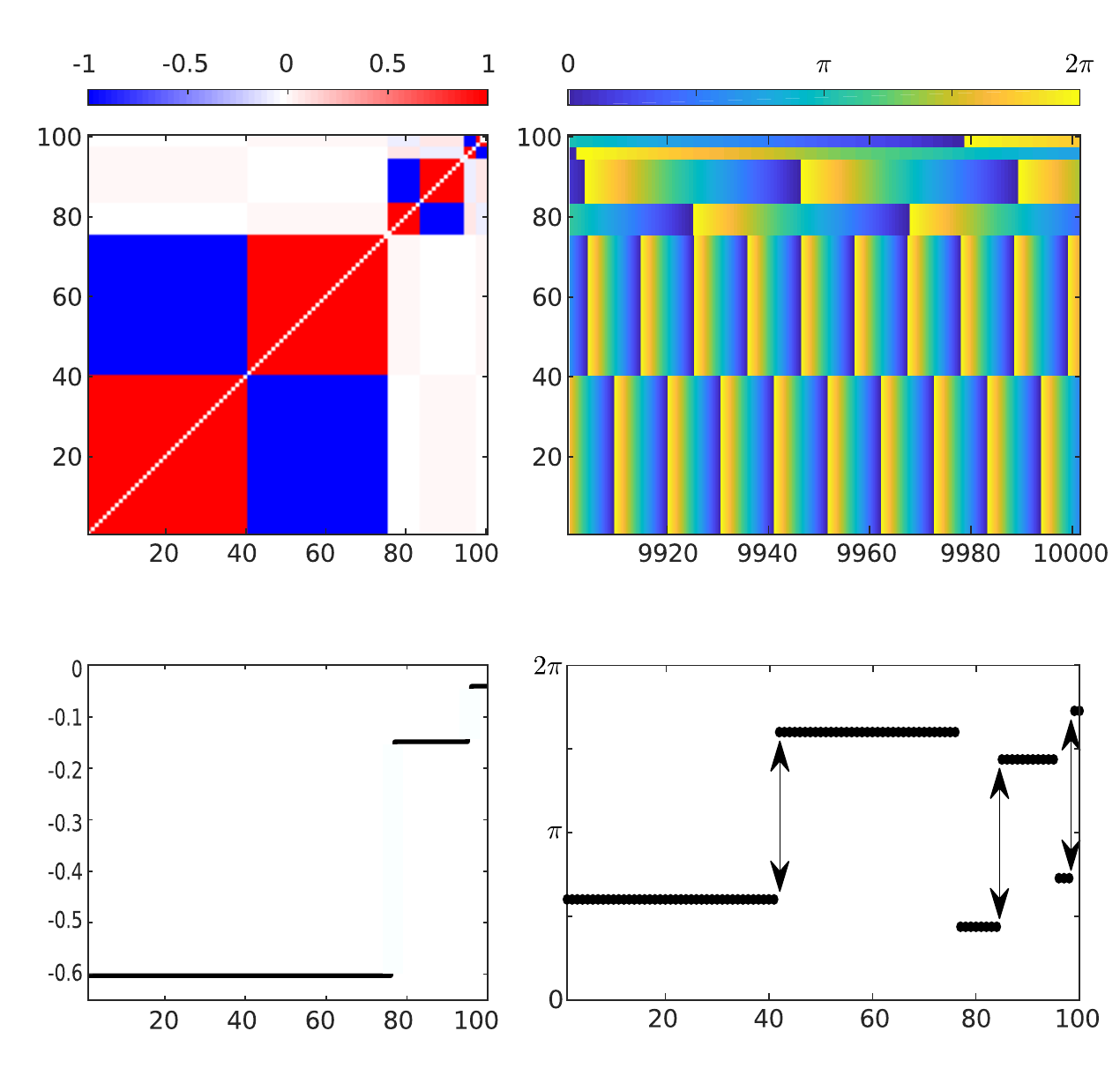
	\end{center}
\caption{Three-cluster of antipodal type. (a) Coupling weights at $t=10000$
showing three clusters; (b) Distribution of the phases within each
cluster; space-time raster plot; (c) Average frequency of oscillators;
each plateau corresponds to one cluster; (d) Oscillator phases $\phi_{i}(t)$
at fixed time $t=10000$. Parameter values: $\epsilon=0.01$, $\alpha=0.3\pi$,
$\beta=-0.53\pi$, and $N=100$. \label{fig:3Cl_AntiPhase}}
\end{figure}

In order to observe the modulation of the cluster frequencies, time series
for three representative oscillators from each cluster are shown in
Fig.~\ref{fig:3Cl_AntiPhase_Fourier_analysis}(a). The averaged linear growth
of the phases due to $\langle\Omega_{\mu}\rangle t$ has been subtracted to show
the modulation. Such small but non-vanishing oscillations do not exist in the case of splay type multi-clusters. In addition, the black dashed lines show the modulation given by Proposition~\ref{prop:MCSol_AP_FormalExpansion} confirming that the asymptotic expansion describes the whole temporal behaviour very well. Furthermore, Proposition~\ref{prop:MCSol_AP_FormalExpansion}
implies that the amplitudes of the modulations are proportional to
$n_{\nu}/(\Delta\Omega^{(1)}_{\mu\nu})^{2}$. Thus, if the difference in the
frequencies is high the amplitude is small and vice versa. This relation
is also reflected by the power spectrum, see Fig.~\ref{fig:3Cl_AntiPhase_Fourier_analysis}(b).
Figure~\ref{fig:3Cl_AntiPhase_Fourier_analysis}(b) confirms that
the frequencies of the modulation oscillations correspond to the differences
of the average frequencies.
\begin{figure}
	\begin{center}
		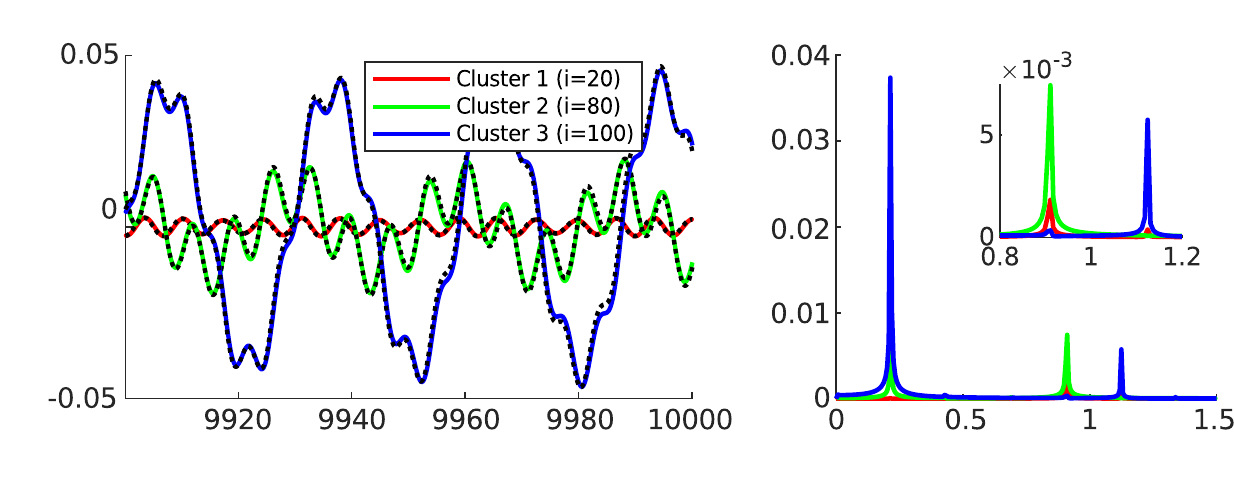
	\end{center}%
\caption{For 3-cluster solution from Fig.\ \ref{fig:3Cl_AntiPhase}, panel (a)
shows time series of an oscillator from one of the clusters after
subtracting the average linear growth $\phi_{\mu,i}(t)-\langle\Omega_{\mu}\rangle t$. The black dashed lines show the corresponding analytic results from the asymptotic expansion in Proposition~\ref{prop:MCSol_AP_FormalExpansion}.
(b) Power spectrum of the time series given in (a).\label{fig:3Cl_AntiPhase_Fourier_analysis}}
\end{figure}

Let us consider the case of two clusters in more details. Let $N_{1}$
and $N_{2}=N-N_{1}$ be the numbers of oscillators in group $1$ and
$2$, respectively. The following result follows from Proposition~\ref{prop:MCSol_AP_FormalExpansion}.
\begin{corollary}
\label{cor:2Cl_AP_FormalExpan}Suppose $2a_{i,\mu}=a_{\mu}$
for all $\mu=1,2$ and $i=1,\dots,N_{\mu}$. If system (\ref{eq:PhiDGL_general})\textendash (\ref{eq:KappaDGL_general})
possesses an antipodal multi-cluster solution $(\phi_{i,\mu},\kappa_{ij,\mu\nu})$
then its first order asymptotic expansion in $\epsilon$ is given
by 
\begin{align*}
\phi_{i,1}^{(1)} & =\Omega^{(1)}_{1}t+a_{i,1}-\epsilon\frac{n_{2}}{4\left(\Delta\Omega^{(1)}_{12}\right)^{2}}\cos(2\Delta\Omega^{(1)}_{12}t+a_{1}-a_{2}+\alpha+\beta),\\
\phi_{i,2}^{(1)} & =\Omega^{(1)}_{2}t+a_{i,2}-\epsilon\frac{n_{1}}{4\left(\Delta\Omega^{(1)}_{12}\right)^{2}}\cos(2\Delta\Omega^{(1)}_{12}t+a_{1}-a_{2}-\alpha-\beta),\\
\kappa^{(1)}_{ij,\mu\mu} & =-\sin(a_{i,\mu}-a_{j,\mu}+\beta),\\
\kappa^{(1)}_{ij,\mu\nu} & =\frac{\epsilon}{\Delta\Omega^{(1)}_{\mu\nu}}\cos(\Delta\Omega^{(1)}_{\mu\nu}t+a_{i,\mu}-a_{j,\nu}+\beta),
\end{align*}
with
\begin{align*}
\Omega^{(1)}_{\mu}=\left(n_{\mu}\sin(\alpha)\sin(\beta)-\epsilon\frac{n_{\nu}}{2\Delta\Omega^{(1)}_{\mu\nu}}\sin(\alpha-\beta)\right) 
\end{align*}
for $\mu=1,2$, $\nu\ne\mu$, $i,j=1,\dots,N_{\mu}$, $\Delta\Omega^{(1)}_{\mu\nu}:=\Omega^{(1)}_{\mu}-\Omega^{(1)}_{\nu}$,
\begin{align}
\left(\Delta\Omega^{(1)}_{12}\right)_{1,2} & =\left(n_{1}-\frac{1}{2}\right)\sin(\alpha)\sin(\beta)\pm\sqrt{\left(n_{1}-\frac{1}{2}\right)^{2}\sin^{2}\alpha\sin^{2}\beta-\frac{\epsilon}{2}\sin(\alpha-\beta)}.\label{eq:2ClusterAntiPhase_OmegaDiff}
\end{align}
\end{corollary}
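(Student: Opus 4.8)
The plan is to obtain Corollary~\ref{cor:2Cl_AP_FormalExpan} as a direct specialization of Proposition~\ref{prop:MCSol_AP_FormalExpansion} to the case $M=2$, followed by an explicit solution of the resulting frequency relation. No new dynamical input is needed: everything follows from the statement of Proposition~\ref{prop:MCSol_AP_FormalExpansion}, elementary algebra, and the normalization $n_1+n_2=1$ coming from $N_1+N_2=N$.

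First I would transcribe the asymptotic expansions of $\phi_{i,\mu}^{(1)}$, $\kappa_{ij,\mu\mu}^{(1)}$ and $\kappa_{ij,\mu\nu}^{(1)}$ from Proposition~\ref{prop:MCSol_AP_FormalExpansion} for $\mu\in\{1,2\}$. For $M=2$ each sum over $\nu\ne\mu$ collapses to a single term, so the $\mu=1$ lines are immediate. For $\mu=2$ one uses $\Delta\Omega^{(1)}_{21}=-\Delta\Omega^{(1)}_{12}$, whence $(\Delta\Omega^{(1)}_{21})^{2}=(\Delta\Omega^{(1)}_{12})^{2}$, together with the evenness of the cosine to rewrite the modulation argument $2\Delta\Omega^{(1)}_{21}t+a_{2}-a_{1}+\alpha+\beta$ as $2\Delta\Omega^{(1)}_{12}t+a_{1}-a_{2}-\alpha-\beta$; this reproduces exactly the $\phi_{i,2}^{(1)}$ line of the corollary, and the coupling blocks carry over verbatim.

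It remains to derive the frequency difference. The key step is to subtract the two instances of~\eqref{eq:MCAntiPhase_Omega} for $\mu=1$ and $\mu=2$. Writing $D:=\Delta\Omega^{(1)}_{12}=\Omega^{(1)}_{1}-\Omega^{(1)}_{2}$ and using $\Delta\Omega^{(1)}_{21}=-D$ together with $n_{1}+n_{2}=1$, the two cross terms combine into
\begin{equation*}
D=(n_{1}-n_{2})\sin\alpha\sin\beta-\frac{\epsilon}{2D}\sin(\alpha-\beta).
\end{equation*}
Substituting $n_{1}-n_{2}=2\bigl(n_{1}-\tfrac12\bigr)$ and clearing the denominator (legitimate since distinct clusters have $D\ne0$ by hypothesis) yields the quadratic
\begin{equation*}
D^{2}-2\Bigl(n_{1}-\tfrac12\Bigr)\sin\alpha\sin\beta\,D+\frac{\epsilon}{2}\sin(\alpha-\beta)=0,
\end{equation*}
whose two roots are precisely~\eqref{eq:2ClusterAntiPhase_OmegaDiff}. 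Reinserting either root into~\eqref{eq:MCAntiPhase_Omega} then returns the individual frequencies $\Omega^{(1)}_{1},\Omega^{(1)}_{2}$ stated in the corollary.

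I expect no genuine obstacle: the whole argument is a specialization plus one quadratic solve. The only points requiring care are the sign and parity bookkeeping when transcribing the $\mu=2$ modulation from the general formula, and recording that clearing the factor $D$ presupposes $D\ne0$, i.e.\ that the two clusters are genuinely distinct in frequency, as assumed throughout. The $\pm$ branches in~\eqref{eq:2ClusterAntiPhase_OmegaDiff} correspond to the two roots of the quadratic, and the existence of real solutions is governed by the sign of the discriminant $\bigl(n_{1}-\tfrac12\bigr)^{2}\sin^{2}\alpha\sin^{2}\beta-\tfrac{\epsilon}{2}\sin(\alpha-\beta)$.
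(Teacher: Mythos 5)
Your proposal is correct and follows essentially the same route as the paper: the paper likewise obtains Corollary~\ref{cor:2Cl_AP_FormalExpan} as a direct specialization of Proposition~\ref{prop:MCSol_AP_FormalExpansion} to $M=2$, with~\eqref{eq:2ClusterAntiPhase_OmegaDiff} arising as the explicit solution of~\eqref{eq:MCAntiPhase_Omega}. Your write-up merely makes explicit the algebra the paper leaves implicit (the sign bookkeeping $\Delta\Omega^{(1)}_{21}=-\Delta\Omega^{(1)}_{12}$ with evenness of cosine, the reduction via $n_1+n_2=1$ to a single quadratic in $D$, and the observation that the right-hand sides of~\eqref{eq:MCAntiPhase_Omega} depend on the frequencies only through $D$, so each root consistently determines $\Omega^{(1)}_1,\Omega^{(1)}_2$), all of which is accurate.
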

This result follows directly from Proposition~\ref{prop:MCSol_AP_FormalExpansion}.
It shows, in particular, that the system of equations~(\ref{eq:MCAntiPhase_Omega})
can be solved explicitly by~(\ref{eq:2ClusterAntiPhase_OmegaDiff})
in case of two clusters. 

Similarly to the splay multi-clusters, for any fixed set of parameters
and each $n_{1},$ equation~(\ref{eq:2ClusterAntiPhase_OmegaDiff}) can
lead to two antipodal multi-clusters with two different frequency differences. Hence,
a large number of antipodal two-clusters can coexist for the same parameter
values. Figure~\ref{fig:2Cl_AntiPhase_OmegaSols} illustrates such
a coexistence, where we present the one-cluster solutions given by (\ref{eq:OmegaPL}) and the solutions to the equation~(\ref{eq:2ClusterAntiPhase_OmegaDiff}). Blue solid lines represent those solutions for which the asymptotic expansion led to an existing and stable two-cluster solutions of antipodal type. Note further that for two-cluster solutions of antipodal type,
the asymptotic expansion presented in Proposition~\ref{prop:MCSol_AP_FormalExpansion} turns into a formal expansion whenever $|\Delta\Omega|>\epsilon$, i.e., $\epsilon$ is not assumed to be infinitesimal ($\epsilon \to 0$). The interval $[-\epsilon,\epsilon]$
is therefore highlighted in Fig.~\ref{fig:2Cl_AntiPhase_OmegaSols}. 
\begin{figure}
	\begin{center}
		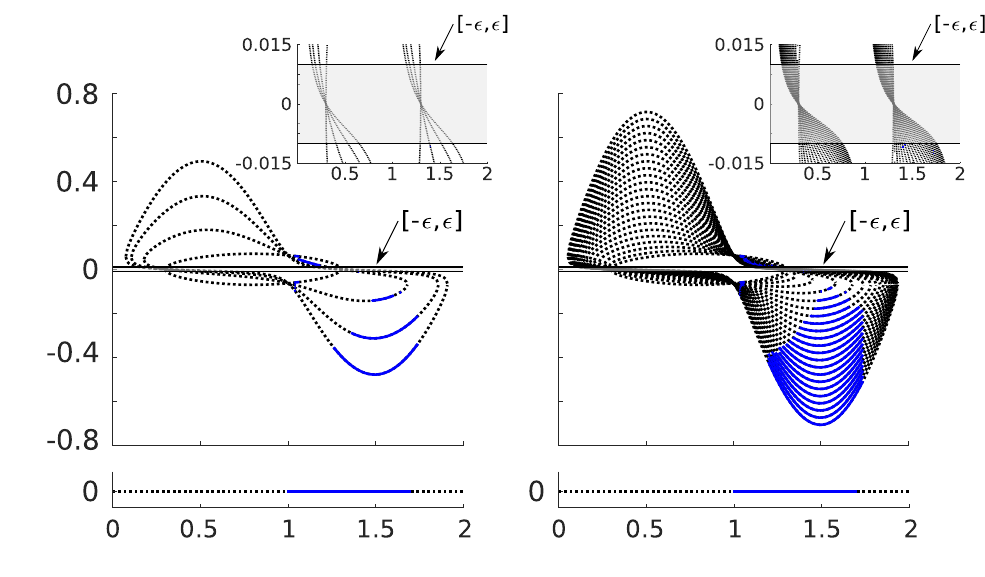
	\end{center}%
\caption{\label{fig:2Cl_AntiPhase_OmegaSols} Two-cluster solutions (upper panels) and one-cluster solutions (lower panels) of antipodal type given by the asymptotic expansion in Corollary~\ref{cor:2Cl_AP_FormalExpan} and Proposition~\ref{prop:1ClusterSolutions}, respectively.
For this, the difference of the frequencies $\Delta\Omega^{(1)}_{12}$ is
displayed corresponding to~(\ref{eq:2ClusterAntiPhase_OmegaDiff}) and~(\ref{eq:OmegaPL}).
The dotted lines (black) indicate unstable solutions while the solid
lines (blue) indicate stable solutions. Here, every second solution is
plotted for the sake of visibility. The insets show a blow-up of the interval $[-\epsilon,\epsilon]$. Parameter values: (a) $N=20$,
$\epsilon=0.01$; (b) $N=50$, $\epsilon=0.01$; $\alpha=0.3\pi$ is fixed for all panels.}
\end{figure}

\subsection{Multi-cluster solutions of mixed type\label{subsec:MC_Mix}}

We have seen how clusters are described consisting of oscillator groups
of splay type (section~\ref{subsec:MC_RotWave}) as well as
clusters consisting of oscillator groups with in- and anti-phase relation
(section~\ref{subsec:MC_AntiPhase}). It is therefore reasonable to
ask for multi-cluster solutions that consist of both of these types.
In order to describe these solutions we have to loosen the definition
of a multi-cluster solution.
\begin{definition}
\label{def:PhaseOscStates_PseudoMulCl} Phase oscillators $\phi_{i}(t)$
form a \textbf{pseudo multi-cluster} if they can
be separated into $M$ groups such that for all $\mu\in\{1,\dots,M\}$
the phase oscillators $\phi_{i,\mu}$, $i\in\{1,\dots,N_{\mu}\}$,
from each group $\mu$ satisfy $\phi_{i,\mu}(t)=\Omega_{\mu}t+s_{i,\mu}(t)$
with bounded functions $s_{i,\mu}$.
\end{definition}
Note that every multi-cluster solution is by definition already a pseudo
multi-cluster solution.
\begin{proposition}
\label{prop:MCSol_Mix_FormalExpansion} Suppose $2a_{i,\mu}=a_{\mu}$
for all $\mu=1,\dots,M_{1}$, and $R_{2}(\mathbf{a}_{\mu})=0$ for
all $\mu=M_{1}+1,\dots,M$, $i=1,\dots,N_{\mu}$ where $M_1$ is the number of antipodal type clusters. The mixed pseudo
multi-cluster solutions of (\ref{eq:PhiDGL_general})\textendash (\ref{eq:KappaDGL_general})
with $\phi_{i,\mu}(t)=\Omega_{\mu}(\epsilon)t+s_{i,\mu}(t)+a_{i,\mu}$
possess the following first order asymptotic expansion in $\epsilon$
\begin{align*}
\phi_{i,\mu}^{(1)} & =\Omega^{(1)}_{\mu}t+a_{i,\mu}+\epsilon p_{i,\mu;1}(t),\\
\kappa^{(1)}_{ij,\mu\mu} & =-\sin(a_{i,\mu}-a_{j,\mu}+\beta),\\
\kappa^{(1)}_{ij,\mu\nu} & =\frac{\epsilon}{\Delta\Omega^{(1)}_{\mu\nu}}\cos(\Delta\Omega^{(1)}_{\mu\nu}t+a_{i,\mu}-a_{j,\nu}+\beta),
\end{align*}
with 
\begin{align*}
p_{i,\mu;1}(t)=p_{\mu;1} & =-\sum_{\overset{\nu=1}{\nu\ne\mu}}^{M_{1}}\frac{n_{\nu}}{4\left(\Delta\Omega^{(1)}_{\mu\nu}\right)^{2}}\cos(2\Delta\Omega^{(1)}_{\mu\nu}t+{a}_{\mu}-{a}_{\nu}+\alpha+\beta)
\end{align*}
for $\mu=1,\dots,M_{1}$,
\begin{align*}
	p_{i,\mu;1}(t) & =-\sum_{\overset{\nu=1}{\nu\ne\mu}}^{M}\frac{n_{\nu}}{4\left(\Delta\Omega^{(1)}_{\mu\nu}\right)^{2}}\cos(2\Delta\Omega^{(1)}_{\mu\nu}t+a_{i,\mu}-{a}_{\nu}+\alpha+\beta)
\end{align*}
for $\mu=M_{1}+1,\dots,M$, and the cluster frequencies $\Omega^{(1)}_{\mu}$ up to second order in
$\epsilon$ whenever the following system of equations can be solved
\begin{align}
\Omega^{(1)}_{\mu}=\left(\Omega_{\mu;0}-\epsilon\sum_{\overset{\nu=1}{\nu\ne\mu}}^{M}\frac{n_{\nu}}{2\Delta\Omega^{(1)}_{\mu\nu}}\sin(\alpha-\beta)\right)\label{eq:MCMix_Omega}
\end{align}
with
\begin{align*}
\Omega_{\mu;0} & =n_{\mu}\sin(\alpha)\sin(\beta) & \mu=1,\dots,M_{1}\\
\Omega_{\mu;0} & =\frac{n_{\mu}}{2}\cos(\alpha-\beta). & \mu=M_{1}+1,\dots,M
\end{align*}
 Here, $\mu=1,\dots,M,$ $i,j=1,\dots,N_{\mu}$ and $\Delta\Omega^{(1)}_{\mu\nu}:=\Omega^{(1)}_{\mu}-\Omega^{(1)}_{\nu}$.
\end{proposition}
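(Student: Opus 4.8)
The plan is to insert the pseudo-multi-cluster ansatz into~(\ref{eq:PhiDGL_general})--(\ref{eq:KappaDGL_general}) and match powers of $\epsilon$, treating the mixed case as the simultaneous combination of the two mechanisms behind Propositions~\ref{prop:MCSol_RotWave} and~\ref{prop:MCSol_AP_FormalExpansion}. Writing $\phi_{i,\mu}(t)=\Omega^{(1)}_{\mu}t+a_{i,\mu}+\epsilon p_{i,\mu;1}(t)+\mathcal{O}(\epsilon^2)$, I first solve the coupling equation for $\kappa_{ij,\mu\nu}$ with the phases frozen at leading order. For $\nu=\mu$ the phase difference $\phi_{i,\mu}-\phi_{j,\mu}=a_{i,\mu}-a_{j,\mu}+\mathcal{O}(\epsilon)$ is constant to leading order, so, exactly as in the remark following Corollary~\ref{cor:1ClusterRWSol}, the linear equation~(\ref{eq:KappaDGL}) relaxes to the quasi-static value $\kappa^{(1)}_{ij,\mu\mu}=-\sin(a_{i,\mu}-a_{j,\mu}+\beta)$. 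For $\nu\ne\mu$ the difference grows like $\Delta\Omega^{(1)}_{\mu\nu}t+a_{i,\mu}-a_{j,\nu}$, so the forcing in~(\ref{eq:KappaDGL_general}) is oscillatory and the bounded particular solution is $\mathcal{O}(\epsilon)$, namely $\kappa^{(1)}_{ij,\mu\nu}=\frac{\epsilon}{\Delta\Omega^{(1)}_{\mu\nu}}\cos(\Delta\Omega^{(1)}_{\mu\nu}t+a_{i,\mu}-a_{j,\nu}+\beta)$. This step already requires $\Delta\Omega^{(1)}_{\mu\nu}\ne0$, i.e. pairwise distinct cluster frequencies.

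Next I substitute these couplings into~(\ref{eq:PhiDGL_general}) and split the right-hand side into the intra-cluster ($\nu=\mu$) and inter-cluster ($\nu\ne\mu$) contributions. For the intra-cluster part I use the product-to-sum identity $\sin(x+\beta)\sin(x+\alpha)=\tfrac12[\cos(\alpha-\beta)-\cos(2x+\alpha+\beta)]$ with $x=a_{i,\mu}-a_{j,\mu}$ and sum over $j$. The constant term gives $\tfrac{n_\mu}{2}\cos(\alpha-\beta)$; the second term contains $\sum_j e^{2\mathrm{i}a_{j,\mu}}$, which vanishes for a splay cluster ($R_2(\mathbf{a}_\mu)=0$) and collapses via $2a_{j,\mu}=a_\mu$ for an antipodal cluster. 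This reproduces the size-weighted leading frequencies $\Omega_{\mu;0}=\tfrac{n_\mu}{2}\cos(\alpha-\beta)$ (splay) and $\Omega_{\mu;0}=n_\mu\sin\alpha\sin\beta$ (antipodal) of Proposition~\ref{prop:1ClusterSolutions}, and, crucially, produces no $\mathcal{O}(1)$ oscillatory term, since within a cluster the phase differences are constant.

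The inter-cluster part is $\mathcal{O}(\epsilon)$ because $\kappa_{ij,\mu\nu}$ is. Expanding $\cos(W+\beta)\sin(W+\alpha)=\tfrac12[\sin(2W+\alpha+\beta)+\sin(\alpha-\beta)]$ with $W=\Delta\Omega^{(1)}_{\mu\nu}t+a_{i,\mu}-a_{j,\nu}$ separates this contribution into a secular (time-independent) part and a fast-oscillating part. Summing the secular $\sin(\alpha-\beta)$ term over $j$ gives a factor $N_\nu$ for every $\nu$, hence the frequency correction $-\epsilon\sum_{\nu\ne\mu}\tfrac{n_\nu}{2\Delta\Omega^{(1)}_{\mu\nu}}\sin(\alpha-\beta)$; requiring $\Omega^{(1)}_\mu$ to absorb both the leading term and this correction is precisely the implicit system~(\ref{eq:MCMix_Omega}). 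The oscillatory part is integrated once in $t$ to give $\epsilon\,p_{i,\mu;1}$; here the sum over $j$ again contains $\sum_j e^{2\mathrm{i}a_{j,\nu}}$, so only antipodal clusters $\nu$ survive, which is why the modulations take the stated closed form (depending only on $a_\mu$ for antipodal $\mu$, and on $a_{i,\mu}$ for splay $\mu$). Because the modulation enters the phase at order $\epsilon$, its feedback into the time-averaged frequency is $\mathcal{O}(\epsilon^2)$, so~(\ref{eq:MCMix_Omega}) fixes $\Omega^{(1)}_\mu$ consistently to that order.

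The main obstacle is the self-consistency of this averaging step: the frequency differences $\Delta\Omega^{(1)}_{\mu\nu}$ that form the denominators of both the couplings and the modulation appear on both sides of~(\ref{eq:MCMix_Omega}), so the expansion is genuinely implicit and only formal unless the fixed-point system is solvable with every $\Delta\Omega^{(1)}_{\mu\nu}$ bounded away from zero. Verifying that the neglected higher-order terms remain uniformly bounded, so that the splitting into secular and oscillatory parts is legitimate and each $p_{i,\mu;1}$ stays bounded, is the delicate point; the non-resonance condition $\Delta\Omega^{(1)}_{\mu\nu}\ne0$ is exactly what secures it, and for this reason the statement is phrased as a necessary condition holding \emph{whenever} the frequency system can be solved.
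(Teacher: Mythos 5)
Your argument is correct in substance, but it takes a genuinely more elementary route than the paper. The paper does not prove Proposition~\ref{prop:MCSol_Mix_FormalExpansion} by direct first-order matching; it builds an all-orders framework in the appendix: the coupling weights are defined as the parametrization~(\ref{eq:DefPullBackAttrkappa}) of a pullback attractor, their expansion is derived in Lemma~\ref{lem:KappaFormalExpansion} via Fa\`a di Bruno's formula and mode-by-mode integration of finite Fourier sums, Lemma~\ref{lem:Lemma2} secures the even/odd mode structure needed for self-consistency, and Proposition~\ref{thm:FormalExpansionThm} runs a multi-time-scale induction that produces every expansion coefficient with remainders $o(\epsilon^{s})$ uniformly in $t$; the statement under review is then read off as the explicit first-order case (ii) of that result. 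You compress all of this into the elementary version: quasi-static relaxation of $\kappa_{ij,\mu\mu}$, the bounded oscillatory response $\frac{\epsilon}{\Delta\Omega^{(1)}_{\mu\nu}}\cos(\cdot)$ for $\kappa_{ij,\mu\nu}$ (requiring $\Delta\Omega^{(1)}_{\mu\nu}\neq 0$), product-to-sum identities, and absorption of the secular $\sin(\alpha-\beta)$ terms into the implicit system~(\ref{eq:MCMix_Omega}). What your route buys is transparency: you isolate cleanly the structural fact that $\sum_{j}e^{2\mathrm{i}a_{j,\nu}}$ annihilates every splay cluster $\nu$, so only antipodal clusters drive the modulation -- something the printed formulas leave implicit. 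What the paper's machinery buys is precisely the two points you defer at the end: the pullback attractor makes ``the bounded particular solution'' a canonical and unique choice, and the uniform remainder estimates are what justify calling the result an asymptotic (not merely formal) expansion on times $t\in O(1/\epsilon)$, with frequencies correct to $O(\epsilon^{2})$; in the paper these are exactly Lemma~\ref{lem:KappaFormalExpansion} and part (i) of Proposition~\ref{thm:FormalExpansionThm}.

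One concrete caution: because you never carry out the final integration, you do not actually verify the printed closed form, and doing so does not reproduce it verbatim. For two antipodal clusters one finds $\epsilon\dot{p}_{\mu;1}=-\frac{\epsilon n_{\nu}}{2\Delta\Omega^{(1)}_{\mu\nu}}\sin\bigl(2\Delta\Omega^{(1)}_{\mu\nu}t+a_{\mu}-a_{\nu}+\alpha+\beta\bigr)$, hence $p_{\mu;1}=+\frac{n_{\nu}}{4\left(\Delta\Omega^{(1)}_{\mu\nu}\right)^{2}}\cos\bigl(2\Delta\Omega^{(1)}_{\mu\nu}t+a_{\mu}-a_{\nu}+\alpha+\beta\bigr)$, i.e.\ the opposite sign to the proposition; for splay $\mu$ the argument comes out as $2a_{i,\mu}-a_{\nu}$ rather than $a_{i,\mu}-a_{\nu}$, and the sum effectively runs only over antipodal $\nu$. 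This discrepancy is internal to the paper as well: the formula for $p_{\mu;1}$ obtained at the end of the proof of Proposition~\ref{thm:FormalExpansionThm} carries the plus sign, while the statements carry the minus. So your plan, executed to the end, is sound and would in fact expose these sign/argument typos -- but the sentence ``the modulations take the stated closed form'' cannot be asserted without performing that last step.
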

As in the previous sections, we are going to show that equation~(\ref{eq:MCMix_Omega})
possesses solutions. For this, we consider the case of two clusters
$\phi_{i,\mu}$ $\left(\mu=1,2\right)$.
\begin{corollary}
\label{cor:2Cl_Mix_FormalExpan}Suppose $2a_{i,1}={a}_{1}$ for
all $i=1,\dots,N_{1}$ and $R(\mathbf{a}_{2})=0$. The mixed pseudo
multi-clusters of system (\ref{eq:PhiDGL_general})\textendash (\ref{eq:KappaDGL_general})
possess the following first order asymptotic expansion in $\epsilon$
\begin{align*}
\phi_{i,1}^{(1)} & =\Omega^{(1)}_{1}t+a_{i,1},\\
\phi_{i,2}^{(1)} & =\Omega^{(1)}_{2}t+a_{i,2}-\epsilon\frac{n_{1}}{4\left(\Delta\Omega^{(1)}_{12}\right)^{2}}\cos(2\Delta\Omega^{(1)}_{12}t+{a}_{1}-a_{i,2}-\alpha-\beta),\\
\kappa^{(1)}_{ij,\mu\mu} & =-\sin(a_{i,\mu}-a_{j,\mu}+\beta),\\
\kappa^{(1)}_{ij,\mu\nu} & =\frac{\epsilon}{\Delta\Omega^{(1)}_{\mu\nu}}\cos(\Delta\Omega^{(1)}_{\mu\nu}t+a_{i,\mu}-a_{j,\nu}+\beta),
\end{align*}
where
\[
\Omega^{(1)}_{\mu}=\left(\Omega_{\mu;0}-\epsilon\frac{n_{\nu}}{2\Delta\Omega^{(1)}_{\mu\nu}}\sin(\alpha-\beta)\right),
\]
\[
\Omega_{1;0}=n_{1}\sin(\alpha)\sin(\beta),
\]
\[
\Omega_{2;0}=\frac{n_{2}}{2}\cos(\alpha-\beta),
\]
\begin{align}
\begin{split}\left(\Delta\Omega^{(1)}_{12}\right)_{1,2} & =\frac{1}{2}\left[\left(n_{1}-\frac{1}{2}\right)\cos(\alpha-\beta)-\frac{n_{1}}{2}\cos(\alpha+\beta)\right]\\
 & \pm\frac{1}{2}\sqrt{\left[\left(n_{1}-\frac{1}{2}\right)\cos(\alpha-\beta)-\frac{n_{1}}{2}\cos(\alpha+\beta)\right]^{2}-2\epsilon\sin(\alpha-\beta)}
\end{split}
\label{eq:2ClusterMix_OmegaDiff}
\end{align}
for $\mu=1,2$, $\nu\ne\mu$ and $i,j=1,\dots,N_{\mu}$.
\end{corollary}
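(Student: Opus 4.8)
The plan is to obtain Corollary~\ref{cor:2Cl_Mix_FormalExpan} as the specialization of Proposition~\ref{prop:MCSol_Mix_FormalExpansion} to $M=2$ and $M_1=1$, i.e.\ one antipodal cluster (cluster $1$, for which $2a_{i,1}=a_1$) and one splay cluster (cluster $2$, for which $R_2(\mathbf{a}_2)=0$). First I would substitute $M=2$, $M_1=1$ into every formula of the proposition. The modulation $p_{1;1}(t)$ of the antipodal cluster is a sum over $\nu\in\{1,\dots,M_1\}\setminus\{1\}=\emptyset$, hence it vanishes identically and yields $\phi_{i,1}^{(1)}=\Omega^{(1)}_1 t+a_{i,1}$ with no $\epsilon$-correction, as claimed. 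For the splay cluster the sum defining $p_{i,2;1}(t)$ runs over $\nu\in\{1,2\}\setminus\{2\}=\{1\}$ and collapses to the single term $-\tfrac{n_1}{4(\Delta\Omega^{(1)}_{21})^2}\cos(2\Delta\Omega^{(1)}_{21}t+a_{i,2}-a_1+\alpha+\beta)$. Using $\Delta\Omega^{(1)}_{21}=-\Delta\Omega^{(1)}_{12}$ together with the evenness of cosine rewrites this as the term displayed in the statement. The coupling blocks $\kappa^{(1)}_{ij,\mu\mu}$ and $\kappa^{(1)}_{ij,\mu\nu}$ carry over verbatim.

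Next I would treat the frequency relation~\eqref{eq:MCMix_Omega}. With $M=2$ it reduces to $\Omega^{(1)}_\mu=\Omega_{\mu;0}-\epsilon\tfrac{n_\nu}{2\Delta\Omega^{(1)}_{\mu\nu}}\sin(\alpha-\beta)$ with $\nu\ne\mu$, where $\Omega_{1;0}=n_1\sin\alpha\sin\beta$ (antipodal) and $\Omega_{2;0}=\tfrac{n_2}{2}\cos(\alpha-\beta)$ (splay). Forming the difference $\Delta\Omega^{(1)}_{12}=\Omega^{(1)}_1-\Omega^{(1)}_2$ and using $\Delta\Omega^{(1)}_{21}=-\Delta\Omega^{(1)}_{12}$, the two correction terms combine over a common denominator into $-\epsilon\tfrac{n_1+n_2}{2\Delta\Omega^{(1)}_{12}}\sin(\alpha-\beta)$. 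Here I would invoke $n_1+n_2=1$, which holds because the two clusters partition all $N$ oscillators; this is exactly the step that decouples the scalar equation for $\Delta\Omega^{(1)}_{12}$ from the individual frequencies. Writing $C:=\Omega_{1;0}-\Omega_{2;0}$ and $D:=\Delta\Omega^{(1)}_{12}$, the relation becomes $D=C-\tfrac{\epsilon\sin(\alpha-\beta)}{2D}$, i.e.\ the quadratic $D^2-CD+\tfrac{\epsilon}{2}\sin(\alpha-\beta)=0$, whose two roots are $D=\tfrac12\bigl(C\pm\sqrt{C^2-2\epsilon\sin(\alpha-\beta)}\,\bigr)$.

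It remains to rewrite the constant $C$ in the form appearing in~\eqref{eq:2ClusterMix_OmegaDiff}. Using the product-to-sum identity $\sin\alpha\sin\beta=\tfrac12(\cos(\alpha-\beta)-\cos(\alpha+\beta))$ and $n_2=1-n_1$, a short computation gives $C=\bigl(n_1-\tfrac12\bigr)\cos(\alpha-\beta)-\tfrac{n_1}{2}\cos(\alpha+\beta)$, which reproduces the bracketed expression in~\eqref{eq:2ClusterMix_OmegaDiff} and completes the identification of the two roots $\left(\Delta\Omega^{(1)}_{12}\right)_{1,2}$. Since the whole statement is a direct specialization of Proposition~\ref{prop:MCSol_Mix_FormalExpansion}, I expect no genuine obstacle; the only care needed is the sign bookkeeping when passing from $\Delta\Omega^{(1)}_{21}$ to $\Delta\Omega^{(1)}_{12}$ in both the modulation term and the frequency relation, and the recognition that $n_1+n_2=1$ is precisely what turns the coupled frequency system into a single solvable quadratic.
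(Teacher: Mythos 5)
Your proposal is correct and takes essentially the same route as the paper: Corollary~\ref{cor:2Cl_Mix_FormalExpan} is presented there as a direct specialization of Proposition~\ref{prop:MCSol_Mix_FormalExpansion} to $M=2$, $M_{1}=1$, with equation~\eqref{eq:MCMix_Omega} reduced to the quadratic in $\Delta\Omega^{(1)}_{12}$ and solved exactly as you describe. Your sign bookkeeping via $\Delta\Omega^{(1)}_{21}=-\Delta\Omega^{(1)}_{12}$, the use of $n_{1}+n_{2}=1$ to decouple the frequency equations, and the product-to-sum identity rewriting $\Omega_{1;0}-\Omega_{2;0}$ into the bracketed expression of~\eqref{eq:2ClusterMix_OmegaDiff} supply precisely the details the paper leaves implicit.
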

Illustration of the mixed 2-clusters is shown in Fig.~\ref{fig:2Cl_AntiPhaseRotWave}.
\begin{figure}
	\begin{center}
		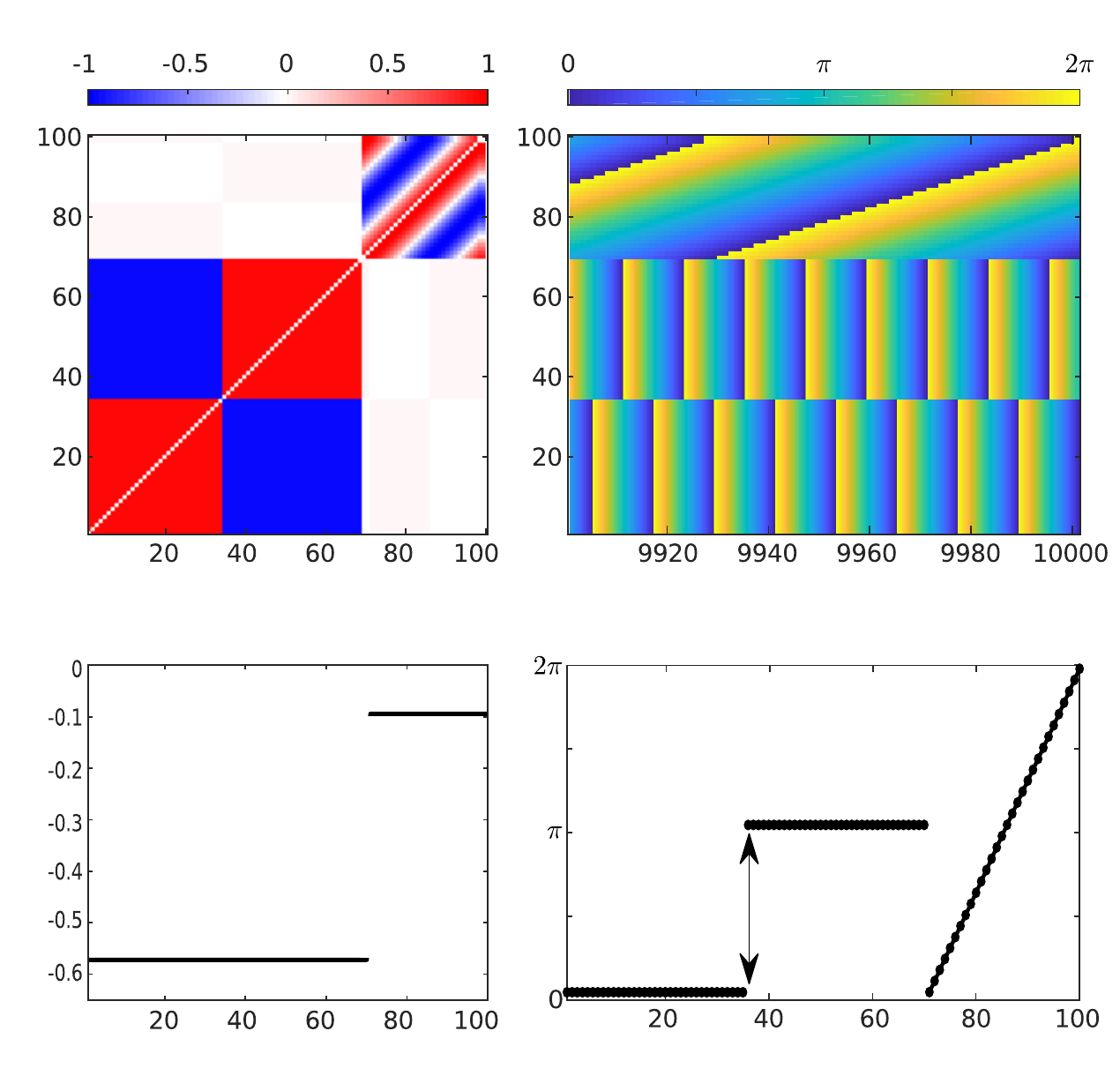
	\end{center}
\caption{$2$-Cluster solution of mixed type. (a) Coupling weights at $t=10000$
showing two clusters, (b) Distribution of the phases within each cluster,
space-time representation. (c) Average frequency of each oscillator,
(d) Oscillator phases $\phi_{i}$ for fixed time $t=10000$. Parameter
values: $\epsilon=0.01$, $\alpha=0.3\pi$, $\beta=-0.4\pi$, $N=100$. \label{fig:2Cl_AntiPhaseRotWave}}
\end{figure}
Moreover, we performed a Fourier analysis of the temporal behaviour
of the oscillators, see Fig.~\ref{fig:2Cl_AntiPhaseRotWave_Fourier_analysis}.
First, it can be observed that the oscillators representing the second
cluster ($i=80,100$) show the same evolution in time but with a phase
lag due to the spatial dependency described above. In order to show the agreement with the asymptotic expansion presented in Corollary~\ref{cor:2Cl_Mix_FormalExpan}, the analytic results are displayed with black dashed lines. Furthermore, the
power spectrum shows a prominent peak at $2\langle\Delta\Omega\rangle_{12}$ for
both oscillators of the second cluster and a flat curve for the representative
of the first cluster. These numerical results are in complete agreement
with the analytic findings. 
\begin{figure}
	\begin{center}
		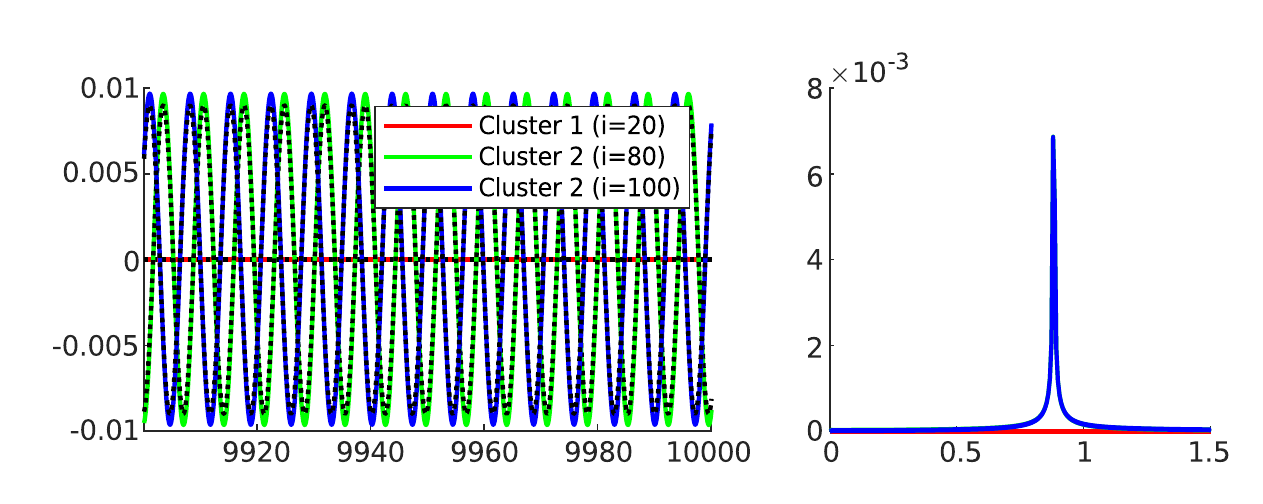
	\end{center}%
\caption{For mixed type 2-cluster solution from Fig.\ \ref{fig:2Cl_AntiPhaseRotWave},
panel (a) shows time series of an oscillator from one of the clusters
after subtracting the average linear growth $\phi_{\mu,i}(t)-\langle\Omega_{\mu}\rangle t$. The black dashed lines show the corresponding analytic results from the asymptotic expansion in Proposition~\ref{prop:MCSol_Mix_FormalExpansion}.
(b) Power spectrum of the time series given in (a).\label{fig:2Cl_AntiPhaseRotWave_Fourier_analysis}}
\end{figure}
Analogously, to the antipodal two-clusters, for any fixed set of parameters
and each $n_{1},$ equation~(\ref{eq:2ClusterMix_OmegaDiff}) can
lead to two multi-clusters of mixed type with two different frequency differences. Hence,
a large number of those clusters can coexist for the same parameter
values. Figure~\ref{fig:2Cl_Mix_OmegaSols} illustrates such
a coexistence, where we present the solutions to the equation~(\ref{eq:2ClusterMix_OmegaDiff}). Again, blue solid lines represent those solutions for which the asymptotic expansion led to an existing and stable two-cluster solutions of mixed type. Additionally, Figure~\ref{fig:2Cl_Mix_OmegaSols} shows the one-cluster solutions of splay and antipodal type (in both cases $\Delta\Omega_{12}=0$) together with their common regions of stability. As in the case of two-clusters of antipodal type,
the asymptotic expansion presented in Proposition~\ref{prop:MCSol_AP_FormalExpansion} turns into a formal expansion whenever $|\Delta\Omega|>\epsilon$. The interval $[-\epsilon,\epsilon]$
is therefore highlighted in Fig.~\ref{fig:2Cl_Mix_OmegaSols}.
\begin{figure}
	\begin{center}
		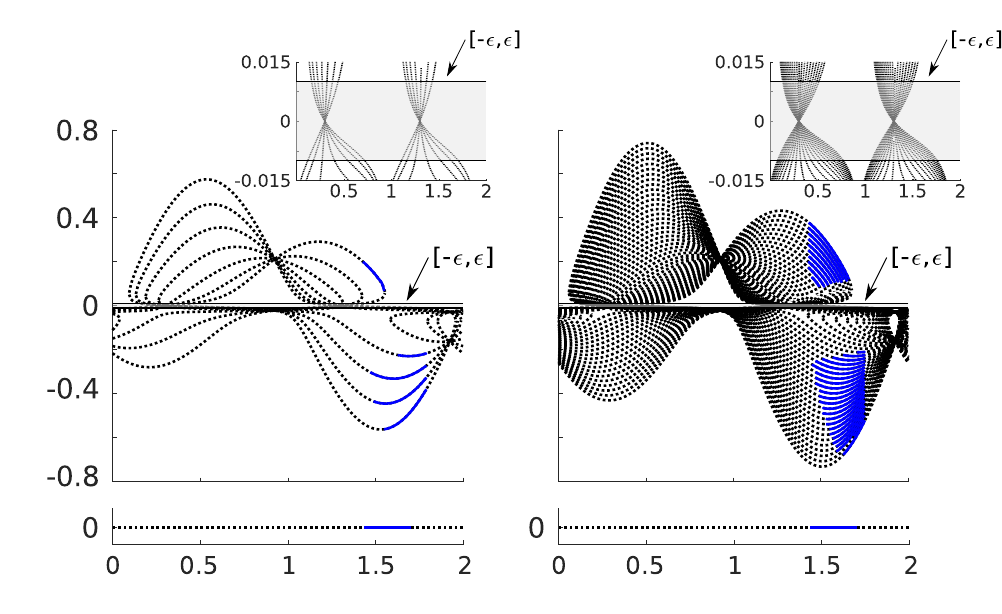
	\end{center}%
	\caption{\label{fig:2Cl_Mix_OmegaSols} Two-cluster solutions of mixed type (upper panels) and one-cluster solutions (lower panels) of either splay or antipodal type given by the asymptotic expansion in Corollary~\ref{cor:2Cl_Mix_FormalExpan} and Proposition~\ref{prop:1ClusterSolutions}, respectively. For this, the difference of the frequencies $\Delta\Omega^{(1)}_{12}$ is displayed corresponding to~(\ref{eq:2ClusterMix_OmegaDiff}) and~(\ref{eq:OmegaPL}). The dotted lines (black) indicate unstable solutions while the solid	lines (blue) indicate stable solutions. Here, every second solution is	plotted for the sake of visibility. The insets show a blow-up of the interval $[-\epsilon,\epsilon]$. Parameter values: (a) $N=20$, $\epsilon=0.01$; (b) $N=50$, $\epsilon=0.01$; $\alpha=0.3\pi$ is fixed for all panels.}
\end{figure}

\section{Stability of cluster solutions\label{sec:StabilityAnalysis}}

In sections~\ref{sec:One-cluster}\textendash \ref{sec:Multi-cluster-states}
the existence of one-cluster as well as (pseudo) multi-cluster solutions
were discussed. This section is devoted to the analysis of their stability.
First, the stability of the one-cluster solutions is analyzed and an
analytic result for all rotating-wave one-cluster solutions which are
described in section~\ref{subsec:stab-One-cluster-rotating-wave} is
presented. The findings are discussed with respect to all one-cluster
solutions found in Proposition~\ref{prop:1ClusterSolutions}. Subsequently,
we use numerical simulations in order to analyze the stability for
two-cluster solutions. 

\subsection{One-cluster solutions\label{subsec:stab-One-cluster-rotating-wave}}

In order to study the local stability of one-cluster solutions described
in section~\ref{sec:One-cluster}, we linearize the system  (\ref{eq:PhiDGL_general})\textendash (\ref{eq:KappaDGL_general})
around the solutions (\ref{eq:pPL})\textendash (\ref{eq:kPL}). We obtain
the following linearized system 
\begin{align}
\frac{d}{dt}\delta\phi_{i} & =\frac{1}{N}\sum_{m=0}^{N-1}\sin(a_{i}-a_{i+m}+\beta)\cos(a_{i}-a_{i+m}+\alpha)\left(\delta\phi_{i}-\delta\phi_{i+m}\right)\label{eq:Linearized_OneCl_phi}\\
 & -\frac{1}{N}\sum_{m=0}^{N-1}\sin(a_{i}-a_{i+m}+\alpha)\delta\kappa_{i(i+m)},\nonumber \\
\frac{d}{dt}\delta\kappa_{i(i+m)} & =-\epsilon\left(\delta\kappa_{i(i+m)}+\cos(a_{i}-a_{i+m}+\beta)\left(\delta\phi_{i}-\delta\phi_{i+m}\right)\right),\label{eq:Linearized_OneCl_kappa}
\end{align}
where we have introduced the new label $m:=j-i$ and the convention $i+m=(i+m)\mod N$
for convenience. Throughout this paragraph we will make use of Schur's complement~\cite{BOY04} in order to simplify characteristic equations. More precisely, any $m\times m$ matrix $M$ in the $2\times2$ block form can be written as
\begin{align}\label{eq:SchurComplement}
M & =\begin{pmatrix}A & B\\
C & D
\end{pmatrix}
=\begin{pmatrix}\mathbb{I}_{p} & BD^{-1}\\
	0 & \mathbb{I}_{q}
\end{pmatrix}\begin{pmatrix}A-BD^{-1}C & 0\\
	0 & D
\end{pmatrix}\begin{pmatrix}\mathbb{I}_{p} & 0\\
	D^{-1}C & \mathbb{I}_{q}
\end{pmatrix}
\end{align}
where $A$ is a $p\times p$ matrix and $D$ is an invertible $q\times q$
matrix. The matrix $A-BD^{-1}C$ is called Schur's
complement. A simple formula for the determinant of $M$ can be derived
with the decomposition~\eqref{eq:SchurComplement}
\begin{align*}
\det(M) & =\det(A-BD^{-1}C)\cdot\det(D).
\end{align*}
This result is important for the subsequent stability analysis. Note that in the following an overline indicates the complex conjugate.
\begin{lemma}
\label{lem:LinearizedOneCluster_BlockForm}Suppose $\mathbf{a}_{k}=(0,2\pi k/N,\dots,(N-1)2\pi k/N)^{T}$
with $k\in\{0,\dots,N-1\}$ and the linear system around the one-cluster
solution $\bm{\phi}=\Omega t\cdot(1,\dots,1)^{T}+\mathbf{a}_{k}$
is given by (\ref{eq:Linearized_OneCl_phi})\textendash (\ref{eq:Linearized_OneCl_kappa}).
Then there exist new coordinates $(\delta\psi, \delta\zeta)$ such that the linearized system can be decomposed into $N$ linear differential equations of the form
\begin{align}\label{eq:LinearizedOneCl_BlockForm}
\left(\begin{matrix}\delta\dot{\psi}_{l}\\
\delta\dot{\zeta}_{l0}\\
\vdots\\
\delta\dot{\zeta}_{l(N-1)}
\end{matrix}\right)
=C_l\left(\begin{matrix}\delta{\psi}_{l}\\
\delta{\zeta}_{l0}\\
\vdots\\
\delta{\zeta}_{l(N-1)}
\end{matrix}\right) \quad l=0,\dots,N-1
\end{align}
with
\begin{align*}
C_{l} & :=\left(\begin{matrix}\hat{{\lambda}}_{l} & \begin{matrix}b\end{matrix}\\
\begin{matrix}c_{l}\end{matrix} & -\epsilon\mathbb{I}_{N}
\end{matrix}\right),
\end{align*}
where, $\mathbb{I}_{N}$ is the $N$-dimensional identity matrix and
\begin{align}
\hat{{\lambda}}_{l} & =\frac{1}{2}\left((Z_{1}(\mathbf{a}_{l})-1)\sin(\alpha-\beta)-\Im(Z_{2}(\mathbf{a}_{k}))\cos(\alpha+\beta)+\Re(Z_{2}(\mathbf{a}_{k}))\sin(\alpha+\beta)\right)\label{eq:LinearizedOneCl_lambdabar}\\
 & +\frac{1}{4}\left(\overline{Z}_{1}(\mathbf{a}_{2k-l})\mathrm{i}e^{\mathrm{i}(\alpha+\beta)}-Z_{1}(\mathbf{a}_{2k+l})\mathrm{i}e^{-\mathrm{i}(\alpha+\beta)}\right).\nonumber \\
b & =\frac{1}{N}\left(\sin(-\alpha),\dots,\sin((N-1)k\frac{2\pi}{N}-\alpha)\right),\label{eq:LinearizedOneCl_b}\\
c_{l} & =\left(0,\cos(k\frac{2\pi}{N}-\beta)\left(1-e^{\mathrm{i}l\frac{2\pi}{N}}\right),\dots,\cos((N-1)k\frac{2\pi}{N}-\beta)\left(1-e^{\mathrm{i}l(N-1)\frac{2\pi}{N}}\right)\right)^{T}\label{eq:LinearizedOneCl_c}
\end{align}
with any $j\in\{1,\dots,N\}$.
\end{lemma}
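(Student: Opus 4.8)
The plan is to exploit the cyclic (translation) symmetry of the rotating-wave configuration $\mathbf{a}_k$ and to block-diagonalize the linearization \eqref{eq:Linearized_OneCl_phi}--\eqref{eq:Linearized_OneCl_kappa} by a discrete Fourier transform in the oscillator index. The decisive structural observation is that for $\mathbf{a}_k$ one has $a_i-a_{i+m}=-m\,2\pi k/N$, which depends only on the relative index $m$ and not on the base index $i$. Hence \emph{every} coefficient in \eqref{eq:Linearized_OneCl_phi}--\eqref{eq:Linearized_OneCl_kappa} is $i$-independent, so the linearized vector field commutes with the cyclic shift $i\mapsto i+1$ acting simultaneously on $\delta\phi_i$ and on $\delta\kappa_{i(i+m)}$ (for each fixed $m$). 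The representation theory of the cyclic group then guarantees that the Fourier transform splits the system into the $N$ isotypic components indexed by $l$, which is exactly the block form \eqref{eq:LinearizedOneCl_BlockForm}.

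Concretely, I would introduce the coordinates
\begin{align*}
\delta\psi_l=\sum_{i}e^{-\mathrm{i}li2\pi/N}\,\delta\phi_i,\qquad
\delta\zeta_{lm}=\sum_{i}e^{-\mathrm{i}li2\pi/N}\,\delta\kappa_{i(i+m)},\quad l,m=0,\dots,N-1,
\end{align*}
which is an invertible (unitary up to scaling) change of variables on the full $(N+N^2)$-dimensional tangent space. The single identity that does all the work is the shift rule $\sum_i e^{-\mathrm{i}li2\pi/N}\delta\phi_{i+m}=e^{\mathrm{i}lm2\pi/N}\delta\psi_l$. Applying the transform to \eqref{eq:Linearized_OneCl_phi}, each difference $\delta\phi_i-\delta\phi_{i+m}$ collapses to $(1-e^{\mathrm{i}lm2\pi/N})\delta\psi_l$, producing the scalar coefficient $\hat\lambda_l$, while the coupling term $-\tfrac{1}{N}\sum_m\sin(a_i-a_{i+m}+\alpha)\delta\kappa_{i(i+m)}$ becomes $\sum_m b_m\delta\zeta_{lm}$ with $b_m=-\tfrac{1}{N}\sin(\alpha-mk2\pi/N)=\tfrac{1}{N}\sin(mk2\pi/N-\alpha)$, matching \eqref{eq:LinearizedOneCl_b} entry by entry. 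Transforming \eqref{eq:Linearized_OneCl_kappa} for each fixed $m$ leaves $m$ untouched, so the $\delta\zeta_{lm}$ never mix: one gets $\delta\dot\zeta_{lm}=-\epsilon\delta\zeta_{lm}-\epsilon\cos(a_i-a_{i+m}+\beta)(1-e^{\mathrm{i}lm2\pi/N})\delta\psi_l$, which exhibits the diagonal block $-\epsilon\mathbb{I}_N$ together with a coupling column proportional to $\cos(a_i-a_{i+m}+\beta)(1-e^{\mathrm{i}lm2\pi/N})$ with vanishing $m=0$ entry, i.e. the column $c_l$ of \eqref{eq:LinearizedOneCl_c}. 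Assembling the $\delta\psi_l$ and $\delta\zeta_{lm}$ equations then gives precisely $C_l$.

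The only genuinely computational step is to bring $\hat\lambda_l=\tfrac{1}{N}\sum_m\sin(a_i-a_{i+m}+\beta)\cos(a_i-a_{i+m}+\alpha)(1-e^{\mathrm{i}lm2\pi/N})$ into the closed form \eqref{eq:LinearizedOneCl_lambdabar}. Here I would apply the product-to-sum identity $\sin(\beta-\theta_m)\cos(\alpha-\theta_m)=\tfrac12\sin(\beta-\alpha)+\tfrac12\sin(\alpha+\beta-2\theta_m)$ with $\theta_m=mk2\pi/N$. The constant part, summed against $(1-e^{\mathrm{i}lm2\pi/N})$, gives $\tfrac12(Z_1(\mathbf{a}_l)-1)\sin(\alpha-\beta)$ once one recognizes $\tfrac{1}{N}\sum_m e^{\mathrm{i}ml2\pi/N}=Z_1(\mathbf{a}_l)$. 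The oscillating part is the heart of the matter: writing $\sin(\alpha+\beta-2\theta_m)$ through $e^{\pm2\mathrm{i}\theta_m}$ and multiplying by $1$ and by $-e^{\mathrm{i}lm2\pi/N}$ produces Fourier sums at the beat wave numbers $2k$ and $2k\pm l$; these are exactly $Z_2(\mathbf{a}_k)$ and $Z_1(\mathbf{a}_{2k\pm l})$, and collecting real and imaginary parts reproduces the remaining terms of \eqref{eq:LinearizedOneCl_lambdabar}, including the prefactors $\tfrac14\,\mathrm{i}e^{\pm\mathrm{i}(\alpha+\beta)}$.

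I expect the main obstacle to be purely this bookkeeping in $\hat\lambda_l$: keeping the two shifted wave numbers $2k-l$ and $2k+l$ apart, tracking the complex conjugation that turns $\tfrac{1}{N}\sum_m e^{\mathrm{i}m(l-2k)2\pi/N}$ into $\overline{Z}_1(\mathbf{a}_{2k-l})$, and correctly distributing the factors $\mathrm{i}e^{\pm\mathrm{i}(\alpha+\beta)}$ over the real/imaginary decomposition. Everything else — invertibility of the transform, the decoupling into $N$ non-interacting blocks, and the identification of $b$ and $c_l$ — follows immediately and mechanically from the shift rule together with the $i$-independence of the coefficients established in the first step.
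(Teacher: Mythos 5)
Your proposal is correct and is essentially the paper's own argument: the paper performs the same discrete Fourier decomposition (written as the synthesis ansatz $\delta\phi_j=\sum_{l}e^{\mathrm{i}lj\frac{2\pi}{N}}\delta\psi_l$ rather than your forward transform, which is the same change of variables up to inversion), exploits the same $i$-independence of the coefficients for $\mathbf{a}_k$ to decouple the $N$ blocks, and obtains $\hat\lambda_l$ by exactly the product-to-sum and order-parameter bookkeeping you describe, with the beat wave numbers $2k$ and $2k\pm l$ producing $Z_2(\mathbf{a}_k)$ and $Z_1(\mathbf{a}_{2k\pm l})$. Your observation that the column coupling $\delta\psi_l$ into the $\delta\zeta_{lm}$-equations is only \emph{proportional} to $c_l$ (it carries a factor $-\epsilon$) is accurate and in fact slightly more careful than the paper, which silently drops that factor when it reads off $c_l$ and reinstates it in the characteristic polynomial in the proof of Proposition~\ref{prop:LinerizedOneCluster_RW_Spectrum}.
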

\begin{proof}
Due to the cyclic structure in the equations (\ref{eq:Linearized_OneCl_phi})
and (\ref{eq:Linearized_OneCl_kappa}) it is possible to decouple
them using a discrete Fourier ansatz~\cite{PER10c}
\begin{align*}
\delta\phi_{j} & =\sum_{l=0}^{N-1}e^{\mathrm{i}lj\frac{2\pi}{N}}\delta\psi_{l},\\
\delta\kappa_{j(j+m)} & =\sum\limits _{l=0}^{N-1}e^{\mathrm{i}lj\frac{2\pi}{N}}\delta\zeta_{lm}.
\end{align*}
 Taking this Fourier ansatz and plugging it into the equations (\ref{eq:Linearized_OneCl_phi})
and (\ref{eq:Linearized_OneCl_kappa}) we get
\begin{align*}
\sum_{l=0}^{N-1}e^{\mathrm{i}lj\frac{2\pi}{N}}\dot{\delta\psi}_{l}= & \frac{1}{N}\sum_{m=0}^{N-1}\sin(-mk\frac{2\pi}{N}+\beta)\cos(-mk\frac{2\pi}{N}+\alpha)\sum_{l=0}^{N-1}e^{\mathrm{i}lj\frac{2\pi}{N}}\left(1-e^{\mathrm{i}lm\frac{2\pi}{N}}\right)\delta\psi_{l}\\
 & -\frac{1}{N}\sum_{m=0}^{N-1}\sin(-mk\frac{2\pi}{N}+\alpha)\sum\limits _{l=0}^{N-1}e^{\mathrm{i}lj\frac{2\pi}{N}}\delta\zeta_{lm},\\
\sum\limits _{l=0}^{N-1}e^{\mathrm{i}lj\frac{2\pi}{N}}\dot{\delta\zeta}_{lm} & =-\epsilon \sum\limits _{l=0}^{N-1} \left(e^{\mathrm{i}lj\frac{2\pi}{N}}\delta\zeta_{lm}+\cos(-mk\frac{2\pi}{N}+\beta)e^{\mathrm{i}lj\frac{2\pi}{N}}\left(1-e^{\mathrm{i}lm\frac{2\pi}{N}}\right)\delta\psi_{l}\right).
\end{align*}
After making use of well known trigonometric identities and using
the order parameters defined in (\ref{eq:Nth_OrderParameter}) we
find
\begin{align*}
\hat{{\lambda}}_{l} & =\frac{1}{2N}\sum_{m=0}^{N-1}\left(\sin(-\frac{4\pi}{N}mk+\alpha+\beta)-\sin(\alpha-\beta)\right)\left(1-\cos(lm\frac{2\pi}{N})-\mathrm{i}\sin(lm\frac{2\pi}{N})\right)\\
 & =\frac{1}{2}\left((Z_{1}(\mathbf{a}_{l})-1)\sin(\alpha-\beta)-\Im(Z_{2}(\mathbf{a}_{k}))\cos(\alpha+\beta)+\Re(Z_{2}(\mathbf{a}_{k}))\sin(\alpha+\beta)\right)\\
 & +\frac{1}{4}\left(\overline{Z}_{1}(\mathbf{a}_{2k-l})\mathrm{i}e^{\mathrm{i}(\alpha+\beta)}-Z_{1}(\mathbf{a}_{2k+l})\mathrm{i}e^{-\mathrm{i}(\alpha+\beta)}\right).
\end{align*}
The row and the column vectors $b_{l}$ and $c_{l}$ can directly
be read of from the transformed equation above.
\end{proof}
Note that the values $\hat{{\lambda}}_{l}$ are exactly the eigenvalues
for the case where no interaction between the oscillators and their
coupling are assumed or the dynamics of the coupling weights are left
constant. One might expect that due to the slow-fast dynamics of the
system (\ref{eq:PhiDGL_general})\textendash ~(\ref{eq:KappaDGL_general})
a small perturbation in the coupling weights could be neglected for
the analysis of stability \cite{AOK11}. In contrast, we show that
the local dynamics of the system around the one-cluster solution depends
on the interplay between phases and couplings.
\begin{proposition}
\label{prop:LinerizedOneCluster_RW_Spectrum}Suppose $\mathbf{a}_{k}=(0,\frac{2\pi}{N}k,\dots,(N-1)\frac{2\pi}{N}k)^{T}$
and the linear system around the one-cluster solution $\bm{\phi}=\Omega t\cdot(1,\dots,1)^{T}+\mathbf{a}_{k}$
is given by (\ref{eq:Linearized_OneCl_phi})\textendash (\ref{eq:Linearized_OneCl_kappa}).
Then the Jacobian $J$ of this linearized system possesses the following spectrum
\begin{align*}
\sigma(J) & =\left\{ -\epsilon,(\lambda_{l;1,2})_{l=0,\dots,N-1}\right\} 
\end{align*}
 with
\begin{align}
\lambda_{l;1,2} & =\frac{\hat{{\lambda}}_{l}-\epsilon}{2}\pm\frac{1}{2}\sqrt{(\hat{{\lambda}}_{l}+\epsilon)^{2}+4\epsilon\left(b\cdot c\right)_{l}}\label{eq:NonTrivialEVs_RW}
\end{align}
 and $\hat{{\lambda}}_{l}$, $b_{l}$, $c_{l}$ as defined in (\ref{eq:LinearizedOneCl_lambdabar}),
(\ref{eq:LinearizedOneCl_b}) and (\ref{eq:LinearizedOneCl_c}).
\end{proposition}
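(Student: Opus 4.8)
The plan is to leverage the block decomposition already established in Lemma~\ref{lem:LinearizedOneCluster_BlockForm}. Because the linearization around the rotating-wave one-cluster solution splits, in the Fourier coordinates $(\delta\psi,\delta\zeta)$, into the $N$ mutually decoupled subsystems $\dot{x}_l = C_l x_l$, $l=0,\dots,N-1$, the Jacobian $J$ is similar to the block-diagonal matrix $\mathrm{diag}(C_0,\dots,C_{N-1})$. Consequently $\sigma(J)=\bigcup_{l=0}^{N-1}\sigma(C_l)$, and it suffices to determine the eigenvalues of each individual $(N+1)\times(N+1)$ block $C_l$ and then take their union. So the whole task reduces to computing $\det(C_l-\lambda\mathbb{I}_{N+1})$ for each $l$.

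To evaluate this determinant I would apply Schur's complement formula from \eqref{eq:SchurComplement}, identifying the scalar top-left entry $A=\hat{\lambda}_l-\lambda$, the row block $B=b$, the column block built from $c_l$, and the lower-right block $D=-(\epsilon+\lambda)\mathbb{I}_N$. For $\lambda\neq-\epsilon$ the block $D$ is invertible with $D^{-1}=-(\epsilon+\lambda)^{-1}\mathbb{I}_N$, so the Schur complement collapses to the scalar $(\hat{\lambda}_l-\lambda)+\epsilon(\epsilon+\lambda)^{-1}(b\cdot c)_l$, where the factor $\epsilon$ is inherited from the $\epsilon$-prefactor of the $\delta\kappa$-equation \eqref{eq:Linearized_OneCl_kappa}. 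Using $\det(M)=\det(A-BD^{-1}C)\det(D)$ together with $\det(D)=(-1)^N(\epsilon+\lambda)^N$ then yields the factorization
\begin{align*}
\det(C_l-\lambda\mathbb{I}_{N+1})=(-1)^N(\epsilon+\lambda)^{N-1}\bigl[(\hat{\lambda}_l-\lambda)(\epsilon+\lambda)+\epsilon(b\cdot c)_l\bigr].
\end{align*}

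This factorization is the heart of the argument. The factor $(\epsilon+\lambda)^{N-1}$ contributes the eigenvalue $-\epsilon$ with multiplicity $N-1$ in each block, while the bracketed quadratic $(\hat{\lambda}_l-\lambda)(\epsilon+\lambda)+\epsilon(b\cdot c)_l=0$ supplies the two remaining roots. Expanding the quadratic to $\lambda^2-(\hat{\lambda}_l-\epsilon)\lambda-\hat{\lambda}_l\epsilon-\epsilon(b\cdot c)_l=0$ and applying the quadratic formula, the discriminant simplifies to $(\hat{\lambda}_l-\epsilon)^2+4\hat{\lambda}_l\epsilon+4\epsilon(b\cdot c)_l=(\hat{\lambda}_l+\epsilon)^2+4\epsilon(b\cdot c)_l$, which reproduces exactly the expression $\lambda_{l;1,2}$ in \eqref{eq:NonTrivialEVs_RW}. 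Collecting roots over all blocks gives $N(N-1)$ copies of $-\epsilon$ plus the $2N$ values $\lambda_{l;1,2}$, totalling $N^2+N$; this dimension count confirms that the spectrum is exhausted and equals $\{-\epsilon,(\lambda_{l;1,2})_{l=0,\dots,N-1}\}$ as a set.

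I expect the main obstacle to be bookkeeping rather than anything conceptual. One must track carefully the sign and the $\epsilon$-scaling carried by the lower-left block of $C_l$ so that the discriminant collapses to precisely $(\hat{\lambda}_l+\epsilon)^2+4\epsilon(b\cdot c)_l$ and not to a form with the wrong sign or a missing factor of $\epsilon$; the trigonometric simplifications leading to $\hat{\lambda}_l$ in \eqref{eq:LinearizedOneCl_lambdabar} must be used consistently with the scalar $(b\cdot c)_l$. One must also treat the degenerate value $\lambda=-\epsilon$, at which $D$ is singular and the Schur reduction formally breaks down, as a separate case: the explicit factorization above, valid for $\lambda\neq-\epsilon$ and extended to all $\lambda$ by polynomial identity, shows directly that $-\epsilon$ is a genuine eigenvalue of the stated multiplicity rather than an artifact of the reduction.
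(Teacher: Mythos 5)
Your proposal is correct and follows essentially the same route as the paper's own proof: decompose via Lemma~\ref{lem:LinearizedOneCluster_BlockForm} into the blocks $C_l$, factor each characteristic polynomial with the Schur complement \eqref{eq:SchurComplement} as $(\epsilon+\lambda)^{N-1}\bigl[(\epsilon+\lambda)(\lambda-\hat{\lambda}_l)-\epsilon(b\cdot c)_l\bigr]$, and solve the resulting quadratic to obtain \eqref{eq:NonTrivialEVs_RW}. Your two additions --- extending the factorization to $\lambda=-\epsilon$ by polynomial identity and the $N^2+N$ eigenvalue count --- are minor but welcome tightenings of steps the paper leaves implicit.
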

\begin{proof}
Using Lemma~\ref{lem:LinearizedOneCluster_BlockForm} we decompose
the linear system (\ref{eq:Linearized_OneCl_phi})\textendash (\ref{eq:Linearized_OneCl_kappa})
into the $N$ blocks~(\ref{eq:LinearizedOneCl_BlockForm}).
Consider now the characteristic polynomial for the $(N+1)\times(N+1)$
matrix $C_{l}$ and assume that $\lambda_{l}\ne-\epsilon$ then by~\eqref{eq:SchurComplement} we obtain
\begin{align*}
\det(\lambda_{l}\mathbb{I}_{N+1}-C_{l})=\det\left(\begin{matrix}\lambda_{l}-\hat{{\lambda}}_{l} & -b_{l}\\
-\epsilon c_{l} & (\epsilon+\lambda_{l})\mathbb{I}_{N}
\end{matrix}\right)\\
=(\epsilon+\lambda_{l})^{N-1}\left((\epsilon+\lambda_{l})(\lambda_{l}-\hat{{\lambda}}_{l})-\epsilon\left(b\cdot c\right)_{l}\right)=0.
\end{align*}
Thus for each $l\in{0,\dots,N-1}$ there are $N-1$ eigenvalues $\lambda_{l}=-\epsilon$.
For the two remaining eigenvalues we have to solve the quadratic equation
\begin{align}
\lambda_{l}^{2}+(\epsilon-\hat{{\lambda}}_{l})\lambda_{l} & -\epsilon\hat{{\lambda}}_l-\epsilon\left(b\cdot c\right)_{l}=0.\label{eq:LinearizedOneCl_EV}
\end{align}
\end{proof}
In the case of no weight dynamics or no coupling between the oscillators
and the weights the eigenvalues would read $\lambda_{l;1}=\hat{{\lambda}}_{l}$
and $\lambda_{l;2}=-\epsilon$. Therefore, the spectrum would look
like $\sigma_{c}=\{-\epsilon,(\hat{{\lambda}}_{l})_{l=0,\dots,N-1}\}$
with $(N-1)N$-fold multiplicity for the eigenvalue $-\epsilon.$
In contrast to that, we get in general $2N$ eigenvalues that are
different from $-\epsilon$ which stem from the interplay of phases
and coupling weights. We should further mention that $\hat{{\lambda}}_{l}(\alpha+\frac{\pi}{2},\beta-\frac{\pi}{2})=(b\cdot c)_{l}(\alpha,\beta)$.
With this we write equation (\ref{eq:LinearizedOneCl_EV}) as
\begin{align*}
\lambda_{l}^{2}(\alpha,\beta)+\left(\epsilon-\hat{{\lambda}}_{l}(\alpha,\beta)\right)\lambda_{l}-\epsilon\left(\hat{{\lambda}}_{l}(\alpha,\beta)+\hat{{\lambda}}_{l}(\alpha-\frac{\pi}{2},\beta+\frac{\pi}{2})\right) & =0.
\end{align*}
The following corollary summarizes the results on the spectrum of
the linearized system~(\ref{eq:Linearized_OneCl_phi})\textendash (\ref{eq:Linearized_OneCl_kappa}).
\begin{corollary}
\label{cor:LinerizedOneCluster_RW_Spectrum}Suppose we have $\mathbf{a}_{k}=(0,\frac{2\pi}{N}k,\dots,(N-1)\frac{2\pi}{N}k)^{T}$
and the linear system (\ref{eq:Linearized_OneCl_phi})\textendash (\ref{eq:Linearized_OneCl_kappa})
then
\begin{enumerate}
\item (in-phase and anti-phase synchrony) if $k=0$ or $k=N/2$, the spectrum
is given by
\[
\sigma(C)=\left\{ \left(0\right)_{\text{1}},\left(-\epsilon\right)_{(N-1)N+1},\left(\lambda_{1}\right)_{N-1},\left(\lambda_{2}\right)_{N-1}\right\} 
\]
where $\lambda_{1}$ and $\lambda_{2}$ solve $\lambda^{2}+\left(\epsilon-\cos(\alpha)\sin(\beta)\right)\lambda-\epsilon\sin(\alpha+\beta)=0,$
\item (incoherent rotating-wave) if $k\ne0,N/2,N/4,3N/4,$ the spectrum
is 
\begin{multline*}
\sigma(C) =\left\{ \left(0\right)_{N-2},\left(-\epsilon\right)_{(N-1)N+1},\left(-\frac{\sin(\alpha-\beta)}{2}-\epsilon\right)_{N-3},\right.\\
\left.\left(\vartheta_{1}\right)_{1},\left(\vartheta_{2}\right)_{1},\left(\overline{\vartheta}_{1}\right)_{1},\left(\overline{\vartheta}_{2}\right)_{1}\right\} 
\end{multline*}
where $\vartheta_{1}$ and $\vartheta_{2}$ solve $\vartheta^{2}+\left(\epsilon+\frac{1}{2}\sin(\alpha-\beta)-\frac{1}{4}\mathrm{i}e^{\mathrm{i}(\alpha+\beta)}\right)\vartheta-\frac{\epsilon}{2}\mathrm{i}e^{\mathrm{i}(\alpha+\beta)}=0,$
\item (4-rotating-wave solution) if $k=N/4,3N/4$, the spectrum is 
\[
\sigma(C)=\left\{ \left(0\right)_{N-1},\left(-\epsilon\right)_{(N-1)N+1},\left(-\frac{\sin(\alpha-\beta)}{2}-\epsilon\right)_{N-2},\left(\lambda_{1}\right)_{1},\left(\lambda_{2}\right)_{1}\right\} 
\]
where $\lambda_{1}$ and $\lambda_{2}$ solve $\lambda^{2}+\left(\epsilon+\sin(\alpha)\cos(\beta)\right)\lambda+\epsilon\sin(\alpha+\beta)=0.$
\end{enumerate}
Here, the multiplicities for each eigenvalue are given as lower case
labels.
\end{corollary}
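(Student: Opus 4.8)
The plan is to derive the corollary directly from Proposition~\ref{prop:LinerizedOneCluster_RW_Spectrum}, which already reduces the $(N+N^2)$-dimensional spectrum to the $(N-1)N$ automatic eigenvalues $-\epsilon$ (one batch of $N-1$ from every block $C_l$) plus the $2N$ roots of the block quadratics \eqref{eq:LinearizedOneCl_EV}. Hence everything reduces to evaluating, for each $l\in\{0,\dots,N-1\}$, the coefficients $\hat{\lambda}_l$ and $(b\cdot c)_l$ on the rotating wave $\mathbf{a}_k=(0,2\pi k/N,\dots,(N-1)2\pi k/N)^T$ and then solving \eqref{eq:LinearizedOneCl_EV}. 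I would not recompute the linearization; I would simply feed the special phase configurations into the formulas of Lemma~\ref{lem:LinearizedOneCluster_BlockForm}.

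The computational engine is the geometric-sum evaluation of the order parameter: for a rotating wave one has $Z_n(\mathbf{a}_k)=1$ when $N$ divides $nk$ and $Z_n(\mathbf{a}_k)=0$ otherwise. In particular $Z_1(\mathbf{a}_l)=1$ iff $l\equiv 0$, $Z_1(\mathbf{a}_{2k\mp l})=1$ iff $l\equiv\pm 2k\ (\mathrm{mod}\ N)$, and $Z_2(\mathbf{a}_k)=1$ iff $k\in\{0,N/2\}$. Substituting these $0/1$ values into \eqref{eq:LinearizedOneCl_lambdabar} collapses $\hat{\lambda}_l$ to a handful of closed forms, and $(b\cdot c)_l$ comes for free from the relation $\hat{\lambda}_l(\alpha+\tfrac{\pi}{2},\beta-\tfrac{\pi}{2})=(b\cdot c)_l(\alpha,\beta)$ noted just before the corollary. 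At any ``plain'' index, one where none of $Z_1(\mathbf{a}_l)$, $Z_1(\mathbf{a}_{2k-l})$, $Z_1(\mathbf{a}_{2k+l})$, $Z_2(\mathbf{a}_k)$ equals $1$, the coefficients reduce to $\hat{\lambda}_l=-\tfrac12\sin(\alpha-\beta)$ and $(b\cdot c)_l=+\tfrac12\sin(\alpha-\beta)$, so the constant term of \eqref{eq:LinearizedOneCl_EV} cancels and the block supplies exactly the pair $\{0,\,-\tfrac12\sin(\alpha-\beta)-\epsilon\}$. This single observation produces the bulk zero eigenvalues and the bulk $-\tfrac12\sin(\alpha-\beta)-\epsilon$ eigenvalues in cases (2) and (3).

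The three-way split is then dictated purely by coincidences among $0$, $2k$, and $-2k$ modulo $N$. If $2k\equiv 0$, that is $k\in\{0,N/2\}$, then $Z_2(\mathbf{a}_k)=1$ for all $l$ and the only resonant index is $l=0$: the block $l=0$ gives $\hat{\lambda}_0=0$ and the exact pair $\{0,-\epsilon\}$, while each of the $N-1$ blocks $l\neq 0$ yields $\hat{\lambda}_l=\cos\alpha\sin\beta$ and hence the quadratic $\lambda^2+(\epsilon-\cos\alpha\sin\beta)\lambda-\epsilon\sin(\alpha+\beta)=0$, giving $\lambda_{1,2}$ each with multiplicity $N-1$. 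If $4k\equiv 0$ but $2k\not\equiv 0$, that is $k\in\{N/4,3N/4\}$, the two resonant indices merge into the single block $l=N/2$, where $\hat{\lambda}_{N/2}=-\sin\alpha\cos\beta$ and $(b\cdot c)_{N/2}=-\cos\alpha\sin\beta$ produce $\lambda^2+(\epsilon+\sin\alpha\cos\beta)\lambda+\epsilon\sin(\alpha+\beta)=0$. Otherwise $0,2k,-2k$ are three distinct indices; the blocks $l=\pm 2k$ are complex conjugates of one another (since the order-parameter values give $\hat{\lambda}_{-l}=\overline{\hat{\lambda}_l}$), contributing the quadruple $\vartheta_{1,2},\overline{\vartheta}_{1,2}$ from $\vartheta^2+(\epsilon+\tfrac12\sin(\alpha-\beta)-\tfrac14\mathrm{i}e^{\mathrm{i}(\alpha+\beta)})\vartheta-\tfrac{\epsilon}{2}\mathrm{i}e^{\mathrm{i}(\alpha+\beta)}=0$. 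In each regime I would tabulate how many blocks are plain, resonant, or equal to $l=0$, and read off the stated multiplicities.

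The main obstacle is bookkeeping rather than analysis: one must count exactly how many of the $N$ indices are resonant in each regime so that the reported tallies ($N-2$ and $N-3$ zeros, $N-3$ or $N-2$ copies of $-\tfrac12\sin(\alpha-\beta)-\epsilon$) come out right, and must carefully track the extra single $-\epsilon$ and the extra exact zero contributed by the block $l=0$ (whose quadratic is always $\lambda^2+\epsilon\lambda=0$) on top of the $(N-1)N$ automatic $-\epsilon$'s. The delicate step is verifying that $2k\equiv -2k$ precisely when $k\in\{N/4,3N/4\}$ and that both collide with $0$ precisely when $k\in\{0,N/2\}$, since these are exactly the degeneracies that move eigenvalues between the bulk and the special quadratics; the conjugate symmetry $\hat{\lambda}_{-l}=\overline{\hat{\lambda}_l}$ is the clean check that the full spectrum is closed under complex conjugation and thus forces the resonant eigenvalues to appear in the quadruple $\vartheta_{1,2},\overline{\vartheta}_{1,2}$ in the generic case.
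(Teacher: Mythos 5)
Your proposal is correct and is essentially the argument the paper intends: the corollary is a direct consequence of Proposition~\ref{prop:LinerizedOneCluster_RW_Spectrum}, obtained by evaluating $\hat{\lambda}_l$ and $(b\cdot c)_l$ via the geometric-sum values of the order parameters on $\mathbf{a}_k$ (using the relation $\hat{\lambda}_{l}(\alpha+\tfrac{\pi}{2},\beta-\tfrac{\pi}{2})=(b\cdot c)_{l}(\alpha,\beta)$) and then splitting into cases according to the coincidences of $0$, $2k$, $-2k$ modulo $N$. Your block-by-block bookkeeping, including the special block $l=0$ with quadratic $\lambda^2+\epsilon\lambda=0$ and the conjugate pairing of the blocks $l=\pm 2k$, reproduces exactly the stated multiplicities and quadratics in all three cases.
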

As we can see from this corollary there exists always at least one
zero eigenvalue. This is due to the phase-shift symmetry of
(\ref{eq:PhiDGL_general})\textendash (\ref{eq:KappaDGL_general})
we already discussed in section~\ref{sec:Introduction}. The additional
zero eigenvalues for the wave numbers $k\ne0,N/2$ can be explained
with our findings from Proposition~\ref{prop:1ClusterSolutions}
and Corollary~\ref{cor:1ClusterRWSol}. These linear rotating-wave
solutions belong to a $N-2$ dimensional family of solutions characterized
by $R_{2}(\mathbf{a})=0$. Thus, around any point of this family the
linear equation
\begin{align}
\sum_{j=1}^{N} & e^{i2a_{j}}\delta\phi_{j}=0\label{eq:RotWave_Manifold_LocalCondition}
\end{align}
holds for a certain choice of coordinates $\left\{ \delta\phi,\delta\kappa_{ij}\right\} $,
and hence there are two linearly independent equations for the infinitesimal
perturbations $\delta\phi_{i}$. This explains the appearance of $N-2$
zero eigenvalues. They correspond to the variation along the manifold
of solution. In the special case of $k=N/4,3N/4$ the two algebraic
equations~(\ref{eq:RotWave_Manifold_LocalCondition}) are linear
dependent and we are thus left with only one linear equation which
increases the multiplicity of the zero eigenvalue by one.
The results of Corollary~\ref{cor:LinerizedOneCluster_RW_Spectrum}
are presented in Fig.~\ref{fig:Stability_RW_AP_1Cl}(a\textendash c) and compared with numerical simulations.
The numerical results are obtained by numerical integration of system~(\ref{eq:PhiDGL_general})\textendash (\ref{eq:KappaDGL_general}) with $N=20$. The initial conditions for each simulation are set to the one-cluster
solution given in Proposition~\ref{prop:1ClusterSolutions} with a small perturbation
added to each dynamical variable and randomly chosen from the interval
$[-0.01,0.01]$. The numerical integration is stopped after $t=5000$ time steps. The relative coordinates $\Theta_i:=\phi_i-\phi_1$ for $i=1,\dots,N$ are introduced in order to compare the initial phase configuration with the distribution of the phases after numerical integration. A one-cluster is said to be stable if $\bm{\Theta}$ after numerical integration is closer to the theoretical one-cluster state than $\bm{\Theta}$ of the initially perturbed phase distribution. Closeness is measured by the Euclidean distance. Otherwise, the one-cluster solution is considered as unstable. The parameter regions in the $(\alpha,\beta)$ plane for stable one-cluster
solutions are coloured blue while the regions for unstable one-cluster solutions
are coloured yellow. The black dashed lines correspond to the borders of stability determined with the results in  Corollary~\ref{cor:LinerizedOneCluster_RW_Spectrum}. In particular, a state is asymptotically stable if $\Re(\lambda)<0$ for  all $\lambda\in\sigma(C)$ except the zero eigenvalues related to the perturbations along the solution families.
In all three cases the numerical and analytic results agree very well.
\begin{figure}
	\begin{center}
		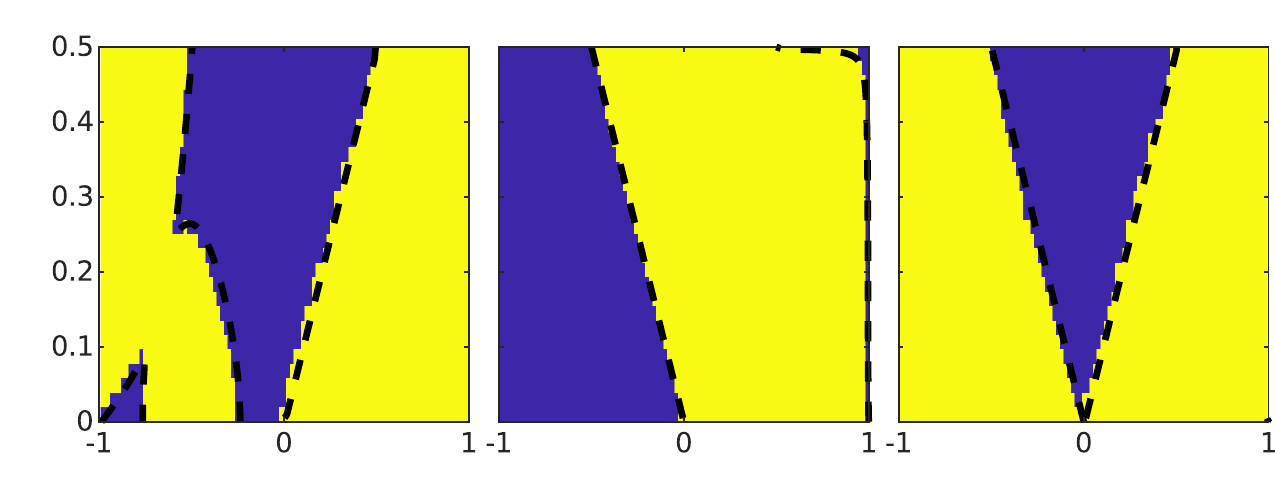
	\end{center}%
\caption{\label{fig:Stability_RW_AP_1Cl}Stability diagrams for rotating-wave
clusters depending on the parameters $\alpha$ and $\beta$ are shown. The regions are coloured according to numerical simulation. Blue regions correspond to stable solutions while yellow regions correspond to unstable solutions. The black dashed lines show to the borders of stability determined by Corollary~\ref{cor:LinerizedOneCluster_RW_Spectrum}. Parameter $\epsilon=0.01$ is fixed fo all simulations. (a) $k=1$, (b) $k=N/2$, (c) $k=N/4$} 
\end{figure}

In addition to the analysis of rotating-wave solutions, we investigate
the stability for the splay solutions characterized by $R(\mathbf{a})=0$
and the antipodal solutions characterized by $R(\mathbf{a})=1$.
For this, the stability is calculated by taking the solutions displayed
in Fig.~\ref{fig:1Cl_Illustration}(a\textendash b), plugging them into the Jacobian matrix given by the linearized equations~\ref{eq:Linearized_OneCl_phi}\textendash \ref{eq:Linearized_OneCl_kappa}
and determining the eigenvalues of the Jacobian numerically. The results of this procedure
are shown in Fig.~\ref{fig:Stability_type1_type2_1Cl} together with the borders of stability calculated with Corollary~\ref{cor:LinerizedOneCluster_RW_Spectrum}. In comparison
with Fig.~\ref{fig:Stability_RW_AP_1Cl}, the analysis yields the
same stability regions which are found for the rotating-wave solutions.
The numerical findings indicate that the stability for all splay and antipodal solutions coincide with the stability
of the rotating-waves.
\begin{figure}
	\begin{center}
		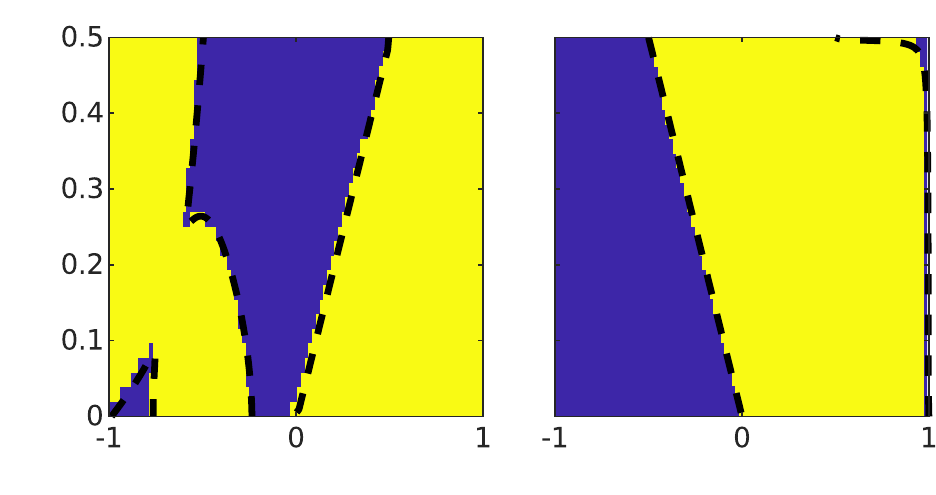
	\end{center}%
\caption{\label{fig:Stability_type1_type2_1Cl}Stability diagram for splay and antipodal one-cluster solutions depending on the parameters
$\alpha$ and $\beta$ are shown. The regions are coloured according numerical eigenvalues of the Jacobian~\ref{eq:Linearized_OneCl_phi}\textendash \ref{eq:Linearized_OneCl_kappa}. Blue areas correspond to
stable while yellow areas correspond to unstable regions. Parameter $\epsilon=0.01$ is fixed in all simulations. (a) Splay solution as
in Fig.~\ref{fig:1Cl_Illustration}(a), (b) Anti-phase solution as in
Fig.~\ref{fig:1Cl_Illustration}(b).}
\end{figure}

\subsection{Stability of multi-cluster solutions}

In section~\ref{subsec:MC_RotWave} we discussed multi-cluster solutions
of splay type and showed under which condition they exist.
The solutions for two-cluster solutions of splay type and their
stability are presented in Fig.~\ref{fig:2Cl_RotWave_OmegaSols}.
In Fig.~\ref{fig:2Cl_RotWave_OmegaSols}(b) the solution for the
case of $50$ oscillators is shown. The solid lines (blue) correspond
to solutions that are stable. It can be seen that whenever a 2-cluster
solution is stable the one-cluster solution (with $\Delta\Omega_{12}=0$)
is also stable. A more detailed validation of this statement is presented
in Fig.~\ref{fig:2Cl_vs_1Cl_RW_stability}, where we show the stability
regions of both one- and two-cluster solutions in the $(\alpha,\beta)$ plane. The stability for each type of cluster solution is determined numerically. The numerical approach was already introduced in section~\ref{subsec:stab-One-cluster-rotating-wave}. For the two-cluster solutions the norm for the phase configuration is calculated in the relative coordinates given by $\Theta_{i,\mu}=\phi_{i,\mu}-\phi_{1,\mu}$ with $\mu=1,2$. Additionally, we calculated the maximal value of all inter-cluster connections and compared it to the theoretical maximum given by $\rho_{12}$. If after numerical integration the maximal inter-cluster coupling is bigger than $\rho_{12}+0.01$, the two-cluster is considered as unstable.
Here, region where the both types of solutions are stable are colored
in dark blue. Regions of only stable one-cluster solutions are colored
in light blue. Since two-cluster solutions do not exist for certain values
of $\alpha$ and $\beta$, we can find a light blue stripe in the
middle of Fig.~\ref{fig:2Cl_vs_1Cl_RW_stability}. Further, we have
not found any configuration of $\alpha$ and $\beta$ for which two-cluster
solutions are stable and one-cluster are not. This supports the claim
that the stability of a one-cluster solution is necessary condition for
the stability of a two-cluster solution. This can be explained by the
fact that for the stability of the multi-clusters, it is necessary
that its one-cluster components are each stable with respect to the
perturbations that disturb the structure of just one cluster (see
similarly in \cite{LUE12a}). A more rigorous formulation of this
issue is beyond the scope of this paper.

Figure~\ref{fig:2Cl_RotWave_OmegaSols}(b) further provides us with
information about the stability of two-cluster solutions depending on
the ratio between cluster sizes. First, due to~(\ref{eq:2Cluster_OmegaDiff})
there exist two branches of two-cluster solution of splay
type. Only solutions with higher frequency difference are stable which
can be seen in the inset of Fig.~\ref{fig:2Cl_RotWave_OmegaSols}(b).
For an increasing number of oscillators in the second cluster of relative size $n_{2}=1-n_{1}$
the stability changes. Above a certain value of $n_{2}$ both branches
are unstable. This observation explains why only multi-cluster solutions with
unequal as well as hierarchical cluster sizes were found in simulations,
see Fig.~(\ref{fig:3Cl_RotWave}) and \cite{KAS17}.
\begin{figure}
	\begin{center}
		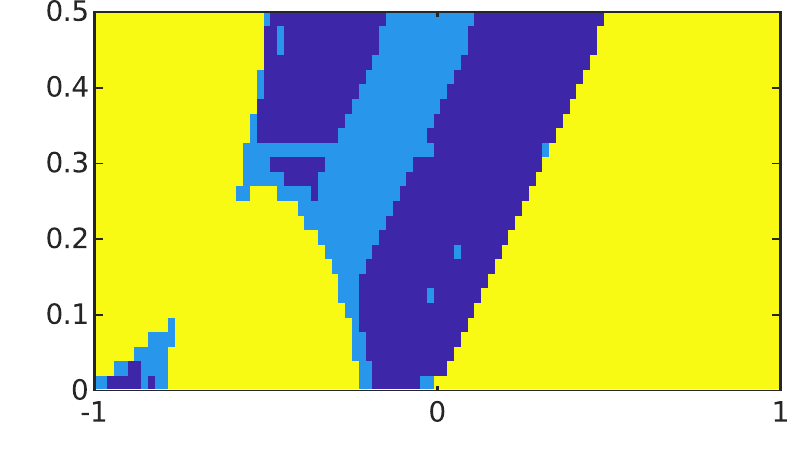
	\end{center}%
\caption{\label{fig:2Cl_vs_1Cl_RW_stability}Stability diagram for the one-cluster
and two-cluster solution of the splay type depending on the
parameters $\alpha$ and $\beta$. Yellow region corresponds to the
instability of both solutions, dark blue to the stability of
both solutions, and light-blue to the stability of only the one-cluster solution. Parameter $\epsilon=0.01$ is fixed in all simulations.}
\end{figure}

\section{Conclusions\label{sec:conclusion}}

In summary, this work provides a wide-ranging analysis of multi-cluster
solutions in networks of adaptively coupled phase oscillators. It covers questions on the existence,
the explicit form as well as the stability of such solutions. It is
well known that under certain conditions systems of coupled oscillators
can be simplified to systems of coupled phase oscillators. In these
cases, the model of phase oscillators captures the underlying dynamics
of complex dynamical systems. In this sense, the considered model
may have multiple applications. In particular, the model which
was in the focus of this work was inspired by dynamical neuronal networks
with adaptive plasticity. Therefore, it may help to understand the
fundamental mechanisms of pattern formation in neuronal systems.

Here, we have focused on multi-cluster solutions, which are composed
of several one-clusters with distinct frequencies. The one-cluster
solutions are shown to be of the following three types: splay type, antipodal type, and a new third type, named double antipodal, see Proposition~\ref{prop:1ClusterSolutions}.
Moreover, it was shown that all one-cluster solutions of splay
type form an $N-2$ dimensional family and thus give rise to infinitely
many solutions the system can achieve. With this, we have generalized and extended the results reported in
\cite{AOK11,GUS15a,NEK16}. 

While the one-cluster solutions are relative equilibria of our system
due to the phase-shift symmetry, the multi-cluster solutions contain
components with different frequencies, and, hence, they cannot be
reduced to an equilibrium by transforming into another co-rotating
frame. As a result, the study of multi-clusters is more involved.
However, to our surprise, we have still been able to find an explicit
form of multi-clusters with the components of the splay type.
Remarkably, in addition to its ring-like spatial structure that dynamically
emerges, the network behaves in such a case (quasi-)periodically in
time such that the whole solution can be interpreted as a spatial-temporal
wave. 

The analysis of multi-cluster solutions of antipodal type is more subtle due to the modulation of the frequency. More specifically, we look at multi-clusters with bounded frequency modulation. For these types of multi-clusters, we derive an asymptotic expansion in the parameter $\epsilon$ that gives explicit existence conditions. 

In addition, we have shown the existence of mixed multi-clusters, which
consist of clusters of splay type and clusters of antipodal type. For the mixed multi-clusters, the temporal behavior
within one cluster has been shown to be slightly non-identical, namely,
the oscillators possess the same averaged frequency, but they still
can have a bounded quasi-periodically modulated phase difference. 

We have been able to study the stability of multi-clusters analytically to some extent,
and otherwise numerically. The main messages from this analysis are
as follows: there is a high degree of coexistence of stable multi-clusters
that can be reached from different initial conditions; in particular,
a certain amount of imbalance in the number of oscillators within
the clusters is needed to achieve stability. This explains the
appearance of only hierarchical structures in numerical simulations.

Moreover, the findings on multi-cluster solutions as they are reported in this article are in very good agreement with previous results on adaptive neural networks~\cite{POP15}. Here, stable multi-cluster solutions of coherently spiking neurons with weak but time-dependent inter-cluster coupling are reported. With this work we shed some light on these generic time-dependent network patterns.

\newpage
\appendix

\section{Proofs of propositions from sections \ref{sec:One-cluster}\textendash \ref{sec:Multi-cluster-states}\label{sec:Proofs}}

\subsection{One-cluster solution}

Here we provide a proof of Proposition \ref{prop:1ClusterSolutions}.
We first need a preliminary lemma. 
\begin{lemma}
\label{lem:R2equal1} For a phase-locked solution $\bm{\phi}(t)$, $R_{2}(\bm{\phi}(t))=1$
for all $t$ if and only if $\bm{\phi}(t)$ is either an in-phase
or an anti-phase synchronous solution.
\end{lemma}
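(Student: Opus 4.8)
The plan is to reduce everything to the equality case of the triangle inequality. The starting observation is that for a phase-locked solution $\phi_i(t)=\Omega t+a_i$ the second-order parameter factorizes,
\[
Z_2(\bm{\phi}(t)) = \frac{1}{N}\sum_{j=1}^N e^{2\mathrm{i}\phi_j(t)} = e^{2\mathrm{i}\Omega t}\,\frac{1}{N}\sum_{j=1}^N e^{2\mathrm{i}a_j},
\]
so that $R_2(\bm{\phi}(t)) = \frac{1}{N}\bigl|\sum_{j=1}^N e^{2\mathrm{i}a_j}\bigr|$ is independent of $t$ (as already recorded in Table~\ref{tab:Nth_OrderParameter_DiffStates}). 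Therefore the hypothesis ``$R_2(\bm{\phi}(t))=1$ for all $t$'' is equivalent to the single algebraic condition $\bigl|\sum_{j=1}^N e^{2\mathrm{i}a_j}\bigr| = N$ on the phase shifts $\mathbf{a}$.

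For the nontrivial (forward) direction I would apply the triangle inequality $\bigl|\sum_{j=1}^N e^{2\mathrm{i}a_j}\bigr| \le \sum_{j=1}^N |e^{2\mathrm{i}a_j}| = N$ and invoke its equality case: since each summand is a unit vector, equality forces all the vectors $e^{2\mathrm{i}a_j}$ to coincide. Thus $2a_j \equiv 2a_1 \pmod{2\pi}$ for every $j$, i.e.\ $a_j \equiv a_1 \pmod{\pi}$. Using the normalization $a_1=0$ adopted in Section~\ref{sec:One-cluster} (permissible by the phase-shift symmetry), this gives $a_j \in \{0,\pi\}$ for all $j$. If all $a_j=0$ the solution is in-phase synchronous; if at least two of the $a_j$ differ, it is anti-phase synchronous in the sense of Definition~\ref{def:PhaseOscStates}, which is exactly the claimed dichotomy.

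The converse is a direct verification: if $a_j \in \{0,\pi\}$ for all $j$ (in-phase or anti-phase), then $e^{2\mathrm{i}a_j}=1$ for every $j$, whence $\sum_{j=1}^N e^{2\mathrm{i}a_j}=N$ and $R_2(\mathbf{a})=1$. Combined with the time-independence established above, this yields $R_2(\bm{\phi}(t))=1$ for all $t$.

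There is no deep obstacle here; the only points requiring care are the sharp equality characterization in the triangle inequality (genuine equality of the unit vectors, not merely collinearity) and the passage from ``$2a_j$ equal mod $2\pi$'' to ``$a_j$ equal mod $\pi$'', together with fixing the residual phase-shift freedom via $a_1=0$ so that the two admissible residue classes become precisely $\{0,\pi\}$. I would take care that the in-phase case (all shifts equal) is subsumed rather than accidentally excluded when formulating the equality condition.
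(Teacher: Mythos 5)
Your proposal is correct and follows essentially the same route as the paper's proof: both reduce to the time-independent condition $\bigl|\sum_{j=1}^{N}e^{\mathrm{i}2a_{j}}\bigr|=N$, invoke the equality case of the triangle inequality to force $e^{\mathrm{i}2a_{j}}=e^{\mathrm{i}2a_{1}}$ for all $j$, and use the normalization $a_{1}=0$ to conclude $a_{j}\in\{0,\pi\}$, with the converse by direct verification. Your treatment is slightly more explicit about the passage from $2a_{j}\equiv 2a_{1}\ (\mathrm{mod}\ 2\pi)$ to $a_{j}\equiv a_{1}\ (\mathrm{mod}\ \pi)$ and about where the phase-shift symmetry is used, but the argument is the same.
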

\begin{proof}
As follows from table~\ref{tab:Nth_OrderParameter_DiffStates}, $R_{2}(\mathbf{a})=\frac{1}{N}\left|\sum_{j=1}^{N}e^{\mathrm{i}2a_{j}}\right|=1$
for all in-phase and anti-phase solutions. Let us show the opposite.
If $R_{2}(\mathbf{a})=1$, then $e^{\mathrm{i}2a_{j}}=e^{\mathrm{i}2a_{1}}=1$
for all $j$, since $\left|\sum_{j=1}^{N}e^{\mathrm{i}2a_{j}}\right|\le\sum_{j=1}^{N}\left|e^{\mathrm{i}2a_{j}}\right|=N$.
Hence, $a_{j}\in\left\{ 0,\pi\right\} $. The latter means that the
phase-locked solution is either in-phase, if all $a_{j}$ have the
same values, or anti-phase otherwise.
\end{proof}
Now we present the proof of Proposition \ref{prop:1ClusterSolutions}.
\begin{proof}
Substituting (\ref{eq:pPL})\textendash (\ref{eq:kPL}) into (\ref{eq:PhiDGL_general})\textendash (\ref{eq:KappaDGL_general})
we obtain $\dot{\kappa}_{ij}=0$ and 
\begin{align}
\dot{\phi}_{i}(t)=\Omega & =\frac{1}{N}\sum_{j=1}^{N}\sin(a_{i}-a_{j}+\beta)\sin(a_{i}-a_{j}+\alpha)\nonumber \\
 & =\frac{1}{2}\cos(\alpha-\beta)-\frac{1}{2}\Re\left(e^{-\mathrm{i}(2a_{i}+\alpha+\beta)}Z_{2}(a)\right).\label{eq:Omega_General}
\end{align}
Therefore solution (\ref{eq:pPL})\textendash (\ref{eq:kPL}) are
solutions if and only if the expression on the right hand side of
the equation~(\ref{eq:Omega_General}) is independent of the oscillators
index $i=1,\dots,N$. In particular, for any choice of $a_{i}$ the
complex second order parameter is either zero or can be written as
$Z_{2}(\mathbf{a})=R_{2}(\mathbf{a})e^{\mathrm{-i}\gamma}$. Thus,
according to (\ref{eq:Omega_General}) $a_{i}$ has to be such that
$R_{2}(\bm{a})=0$ or $\cos(2a_{i}+\alpha+\beta+\gamma)$ is independent
of $i$. For any $\alpha,\beta$ and $\gamma$ the latter requirement
is equivalent with $2a_{i}\in\left\{ 0,-2(\alpha+\beta+\gamma)\right\} $.
Here, we made use of the phase-shift symmetry
of (\ref{eq:PhiDGL_general})\textendash (\ref{eq:KappaDGL_general})
by setting $2a_{1}=0$. Due to the definition of the complex order
parameter the value for $\gamma$ depends on the choice of the phase
lags $a_{i}$. Assuming that one fraction $q_{1}=Q_{1}/N$ of the
oscillators have $2a_{i}=0$ with $Q_1\in\{1,\dots,N-1\}$ and the other fraction of oscillators
$q_{2}=1-q_1$ have $2a_{i}=-2(\alpha+\beta+\gamma)$ one obtains
\begin{align*}
q_{1}+q_{2}e^{-\mathrm{i}2(\alpha+\beta)}e^{-\mathrm{i}2\gamma} & =R_{2}(\mathbf{a})e^{-\mathrm{i}\gamma}.
\end{align*}
which is equivalent to the equations
\begin{align}
q_{1}\cos(\gamma)+q_{2}\cos(\gamma+2\vartheta) & =R_{2}(\mathbf{a}),\nonumber \\
q_{1}\sin(\gamma)-q_{2}\sin(\gamma+2\vartheta) & =0.\label{eq:gamma_cond}
\end{align}
with $\vartheta=\alpha+\beta$. Here, the first equation gives the value for the second order parameter
while the second equation determines $\gamma$. A special solution
can be given if we set $q_{2}=0$, equivalently $q_{1}=1$ . Then
$\gamma=0$ and $\gamma=\pi$ would solve the equation above and thus
$2a_{i}=0$ for all $i=1,\dots,N$. Note that, for both values of
$\gamma$ the value for the complex second order parameter coincide
$Z_{2}(\mathbf{a})=1$. This solution corresponds to $R_{2}(\mathbf{a})=1$.
In any other case the last equation can be written in the form $\sin(\gamma-\nu)=0$
which has two solutions $\gamma=\nu,\nu+\pi.$ Both solution would
coincide while determining $2a_{i}$. Writing~(\ref{eq:gamma_cond})
as
\begin{align*}
\frac{1}{C}\left((q_{1}-(1-q_{1})\cos(2\vartheta))\sin(\gamma)-(1-q_{1})\sin(2\vartheta)\cos(\gamma)\right) & =0
\end{align*}
where the normalization constant $C$ is defined as 
\begin{align*}
C & =\sqrt{(q_{1}-(1-q_{1})\cos(2\vartheta))^{2}+(1-q_{1})^{2}\sin^{2}(2\vartheta)},
\end{align*}
yields the equations 
\begin{align}
\sin(\nu) & =\frac{\sin(2\vartheta)}{\sqrt{\left(\frac{q_{1}}{1-q_{1}}\right)^{2}+1-2\frac{q_{1}}{1-q_{1}}\cos(2\vartheta)}},\label{eq:SinNu}\\
\cos(\nu) & =\frac{q_{1}-(1-q_{1})\cos(2\vartheta)}{\sqrt{\left({1-q_{1}}\right)^{2}+q_1^2-2q_1({1-q_{1}})\cos(2\vartheta)}}.\nonumber 
\end{align}
Therefore, considering the inverse function $\arcsin:[-1,1]\to[-\pi/2,\pi/2]$
applied to~(\ref{eq:SinNu}) determines $\nu$ to be either $\nu'$
or $\pi-\nu'$, where $\nu':=\arcsin(\sin(\nu))$ and $\sin(\nu)$
as given in (\ref{eq:SinNu}). The second equation for $\cos(\nu)$
then fixes $\nu$ to take one of the values. Thus, $\gamma$ exists
and is unique for every $q_{1}\in[0,1)$. The Proposition is proved by Taking into account that
a finite number $N$ of oscillators $q_{1}$ takes values in the range
from $1/N$ to $(N-1)/N$ and defining $\gamma:=-\psi-\vartheta$.
\end{proof}

\subsection{Multi-cluster solutions of splay type}

\emph{Proof of Proposition (\ref{prop:MCSol_RotWave}).} We prove
by direct substitution. Plugging (\ref{eq:pRWMC}) and (\ref{eq:kRWMC})
into~(\ref{eq:KappaDGL_general}) the identity is obtained. Further,
substituting (\ref{eq:pRWMC}) and (\ref{eq:kRWMC}) into~(\ref{eq:KappaDGL_general})
we obtain
\begin{align*}
 \Omega_{\mu} &=\frac{1}{2N}\sum_{\nu=1}^{M}\rho_{\mu\nu}\sum_{j=1}^{N_{\mu}}\left(\cos(\alpha-\beta+\psi_{\mu\nu})\right.\\
 &\quad \left.-\cos(2(\Delta\Omega_{\mu\nu}t+a_{i,\mu}-a_{j,\nu})+\alpha+\beta-\psi_{\mu\nu})\right)\\
 &=\sum_{\nu=1}^{M}\rho_{\mu\nu}\left(\frac{n_{\nu}}{2}\cos(\alpha-\beta+\psi_{\mu\nu})-\frac{1}{2}\Re\left(e^{-\mathrm{i}(2\Delta\Omega_{\mu\nu}t+2a_{i,\mu}+\alpha+\beta-\psi_{\mu\nu})}Z_{2}(\mathbf{a}_{\nu})\right)\right).
\end{align*}
If $Z_{2}(\mathbf{a}_{\mu})=0$ for all $\mu=1,\dots,M$ then
\begin{align*}
\Omega_{\mu}=\frac{1}{N}\sum_{\nu=1}^{M}\sum_{j=1}^{N_{\mu}} & \rho_{\mu\nu}\cos(\alpha-\beta+\psi_{\mu\nu})
\end{align*}
which agrees with the system~(\ref{eq:MCRotWave_Omega}) for the
frequencies $\Omega_{\mu}$. On the contrary, assume that the multi-cluster
phase-locked solution (\ref{eq:pRWMC}) and (\ref{eq:kRWMC}) solve
the equation~(\ref{eq:PhiDGL_general}). Analog to Proposition~\ref{prop:1ClusterSolutions}
$\sum_{\nu=1}^{M}\rho_{\mu\nu}\Re\left(e^{-\mathrm{i}(2\Delta\Omega_{\mu\nu}t+2a_{i,\mu}+\alpha+\beta-\psi_{\mu\nu})}Z_{2}(\mathbf{a}_{\nu})\right)$
has to be independent of the oscillator index $i=1,\dots,N_{\mu}$
and $t\in\mathbb{R}$ for all $\mu=1,\dots,M$. Take any $\mu=1,\dots,M$
and suppose $Z_{2}(\mathbf{a}_{\nu})\ne0$ for $\nu\in A$ with $A\subseteq\{1,\dots,M\}$.
Then 
\begin{align}
\begin{split} & \sum_{\nu=1}^{M}\rho_{\mu\nu}\Re\left(e^{-\mathrm{i}(2\Delta\Omega_{\mu\nu}t+2a_{i,\mu}+\alpha+\beta-\psi_{\mu\nu})}Z_{2}(\mathbf{a}_{\nu})\right)=\\
 & \sum_{\nu\in A}\rho_{\mu\nu}\left(R_{2}(\mathbf{a}_{\nu})\cos(2\Delta\Omega_{\mu\nu}t+2a_{i,\mu}+\alpha+\beta-\psi_{\mu\nu}+\gamma_{\nu})\right)
\end{split}
\label{eq:OmegaCl_timedep}
\end{align}
where $\gamma_{\nu}$ is defined as in Proposition~\ref{prop:1ClusterSolutions},
see type 2 or 3. For fixed $\mu$ all frequency differences $\Delta\Omega_{\mu\nu}$
differ due to the assumption that the frequencies $\Omega_{\mu}$
are all pairwise different. This is why only terms with $\Delta\Omega_{\mu\nu}$
and $\Delta\Omega_{\mu\nu'}=-\Delta\Omega_{\mu\nu}$ ($\nu,\nu'\in A$)
are candidates to compensate each other in the right hand side of~(\ref{eq:OmegaCl_timedep}) to give a constant value for $\Omega_{\mu}$.
Therefore, the number of clusters with $Z_{2}(\mathbf{a}_{\nu})\ne0$
excluding the $\mu$-th cluster which is under consideration has to
be even, i.e., $|A\setminus\{\mu\}|$ even for all $\mu=1,\dots,M$.
This already yields that $|A|$ odd and $A=\{1,\dots,M\}$. Consider
now $\mu$ such that $\Omega_{\mu}=\min_{\nu\in{1,\dots,M}}\Omega_{\nu}$.
Then for every other $\nu\in\{1,\dots,M\}$ with $\Delta\Omega_{\mu\nu}<0$
there has to be $\nu'\in\{1,\dots,M\}$ so that $-\Delta\Omega_{\mu\nu}=\Delta\Omega_{\mu\nu'}=\Omega_{\mu}-\Omega_{\nu'}$.
Hence, $\Omega_{\nu'}<\Omega_{\mu}$ which contradicts that $\Omega_{\mu}=\min_{\nu\in{1,\dots,M}}\Omega_{\nu}$.
Therefore, for this choice of $\mu\in{1,\dots,M}$ the expression
in~(\ref{eq:OmegaCl_timedep}) cannot be constant contradicting
the assumption made in the beginning.

\subsection{Asymptotic expansions of multi-cluster solutions}

In this section we give an analytic description of multi-cluster solutions
in terms of an asymptotic expansion. We consider therefore the expansion
of $r$-th order together with a multi-time scale ansatz~\cite{VER06}
\begin{align}
\begin{split}\phi_{i,\mu}^{(r)} & (\epsilon,t):=\Omega_{\mu}^{(r)}(\tau_{0},\dots,\tau_{r})+a_{i,\mu}+\sum_{l=1}^{r}\epsilon^{l}p_{i,\mu;l}(t)\\
\kappa_{ij,\mu\nu}^{(r)} & (\epsilon,t):=\sum_{l=0}^{r}\epsilon^{l}k_{ij,\mu\nu;l}(t)
\end{split}
 &  & \begin{split}\mu,\nu & =1,\dots,M\\
i,j & =1,\dots,N_{\mu}
\end{split}
\label{eq:FormalExp_MultiTimeScale}
\end{align}
where $\Omega_{\mu}^{(r)}\in C^{1}(\mathbb{R}^{r+1})$ is a function
depending on the multi-time scales $\tau_{l}:=\epsilon^{l}t$. We show under under which conditions this expansion describes the time evolution for
the system (\ref{eq:PhiDGL_general})--(\ref{eq:KappaDGL_general}). 

The section is organized as follows. We first introduce some notations and state the main result. Then, we outline the strategy of the proof. We then prove some technical lemmata that will help us to prove the main result. For ease of notation, for the remainder of the section indices are used as follows. Small Latin letters $i$, $j$ in the subscript are oscillator indices while Greek letters $\mu$, $\nu$ represent cluster indices. These are separated by a comma. Two further indices, separated by semicolon, are the coefficient index and the mode index, respectively. The indices in superscript are either powers or the order for the expansion which are then written in parenthesis.

The following definition is introduced to handle the order of approximation.
\begin{definition}
Let $f:\mathbb{R}\times\mathbb{R}\to\mathbb{\mathbb{R}}$ and $g:\mathbb{R}\times\mathbb{R}\to\mathbb{\mathbb{R}}$
real functions. We define the following notations:
\end{definition}
\begin{enumerate}
\item $f(\epsilon,t)\in O(g(\epsilon,t))$ as $\epsilon\to0$ on the interval
$I\subseteq\mathbb{R}$ if for any $t\in I$ there exist $C(t)>0$
and $\epsilon_{0}(t)>0$ such that $|f(\epsilon,t)|<C(t)|g(\epsilon,t)|$
for all $\epsilon<\epsilon_{0}(t)$,
\item $f(\epsilon,t)\in o(g(\epsilon,t))$ as $\epsilon\to0$ on the interval
$I\subseteq\mathbb{R}$ if for any $t\in I$ and all $C>0$ there
exist $\epsilon_{0}(t)>0$ such that $|f(\epsilon,t)|<C|g(\epsilon,t)|$
for all $\epsilon<\epsilon_{0}(t)$.
\end{enumerate}
\begin{remark}
If the constants $C$ and $\epsilon_{0}$ can be chosen independently
of $t\in I$ we say that $f(\epsilon,t)\in O(g(\epsilon,t))$ (or
$f(\epsilon,t)\in o(g(\epsilon,t))$) as $\epsilon\to0$ uniformly
on $I$.
\end{remark}

In order to find an expressions for the asymptotic expansion of the coupling weights $\kappa$, we use the concept of the pullback attractor. It is defined as an nonempty, compact and invariant set and well known from the theory of nonautonomous dynamical systems. For our purposes, suppose we know the functions $\phi_i(t)$ for all $i=1,\dots,N$. Then, the differential equations~\eqref{eq:KappaDGL_general} is nonautonomous and can be solved explicitly by
\begin{align}
\kappa_{ij}(t) & :=\kappa_{ij,0}e^{-\epsilon(t-t_{0})}-\epsilon\int_{t_{0}}^{t}e^{-\epsilon(t-t')}\sin(\phi_{i}(t')-\phi_{j}(t')+\beta)\,\mathrm{d}t'
\end{align}
with $\kappa_{ij,0} \in[-1,1]$ for all $i,j=1,\dots,N$. For this, the pullback attractor $\mathcal{A}$ is given by the set
\begin{align}\label{eq:DefPullbackAttrSet}
	\mathcal{A}:=\bigcup_{t\in\mathbb{R}}\left\{(\kappa(t))_{*}\right\}
\end{align}
where
\begin{align}
(\kappa_{ij}(t))_{*} & :=-\lim_{t_{0}\to-\infty}\left(\epsilon\int_{t_{0}}^{t}e^{-\epsilon(t-t')}\sin(\phi_{i}(t')-\phi_{j}(t')+\beta)\,\mathrm{d}t'\right)\label{eq:DefPullBackAttrkappa}
\end{align}
for all $i,j\in\{1,\dots,N\}$. We remark the following important properties. For given functions $\phi_i$ the equations~\eqref{eq:KappaDGL_general} posses the compact absorbing set $G:=\{\kappa_{ij}: \kappa_{ij}\in[-1,1], i,j=1,\dots,N\}$. Hence, the pullback attractor exists, c.f. Theorem~3.18 in~\cite[pp. 45--46]{KLO11}, and is unique due to Proposition~3.8 \cite[p. 41]{KLO11}. Moreover, $(\kappa(t))_{*}$ is a solution for the nonautonomous system which can be shown by direct computation. We call $\kappa(t))_{*}$ the parametrization of the pullback attractor and use it in order to find an analytic expression for the (pseudo-)multi-cluster states. Note that we have already seen such parametrizations explicitly in~\eqref{eq:kPL} and~\eqref{eq:kRWMC}. For more details regarding nonautonomous systems and the pullback attractor we refer the reader to \cite{RAS06a,KLO11}.

We use the following notations for the sake of brevity.
\begin{align*}
\mathbf{M}&:=\left\{ \mathbf{m}=(m_{1},\dots,m_{M}):\,m_{1},\dots,m_{M}\in\mathbb{Z}\right\},\\
c_{\mathbf{m}} & :=c_{m_{1},\dots,m_{M}},\\
\Delta\Omega(\mathbf{m}) & :=\sum_{\mu=1}^{M}m_{\mu}\Omega_{\mu}.
\end{align*}
Furthermore, we say that two elements $\mathbf{m},\mathbf{n}\in\mathbf{M}$
are equivalent $\mathbf{m}\sim\mathbf{n}$ if and only if $\Delta\Omega(\mathbf{m})=\Delta\Omega(\mathbf{n})$.
The corresponding quotient space is denoted by $\tilde{\mathbf{M}}:=\mathbf{M}/_{\sim}$.
If $\Omega_{\mu}$ is considered as frequencies the equivalence relation
factors out all resonant linear combinations of those. Let us further
define $\mathbf{\tilde{M}}(f)$ as the set of all $(m_{1},\dots,m_{M})$
such that the function $f$ can be written as $f=\sum_{\mathbf{m}\in\mathbf{M}(f)}c_{\mathbf{m}}e^{\mathrm{i}\Delta\Omega(\mathbf{m})t}$
for some $c_{\mathbf{m}}\in\mathbb{C}$. Finally, we introduce the
shorthand notion $(m\mu n\nu):=(0,\dots,0,m_{\mu},0,\dots,0,m_{\nu},0,\dots,0)$
with $m_{\mu}=m$ and $m_{\nu}=n$ for further convenience if only
frequencies of two distinguished clusters are considered.

The main result on the asymptotic expansion for (pseudo-)multi-cluster solutions reads as follows.
\begin{proposition}
	\label{thm:FormalExpansionThm}Let $r\in\mathbb{\mathbb{N}}$. Suppose
	the system~(\ref{eq:PhiDGL_general})\textendash ~(\ref{eq:KappaDGL_general})
	possesses a pseudo multi-cluster solution $(\phi_{i,\mu},\kappa_{ij,\mu\nu})$
	with $\phi_{i,\mu}(\epsilon,t)=\Omega_{\mu}(\epsilon)t+a_{i,\mu}+s_{i,\mu}(\epsilon,t)$
	where $a_{i,\mu}\in\mathbb{T}^{1}$ and the coupling matrix $\kappa_{ij,\mu\nu}(\epsilon,t)$ is
	given as the parametrization of the pullback attractor defined in~(\ref{eq:DefPullBackAttrkappa}).
	Assume further that $M_{1}$ clusters are of antipodal type
	($2a_{i,\mu}={a}_{\mu}$) and $M_{2}$ are of splay type
	($R_{2}(\mathbf{a}_{\mu})=0$). Then, the $r$-th order asymptotic
	expansion of $\phi_{i,\mu}(\epsilon,t)$ for $t\in O(1/\epsilon^{r})$
	as $\epsilon\to0$ is given by
	\begin{align}
	\phi_{i,\mu}^{(r)}(\epsilon,t) & :=\Omega_{\mu,0}^{(r)}t+a_{i,\mu}+\sum_{l=1}^{r}\epsilon^{l}\left(\Omega_{\mu,l}^{(r)}t+p_{i,\mu;l}(t)\right) & \begin{split}\begin{split}\mu,\nu & =1,\dots,M\end{split}
	\end{split}
	\label{eq:GenFormalExpansion_phi}\\
	\kappa_{ij,\mu\nu}^{(r)}(\epsilon,t) & :=\sum_{l=0}^{r}\epsilon^{l}k_{ij,\mu\nu;l}(t), & \begin{split}i & =1,\dots,N_{\mu}\\
	j & =1,\dots,N_{\nu}
	\end{split}
	\label{eq:GenFormalExpansion_kappa}
	\end{align}
	where\\
	(i) all coefficients of the expansion can be found inductively;\\
	(ii) the first order approximation can be written as
	\begin{align*}
	\phi_{i,\mu}^{(1)} & =\left(\frac{n_{\mu}}{2}\left(\cos(\alpha-\beta)-\cos(\alpha+\beta)\right)-\epsilon\sum_{\overset{\nu=1}{\nu\ne\mu}}^{M}\frac{n_{\nu}}{2\Delta\Omega^{(1)}_{\mu\nu}}\sin(\alpha-\beta)\right)t+a_{i,\mu}+\epsilon p_{\mu;1}
	\end{align*}
	for $\mu=1,\dots,M_{1}$, and 
	\begin{align*}
	\phi_{i,\mu}^{(1)} & =\left(\frac{n_{\mu}}{2}\cos(\alpha-\beta)-\epsilon\sum_{\overset{\nu=1}{\nu\ne\mu}}^{M}\frac{n_{\nu}}{2\Delta\Omega^{(1)}_{\mu\nu}}\sin(\alpha-\beta)\right)t+a_{i,\mu}+\epsilon p_{i,\mu;1}(t)
	\end{align*}
	for $\mu=M_{1}+1,\dots,M$ with 
	\begin{align*}
	p_{\mu;1} & =-\sum_{\overset{\nu=1}{\nu\ne\mu}}^{M_{1}}\frac{n_{\nu}}{4\left(\Delta\Omega^{(1)}_{\mu\nu}\right)^{2}}\cos(2\Delta\Omega^{(1)}_{\mu\nu}t+{a}_{\mu}-{a}_{\nu}+\alpha+\beta) & \mu=1,\dots,M_{1}\\
	p_{i,\mu;1} & =-\sum_{\overset{\nu=1}{\nu\ne\mu}}^{M}\frac{n_{\nu}}{4\left(\Delta\Omega^{(1)}_{\mu\nu}\right)^{2}}\cos(2\Delta\Omega^{(1)}_{\mu\nu}t+2a_{i,\mu}-{a}_{\nu}+\alpha+\beta). & \mu=M_{1}+1,\dots,M
	\end{align*}
	The coupling weights are given by
	\begin{align*}
	\kappa_{ij,\mu\mu}^{(1)} & =-\sin(a_{i,\mu}-a_{j,\mu}+\beta), & \\
	\kappa_{ij,\mu\nu}^{(1)} & =\frac{\epsilon}{\Delta\Omega^{(1)}_{\mu\nu}}\cos(\Delta\Omega_{\mu\nu}t+a_{i,\mu}-a_{j,\nu}+\beta).
	\end{align*}
\end{proposition}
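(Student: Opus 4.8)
The plan is to combine the pullback-attractor representation of the coupling weights with a multiple-time-scale expansion of the phases, and then to fix the unknown frequency corrections $\Omega_{\mu,l}^{(r)}$ order by order through solvability (non-secularity) conditions. Since we assume from the outset that $(\phi_{i,\mu},\kappa_{ij,\mu\nu})$ is a pseudo multi-cluster with $\phi_{i,\mu}(\epsilon,t)=\Omega_\mu(\epsilon)t+a_{i,\mu}+s_{i,\mu}(\epsilon,t)$ and that $\kappa$ equals the parametrization \eqref{eq:DefPullBackAttrkappa} of the pullback attractor, the coupling is already slaved to the phases; the genuine unknowns are only the phase corrections $p_{i,\mu;l}$ and the frequency corrections $\Omega_{\mu,l}^{(r)}$.

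First I would make the slaving explicit. For a (quasi-)periodic phase motion each difference $\phi_{i,\mu}(t)-\phi_{j,\nu}(t)$ is, at each finite order, quasiperiodic with frequencies $\Delta\Omega(\mathbf m)=\sum_\sigma m_\sigma\Omega_\sigma$, $\mathbf m\in\mathbf M$. Writing $\sin(\phi_{i,\mu}-\phi_{j,\nu}+\beta)=\sum_{\mathbf m}c_{\mathbf m}e^{\mathrm i\Delta\Omega(\mathbf m)t}$ and inserting this into \eqref{eq:DefPullBackAttrkappa}, the elementary integral $\epsilon\int_{-\infty}^{t}e^{-\epsilon(t-t')}e^{\mathrm i\Delta\Omega(\mathbf m)t'}\,\mathrm dt'=\tfrac{\epsilon}{\epsilon+\mathrm i\Delta\Omega(\mathbf m)}e^{\mathrm i\Delta\Omega(\mathbf m)t}$ gives
\[ (\kappa_{ij,\mu\nu}(t))_{*}=-\sum_{\mathbf m}\frac{\epsilon}{\epsilon+\mathrm i\Delta\Omega(\mathbf m)}\,c_{\mathbf m}\,e^{\mathrm i\Delta\Omega(\mathbf m)t}. \]
The transfer factor $\epsilon/(\epsilon+\mathrm i\Delta\Omega(\mathbf m))$ equals $1$ on the resonant set $\Delta\Omega(\mathbf m)=0$ (intra-cluster, reproducing $\kappa_{ij,\mu\mu;0}=-\sin(a_{i,\mu}-a_{j,\mu}+\beta)$) and is $O(\epsilon/\Delta\Omega_{\mu\nu})$ otherwise (inter-cluster), which is exactly the mechanism making distinct clusters couple only at order $\epsilon$ and yielding the stated $\kappa_{ij,\mu\nu}^{(1)}$. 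I would package this as a technical lemma stating that the pullback attractor of a quasiperiodic forcing is the above Fourier multiplier, together with a lemma describing how the resonance classes $\tilde{\mathbf M}$ behave under the sums and products generated by the ansatz.

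Next I would substitute \eqref{eq:GenFormalExpansion_phi}\textendash\eqref{eq:GenFormalExpansion_kappa} into \eqref{eq:PhiDGL_general} and expand in $\epsilon$. At order $\epsilon^{0}$ only intra-cluster couplings survive, and $\dot\phi_{i,\mu}^{(0)}=\Omega_{\mu,0}$ reduces, exactly as in the proof of Proposition~\ref{prop:1ClusterSolutions} (equation \eqref{eq:Omega_General}) but summed over a single cluster, to $\tfrac{n_\mu}{2}\cos(\alpha-\beta)-\tfrac12\Re(e^{-\mathrm i(2a_{i,\mu}+\alpha+\beta)}Z_2(\mathbf a_\mu))$; inserting $2a_{i,\mu}=a_\mu$ for the $M_1$ antipodal clusters and $Z_2(\mathbf a_\mu)=0$ for the splay clusters gives precisely the two stated leading frequencies. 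At order $\epsilon^{l}$ one obtains for each $i,\mu$ an equation $\dot p_{i,\mu;l}+\Omega_{\mu,l}^{(r)}=F_{i,\mu;l}(t)$, where $F_{i,\mu;l}$ is an explicit finite quasiperiodic sum built from the coefficients of order $<l$. The requirement that $p_{i,\mu;l}$ stay bounded forces $\Omega_{\mu,l}^{(r)}$ to equal the time-average (the $\Delta\Omega(\mathbf m)=0$ part) of $F_{i,\mu;l}$, after which $p_{i,\mu;l}$ is recovered by integrating the mean-zero remainder term-by-term, each mode $e^{\mathrm i\Delta\Omega(\mathbf m)t}$ contributing $\tfrac{1}{\mathrm i\Delta\Omega(\mathbf m)}e^{\mathrm i\Delta\Omega(\mathbf m)t}$. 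This inductive step proves (i); carrying it out once at $l=1$ produces $\Omega_{\mu,1}^{(r)}=-\sum_{\nu\ne\mu}\tfrac{n_\nu}{2\Delta\Omega^{(1)}_{\mu\nu}}\sin(\alpha-\beta)$ and the displayed $p_{i,\mu;1}$, i.e.\ part (ii).

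The hard part will be the resonance bookkeeping rather than any single trigonometric computation. At each order $l$ the averaging step must be shown to be well posed: I have to identify the secular part of $F_{i,\mu;l}$, namely the modes $\mathbf m$ with $\Delta\Omega(\mathbf m)=0$. Generically only $\mathbf m=0$ is resonant, but pairwise distinctness of the $\Omega_\mu$ does not by itself exclude accidental resonances $\Delta\Omega(\mathbf m)=0$ with $\mathbf m\ne0$ among higher harmonics; the relation $\mathbf m\sim\mathbf n\iff\Delta\Omega(\mathbf m)=\Delta\Omega(\mathbf n)$ and the quotient $\tilde{\mathbf M}$ are introduced precisely to track which modes collapse to the same frequency, and in particular to zero. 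One must check that the secular part is a genuine constant, absorbable into $\Omega_{\mu,l}^{(r)}$, while every surviving mode has $\Delta\Omega(\mathbf m)\ne0$ so that term-by-term integration gives a bounded $p_{i,\mu;l}$. A second subtlety, reflected in the restriction $t\in O(1/\epsilon^{r})$, is that the exact $\Omega_\mu(\epsilon)$ differ from the truncated $\sum_{l\le r}\epsilon^l\Omega_{\mu,l}^{(r)}$ by $O(\epsilon^{r+1})$, so the accumulated phase error stays $o(1)$ only on that window; I would make this explicit with the $O(\cdot)$/$o(\cdot)$ notions defined above. Finally, the split into antipodal and splay clusters enters only through whether $Z_2(\mathbf a_\mu)$ vanishes, which is why a single induction covers all multi-cluster types simultaneously.
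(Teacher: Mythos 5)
Your proposal is correct and follows essentially the same route as the paper's own proof: the pullback-attractor representation of $\kappa$ as a Fourier multiplier acting on the quasiperiodic phase differences is the content of Lemma~\ref{lem:KappaFormalExpansion}, your resonance bookkeeping over the quotient $\tilde{\mathbf{M}}$ plays the role of Lemma~\ref{lem:Lemma2}, and your order-by-order non-secularity conditions fixing $\Omega_{\mu,l}^{(r)}$ are exactly the paper's multi-time-scale absorption of the linearly growing terms. Carrying out the induction once at first order, as you describe, yields part (ii) precisely as in the paper.
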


The proof makes use of several lemmas and is presented at the end of this section. Overall, we aim to describe the following particular form for the dynamical behaviour of the phase oscillators. The phases of the oscillators $\phi_{i,\mu}$ form a pseudo multi-cluster, c.f. definition~\ref{def:PhaseOscStates_PseudoMulCl}. Further, the bounded modulations for the phases of each oscillator are given as Taylor expansions in $\epsilon$ with periodic coefficients that can be expressed as Fourier sums with even modes. The strategy for the proof of the main result is as follows. 
\begin{enumerate}
	\item Assume that the phases of the oscillators are given as finite Taylor sums in $\epsilon$ with periodic coefficients, which are represented as finite Fourier sums. With this, the equations~\eqref{eq:KappaDGL_general} can be explicitly solved. Introducing the pull-back attractor provides us with a unique expression for the asymptotic solutions ($t\to\infty$). An explicit form for the expansion in $\epsilon$ of these solutions of the coupling weights $\kappa$ is given in Lemma~\ref{lem:KappaFormalExpansion}.
	\item The solutions of the coupling weights depend on the Fourier modes of the periodic expansion coefficients of the oscillators. An statement on their explicit dependence is provided by Lemma~\ref{lem:Lemma2}. More specifically, the expansion coefficients of the couplings weights only consist of even modes whenever the expansion coefficients for the phases of the oscillators consist only of odd modes.
	\item The expressions for the coupling weights in Lemma~\ref{lem:KappaFormalExpansion} are used to derive the explicit form for the phase oscillators. More specifically, the expansions coefficients are derived such that they satisfy the equations~\eqref{eq:PhiDGL_general}. Higher order terms which contribute to a linear growth are absorbed in an expansion for the oscillator frequencies.
	\item Finally, we find an iterative scheme to determine all expansion coefficients of the phases and coupling weights up to any order. Moreover, it is shown that the coefficients provided by the iterative scheme are consistent with the assumption on the expansion coefficients given in the beginning of the proof.
\end{enumerate}

To determining the asymptotic expansion explicitly, derivatives of composed function have to be carried out. The following Lemma provides us with a general form.
\begin{lemma}
\label{lem:FaaDiBrunoFormula}Suppose we have $n$-times differentiable
real functions $f$ and $g$. Let $T_{n}:=\left\{ (k_{1},\dots,k_{n}):\,1k_{1}+2k_{2}+\cdots+nk_{n}=n,k_{1},\dots,k_{n}\in\mathbb{N}_{0}\right\} $ denote the partitions of $n$. The composition $\left(f\circ g\right)$
is also $n$-times differentiable and the $n$th derivative can
be written as
\begin{align}
	D_{x}^{n}(f\circ g)(x_{0})=\sum_{(k_{1},\,\ldots\,,k_{n})\in T_{n}}\frac{n!}{k_{1}!\cdot\ \cdots\ \cdot k_{n}!}\bigl(D_{x}^{k_{1}+\ldots+k_{n}}f)\circ g(x_{0})\prod_{m=1}^{n}\biggl(\frac{D^{m}g}{m!}\biggr)^{k_{m}}\hspace*{-5pt}(x_{0}).\label{eq:FaaDiBrunoFormula}
\end{align}
\end{lemma}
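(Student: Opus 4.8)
The plan is to prove a finer, set-theoretic version of the formula and then collapse it to the stated form by a counting argument. Let $\Pi_n$ denote the collection of all partitions of $\{1,\dots,n\}$ into nonempty blocks, and for $\pi\in\Pi_n$ write $|\pi|$ for the number of blocks and $|B|$ for the cardinality of a block $B\in\pi$. First I would establish, by induction on $n$, the identity
\[
D_x^n(f\circ g)(x_0)=\sum_{\pi\in\Pi_n}\bigl(D^{|\pi|}f\bigr)(g(x_0))\prod_{B\in\pi}\bigl(D^{|B|}g\bigr)(x_0).
\]
The base case $n=1$ is exactly the chain rule, since $\Pi_1$ consists of the single partition $\{\{1\}\}$.

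For the inductive step I would differentiate the right-hand side once more using the product and chain rules. Each summand produces two kinds of contributions: differentiating the outer factor $D^{|\pi|}f$ yields $D^{|\pi|+1}f$ together with a factor $D^{1}g$, which corresponds to adjoining $\{n+1\}$ as a new singleton block; differentiating one factor $D^{|B|}g$ replaces it by $D^{|B|+1}g$, which corresponds to inserting the element $n+1$ into the block $B$. The crucial observation is that these two operations set up a bijection: every partition of $\{1,\dots,n+1\}$ arises exactly once, either by adjoining $\{n+1\}$ as a new block to a partition of $\{1,\dots,n\}$, or by inserting $n+1$ into precisely one block of such a partition. Hence the differentiated sum is exactly the analogous sum over $\Pi_{n+1}$, closing the induction without any coefficient bookkeeping.

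The second step groups partitions by type. Say $\pi\in\Pi_n$ has type $(k_1,\dots,k_n)$ if it has exactly $k_m$ blocks of size $m$; then $\sum_{m=1}^n m k_m=n$, so $(k_1,\dots,k_n)\in T_n$, and $|\pi|=\sum_{m=1}^n k_m=k_1+\cdots+k_n$. Every partition of a fixed type contributes the identical summand $(D^{k_1+\cdots+k_n}f)\circ g\cdot\prod_{m=1}^n(D^m g)^{k_m}$, evaluated at the appropriate points. The number of partitions of $\{1,\dots,n\}$ of type $(k_1,\dots,k_n)$ is the standard multinomial count $n!/\bigl(\prod_{m=1}^n (m!)^{k_m}\,k_m!\bigr)$, obtained by distributing the $n$ labelled elements among the blocks and dividing out the symmetries permuting blocks of equal size. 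Substituting this count and absorbing the factor $\prod_m (m!)^{k_m}$ into the product $\prod_m (D^m g/m!)^{k_m}$ reproduces precisely the coefficient $n!/(k_1!\cdots k_n!)$ of the stated formula.

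The main obstacle is the bijection in the inductive step: one must verify that the two insertion operations generate every partition of $\{1,\dots,n+1\}$ exactly once, with no term double-counted and none omitted. I would make this rigorous by defining the map that removes the element $n+1$ from a partition of $\{1,\dots,n+1\}$ — either deleting a singleton block or shrinking the block containing $n+1$ — and checking that it is the two-sided inverse of the insertion procedure. Once this combinatorial core is in place, the remaining counting of partition types is entirely routine.
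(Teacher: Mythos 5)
Your proof is correct, but it cannot coincide with the paper's, because the paper does not prove this lemma at all: its entire proof reads ``See~\cite[pp.~95--96]{ABR88}'', i.e., it defers the classical Fa\`a di Bruno formula to a textbook. Your argument is the standard self-contained combinatorial proof. The set-partition identity $D_x^n(f\circ g)=\sum_{\pi\in\Pi_n}\bigl((D^{|\pi|}f)\circ g\bigr)\prod_{B\in\pi}D^{|B|}g$ is established by induction, and the inductive step's insertion/removal bijection on partitions of $\{1,\dots,n+1\}$ (adjoin $\{n+1\}$ as a new singleton when the outer factor $D^{|\pi|}f$ is differentiated; insert $n+1$ into the block $B$ when the factor $D^{|B|}g$ is differentiated) is exactly the device that closes the induction with no coefficient bookkeeping; your two-sided-inverse check via ``remove the element $n+1$'' makes this rigorous. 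The collapse to the stated form is also right: the number of partitions of $\{1,\dots,n\}$ of type $(k_1,\dots,k_n)$ is $n!/\bigl(\prod_{m}(m!)^{k_m}\,k_m!\bigr)$, and absorbing $\prod_m(m!)^{k_m}$ into $\prod_m(D^mg/m!)^{k_m}$ reproduces the coefficient $n!/(k_1!\cdots k_n!)$ and the outer derivative order $|\pi|=k_1+\cdots+k_n$. What the citation buys the paper is brevity for a classical fact that is used purely instrumentally in the asymptotic-expansion machinery of the appendix; what your proof buys is self-containedness. One small point to make explicit if you write it up: the induction differentiates expressions such as $(D^{|\pi|}f)(g(x))$ once more, so you should state that ``$n$-times differentiable'' is read in the usual sense that derivatives of order strictly less than $n$ exist on a neighborhood, which is what licenses the repeated product- and chain-rule applications.
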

\begin{proof}
See~\cite[pp. 95--96]{ABR88}.
\end{proof}
This expression for the $n$th-derivative is also known as the
Faà di Bruno formula.
\begin{lemma}
\label{lem:KappaFormalExpansion}Suppose the phase oscillators behave
as
\begin{align*}
\phi_{i,\mu}(\epsilon,t) & :=\Omega_{\mu}t+a_{i,\mu}+\sum_{l=1}^{r}\epsilon^{l}p_{i,\mu;l}(t), & \begin{split}\mu & =1,\dots,M\\
i & =1,\dots,N_{\mu}
\end{split}
\end{align*}
with $\Omega_{\mu}\in\mathbb{R}$, $a_{i,\mu}\in\mathbb{T}^{1}$ and
there are $m_{i,\mu;l}\in\mathbb{\mathbb{N}}_{0}$ such that the bounded
functions $p_{i,\mu;l}:\mathbb{R}\to\mathbb{R}$ are given by the
finite multi-Fourier series
\begin{align*}
	p_{i,\mu;l}(t)=\sum_{\mathbf{m}\in\tilde{\mathbf{M}}(p_{i,\mu;l})}c_{i,\mu;l;\mathbf{m}}e^{\mathrm{i}\Delta\Omega(\mathbf{m})t}
\end{align*}
with $|\tilde{\mathbf{M}}(p_{i,\mu;l})|=m_{i,\mu;l}$. 

Then, for $s\le r$ there exist functions $k_{ij,\mu\nu;l}(t)$ such that the asymptotic expansion of the pull-back attractor $\left(\kappa_{ij,\mu\nu}\right)_{*}$
defined in~(\ref{eq:DefPullBackAttrkappa}) can be written as
\begin{align}
	\left(\kappa_{ij,\mu\nu}\right)_{*} & =\sum_{l=0}^{s}\epsilon^{l}k_{ij,\mu\nu;l}(t)+\hat{R}_{ij,\mu\nu}(\epsilon,t) &  & \begin{split}\mu,\nu & =1,\dots,M\\
	i & =1,\dots,N_{\mu}\\
	j & =1,\dots,N_{\nu}
	\end{split}
	,\label{eq:PullbackKappaFormalExpansion}
\end{align}
where $\hat{R}_{ij,\mu\nu}(\epsilon,t)\in o(\epsilon^{s})$ uniformly
on $\mathbb{R}$ as $\epsilon\to0$ and $\kappa_{ij,\mu\nu}^{(s)}(t):=\sum_{l=0}^{s}\epsilon^{l}k_{ij,\mu\nu;l}(t)$. The $\kappa_{ij,\mu\nu}^{(s)}(t)$ is called the $s$-th-order asymptotic approximation of $\left(\kappa_{ij,\mu\nu}\right)_{*}$. Further, all  $k_{ij,\mu\nu;l}$ can also be written as a Fourier sum.
%
\end{lemma}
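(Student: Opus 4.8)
The plan is to evaluate the pullback--attractor integral~\eqref{eq:DefPullBackAttrkappa} directly, by expanding the sine of the phase difference in the small quantity induced by the modulations $p_{i,\mu;l}$ and then integrating each resulting Fourier mode with a single elementary integral. Writing the phase difference as
\begin{align*}
\phi_{i,\mu}(t')-\phi_{j,\nu}(t')+\beta=\Delta\Omega_{\mu\nu}t'+a_{i,\mu}-a_{j,\nu}+\beta+q_{ij,\mu\nu}(\epsilon,t'),
\end{align*}
where $\Delta\Omega_{\mu\nu}=\Omega_\mu-\Omega_\nu$ and $q_{ij,\mu\nu}(\epsilon,t')=\sum_{l=1}^{r}\epsilon^{l}\bigl(p_{i,\mu;l}(t')-p_{j,\nu;l}(t')\bigr)$, the decisive observation is that $q_{ij,\mu\nu}\in O(\epsilon)$ uniformly in $t'$, since each $p_{i,\mu;l}$ is a bounded finite Fourier sum.

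The key computational ingredient is the elementary identity
\begin{align*}
\epsilon\int_{-\infty}^{t}e^{-\epsilon(t-t')}e^{\mathrm{i}\omega t'}\,\mathrm{d}t'=\frac{\epsilon}{\epsilon+\mathrm{i}\omega}\,e^{\mathrm{i}\omega t},
\end{align*}
obtained by the substitution $u=t-t'$. I would record its asymptotics: the prefactor equals $1$ when $\omega=0$, and for $\omega\ne0$ it admits a geometric expansion in powers of $\epsilon/(\mathrm{i}\omega)$, convergent for $\epsilon<|\omega|$, so that a mode of frequency $\omega\ne0$ contributes only at order $\epsilon^{1}$ and higher. Since every frequency that occurs is of the form $\Delta\Omega(\mathbf{m})$ and only finitely many such values arise in a finite Taylor--Fourier expansion, the nonzero ones are bounded below by some $\delta>0$, fixing the range $\epsilon<\delta$ of validity.

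Next I would expand $\sin(\theta+q)$ about $q=0$ using Lemma~\ref{lem:FaaDiBrunoFormula} together with $|\sin^{(k)}|\le1$; because each power $q^{k}$ is a finite sum of products of the finite Fourier sums $p_{i,\mu;l}-p_{j,\nu;l}$, every coefficient of $\epsilon^{n}$ in the resulting expansion of $\sin(\phi_{i,\mu}-\phi_{j,\nu}+\beta)$ is again a finite Fourier sum with frequencies in $\{\Delta\Omega(\mathbf{m}):\mathbf{m}\in\mathbf{M}\}$. Integrating this expansion term by term with the elementary identity and collecting powers of $\epsilon$ up to order $s$ then produces the coefficients $k_{ij,\mu\nu;l}(t)$, each a finite Fourier sum (zero-frequency modes surviving from the $\omega=0$ case, nonzero-frequency modes entering through the geometric expansion). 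As a check, the $\mu=\nu$ block yields the constant $-\sin(a_{i,\mu}-a_{j,\mu}+\beta)$ at order $\epsilon^{0}$, while for $\mu\ne\nu$ the leading contribution is $\tfrac{\epsilon}{\Delta\Omega_{\mu\nu}}\cos(\Delta\Omega_{\mu\nu}t+a_{i,\mu}-a_{j,\nu}+\beta)$, matching the stated first-order coupling weights.

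The main obstacle is the uniform estimate $\hat R_{ij,\mu\nu}\in o(\epsilon^{s})$, which requires controlling two truncation errors at once. The Taylor remainder after the $s$-th term of $\sin(\theta+q)$ is bounded by $|q|^{s+1}/(s+1)!\in O(\epsilon^{s+1})$ uniformly in $t'$ by boundedness of the $p_{i,\mu;l}$; integrating against the kernel $\epsilon\,e^{-\epsilon(t-t')}$, which has unit mass on $(-\infty,t]$, preserves this sup-bound. The geometric remainders inherited from the finitely many factors $\tfrac{\epsilon}{\epsilon+\mathrm{i}\Delta\Omega(\mathbf{m})}$ are $O(\epsilon^{s+1})$ for $\epsilon<\delta$, with constants depending only on $\delta$ and the finitely many $t$-independent Fourier coefficients. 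Since all surviving $t$-dependence resides in exponentials $e^{\mathrm{i}\Delta\Omega(\mathbf{m})t}$ of modulus one, both $C$ and $\epsilon_{0}$ can be chosen independently of $t$, giving uniformity on $\mathbb{R}$; consistency for every $s\le r$ is immediate, since only the modulations $p_{i,\mu;l}$ with $l\le s$ enter the order-$\epsilon^{s}$ coefficient.
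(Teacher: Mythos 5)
Your proposal is correct and follows essentially the same route as the paper's proof: a Fa\`a di Bruno/Taylor expansion of $\sin(\phi_{i,\mu}-\phi_{j,\nu}+\beta)$ in the bounded modulations, mode-by-mode integration against the kernel $\epsilon e^{-\epsilon(t-t')}$, and a geometric (Laurent) expansion of $\epsilon/(\epsilon+\mathrm{i}\Delta\Omega(\mathbf{m}))$ valid for $\epsilon$ below the smallest nonzero frequency. Your explicit Taylor-remainder bound $|q|^{s+1}/(s+1)!$ and the uniform lower bound $\delta>0$ on the finitely many nonzero frequencies are slightly tidier versions of the estimates the paper obtains from the boundedness of the $p_{i,\mu;l}$ and the unit mass of the kernel, but the argument is the same.
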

\begin{proof}
For fixed functions $\phi_{i,\mu}(t)$ the nonautonomous systems corresponding to the differential equation (\ref{eq:KappaDGL_general})
posses the following parametrization of the pullback attractor
\begin{align*}
	(\kappa_{ij,\mu\nu}(t))_{*} & :=-\lim_{t_{0}\to-\infty}\left(\epsilon\int_{t_{0}}^{t}e^{-\epsilon(t-t')}\sin(\phi_{i,\mu}(t')-\phi_{j,\nu}(t')+\beta)\,\mathrm{d}t'\right).
\end{align*}
Using~(\ref{eq:FaaDiBrunoFormula}) with $f=\sin(\phi_{i,\mu}-\phi_{j,\nu}+\beta)$
and $g=\Delta\Omega_{\mu\nu}t+a_{ij,\mu\nu}+\beta+\sum_{l=1}^{r}\epsilon^{l}p_{ij,\mu\nu;l}$
we can perform a Taylor expansion of $f(\epsilon,t)=\sin(\phi_{i,\mu}-\phi_{j,\nu}+\beta)$
around $\epsilon=0$. Due to Theorem 2.4.15 in~\cite[pp. 93--94]{ABR88} we get
\begin{multline}
\sin(\Delta\Omega_{\mu\nu}t+a_{ij,\mu\nu}+\beta+\sum_{l=1}^{r}\epsilon^{l}p_{ij,\mu\nu;l}(t))  =\sin(\Delta\Omega_{\mu\nu}t+a_{ij,\mu\nu}+\beta)\\
+\epsilon\cos(\Delta\Omega_{\mu\nu}t+a_{ij,\mu\nu}+\beta)p_{ij,\mu\nu;1}\nonumber \\
   +\dots+\epsilon^{s}R_{ij,\mu\nu}(\epsilon,t)\nonumber \\
  =:\sum_{l=0}^{s}\epsilon^{l}r_{ij,\mu\nu;l}(\beta,t)+\epsilon^{s}R_{ij,\mu\nu}(\epsilon,t)\label{eq:SineTaylorExpand}
\end{multline}
where the abbreviations $p_{ij,\mu\nu;l}:=p_{i,\mu;l}-p_{j,\nu;l}$
and $a_{ij,\mu\nu}:=a_{i,\mu}-a_{j,\nu}$ are used. Here, $R_{ij,\mu\nu}$
denotes the remainder of the Taylor expansion. The remainder
$R(\epsilon,t)\to0$ and $R_{ij,\mu\nu}(\epsilon,t)\in o(\epsilon)$ for all $t\in\mathbb{R}$ as $\epsilon\to0$. We get
\begin{multline}
	\left(\kappa_{ij,\mu\nu} (t)\right)_*=-\sum_{l=0}^{s}\epsilon^{l+1}\int_{-\infty}^{t}e^{-\epsilon(t-t')}r_{ij,\mu\nu;l}(\beta,t')\,\mathrm{d}t'\\+\epsilon^{s+1}\int_{t_{0}}^{t}e^{-\epsilon(t-t')}R_{ij,\mu\nu}(\epsilon,t')\,\mathrm{d}t'.\label{eq:kappaTaylorExpansion}
\end{multline}
In order to derive the expansion for $\left(\kappa_{ij,\mu\nu}\right)_{*}$
the integrals of the formula above have to be investigated. Faà di
Bruno's formula~(\ref{eq:FaaDiBrunoFormula}) provides us with
\begin{align}
r_{ij,\mu\nu;l}(\beta,t) & :=\sum_{(k_{1},\,\ldots\,,k_{l})\in T_{l}}\frac{\bigl(D^{k_{1}+\ldots+k_{l}}\sin)(\Delta\Omega_{\mu\nu}t+a_{ij,\mu\nu}+\beta)}{k_{1}!\cdot\ \cdots\ \cdot k_{l}!}\prod_{m=1}^{l}\biggl(p_{ij,\mu\nu;m}\biggr)^{k_{m}}.\label{eq:SinExpansionFaaDiBruno}
\end{align}
First, we conclude that the Taylor coefficients $r_{ij,\mu\nu;l}(\beta)$
can also be written in a (finite) multi-Fourier sum
\begin{align}
r_{ij,\mu\nu;l}(\beta,t)=\sum_{\mathbf{m}\in\tilde{\mathbf{M}}(r_{ij,\mu\nu;l}({\beta}))}d_{ij,\mu\nu;l;\mathbf{m}}({\beta})e^{\mathrm{i}\Delta\Omega(\mathbf{m})t}.\label{eq:SinExpansionFaaDiBrunoFourier}
\end{align}
Second, $d_{ij,\mu\nu;l;\mathbf{0}}\ne0$ if $\prod_{m=1}^{l}\biggl(p_{ij,\mu\nu,m}\biggr)^{k_{m}}$possess
a non vanishing term for $e^{\mathrm{i}\Delta\Omega_{\mu\nu}t}$.
With this, we are able to calculate the integrals in~(\ref{eq:kappaTaylorExpansion})
and get
\begin{align*}
\int_{-\infty}^{t}e^{-\epsilon(t-t')}r_{ij,\mu\nu;l}(\beta,t) & =\sum_{\mathbf{m}\in\tilde{\mathbf{M}}(r_{ij,\mu\nu;l}({\beta}))}d_{ij,\mu\nu;l;\mathbf{m}}({\beta})\int_{-\infty}^{t}e^{-\epsilon(t-t')}e^{\mathrm{i}\Delta\Omega(\mathbf{m})t'}\,\mathrm{d}t'\\
 & =\sum_{\mathbf{m}\in\tilde{\mathbf{M}}(r_{ij,\mu\nu;l}({\beta}))}\left(\frac{1}{\epsilon+\mathrm{i}\Delta\Omega(\mathbf{m})}\right)d_{ij,\mu\nu;l;\mathbf{m}}({\beta})e^{\mathrm{i}\Delta\Omega(\mathbf{m})t}.
\end{align*}
The last term in the equation~(\ref{eq:kappaTaylorExpansion}) is in $o(\epsilon^{s})$
which can be seen as follows. Since $R_{ij,\mu\nu}(\epsilon,t)\in o(\epsilon)$
as $\epsilon\to0$ for all $C>0$ there exist an $\epsilon_{0}(t)>0$
such that $\left|R_{ij,\mu\nu}(\epsilon,t)\right|<C\epsilon$ for
all $\epsilon<\epsilon_{0}(t)$. Due to the boundedness of all $p_{i,\mu;l}$
the remainder $R_{ij,\mu\nu}(\epsilon,t)$ is also bounded by some
positive number $\tilde{C}$. Thus, $R_{ij,\mu\nu}(\epsilon,t)\in o(\epsilon)$
uniformly on $\mathbb{R}$ as $\epsilon\to0$ , hence for all $C>0$
\begin{align*}
\left|\int_{t_{0}}^{t}e^{-\epsilon(t-t')}R(\epsilon,t')\,\mathrm{d}t'\right| & \le C\left|1-e^{\epsilon(t_{0}-t)}\right|.
\end{align*}
Finally, we end up with
\begin{align}
\left(\kappa_{ij,\mu\nu}\right)_{*} & (t)=-\sum_{l=0}^{s}\epsilon^{l}\sum_{\mathbf{m}\in\tilde{\mathbf{M}}(r_{ij,\mu\nu;l}^{\beta})}d_{ij,\mu\nu;l;\mathbf{m}}^{\beta}\frac{1}{1+\mathrm{i}\frac{\Delta\Omega(\mathbf{m})}{\epsilon}}e^{\mathrm{i}\Delta\Omega(\mathbf{m})t}+\tilde{R}_{ij,\mu\nu}(\epsilon,t)\label{eq:KappaPullbackExp}
\end{align}
where $\tilde{R}_{ij,\mu\nu}:=\left(\epsilon^{s+1}\int_{t_{0}}^{t}e^{-\epsilon(t-t')}R(\epsilon,t')\,\mathrm{d}t'\right)_{*}\in o(\epsilon^{s})$
uniformly on $\mathbb{R}$ as $\epsilon\to0$. By considering the
Laurent series
\begin{align*}
\frac{1}{1+\mathrm{i}\frac{\Delta\Omega(\mathbf{m})}{\epsilon}} & =-\sum_{n=1}^{\infty}i^{n}\left(\frac{\epsilon}{\Delta\Omega(\mathbf{m})}\right)^{n},
\end{align*}
which converges whenever $\epsilon<|\Delta\Omega(\mathbf{m})|,$ the
coefficients of the expansion $\left(\kappa_{ij,\mu\nu}\right)_{*}^{(s)} =\sum_{l=0}^{s}\epsilon^{l}k_{ij,\mu\nu;l}(t)$
are given by
\begin{align*}
\kappa_{ij,\mu\nu;0} & =-d_{ij,\mu\nu;0;\mathbf{0}}({\beta}),\\
\kappa_{ij,\mu\nu;l>0} & =-d_{ij,\mu\nu;l;\mathbf{0}}({\beta})+\sum_{n=0}^{l-1}\sum_{\mathbf{m}\in\tilde{\mathbf{M}}(r_{ij,\mu\nu;n}({\beta}))/\{0\}}i^{l-n}\frac{d_{ij,\mu\nu;n;\mathbf{m}}({\beta})}{\left(\Delta\Omega(\mathbf{m})\right)^{l-n}}e^{\mathrm{i}\Delta\Omega(\mathbf{m})t}.
\end{align*}
Note that, $\epsilon$ can always be chosen such that $\epsilon<|\Delta\Omega(\mathbf{m})|,$
since we consider the asymptotic limit $\epsilon\to0$. The coefficients
are determined via comparing the terms of both sides of the equation~(\ref{eq:KappaPullbackExp})
with respect to their order in $\epsilon$. In the
case \label{eq:KappaPullbackExpansion}$\mu\ne\nu$ we get $\kappa_{ij,\mu\nu;0}=0$.
All terms of order $O(\epsilon^{s+1})$ and $\tilde{R}_{ij,\mu\nu}(\epsilon,t)$
are summarized in $\hat{R}_{ij,\mu\nu}(\epsilon,t)\in o(\epsilon^{s})$
as $\epsilon\to0$.
%
%
\end{proof}
\begin{remark}
\label{rem:OmegaCondition_FormalExpansionKappa}Without considering
the asymptotic limit $\epsilon\to0$, the $s$-th order formal expansion
for $\kappa_{ij,\mu\nu}$ would have the form as it is given in Lemma~\ref{lem:KappaFormalExpansion}
under the condition that $\Delta\Omega(\mathbf{m})>\epsilon$ for
all $\mathbf{m}\in\bigcup_{l=1}^{s-1}\tilde{\mathbf{M}}(r_{ij,\mu\nu;l}^{\beta})$.
\end{remark}
\begin{lemma}\label{lem:Lemma2}
Suppose everything is given as in Lemma~\ref{lem:KappaFormalExpansion}. Then, if for all $\mu\in{1,\dots,M}$, $i\in\{1,\dots,N_{\mu}\}$ and
$l\in{1,\dots,r}$, $p_{i,\mu;l}(t)$ can be written completely in
terms of even modes, i.e., all $m_{1},\dots,m_{M}$ are even for $(m_{1},\dots,m_{M})\in\tilde{\boldsymbol{M}}(p_{i,\mu;l})$,
then for all $(n_{1},\dots,n_{M})\in\tilde{\boldsymbol{M}}(\kappa_{ij,\mu\nu}^{s}(t))$
holds: $n_{\lambda}$ are even for $\lambda\ne\mu,\nu$ and odd otherwise.
\end{lemma}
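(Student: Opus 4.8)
The plan is to track the parity of the frequency-mode vectors $\mathbf{m}=(m_{1},\dots,m_{M})$ through the construction of the coupling coefficients in Lemma~\ref{lem:KappaFormalExpansion}. By that lemma the Fourier modes of $\kappa_{ij,\mu\nu}^{s}$ are drawn entirely from the modes of the Taylor coefficients $r_{ij,\mu\nu;n}$ with $n\le s$: the pull-back integration against $e^{-\epsilon(t-t')}$ and the Laurent expansion of $(1+\mathrm{i}\Delta\Omega(\mathbf{m})/\epsilon)^{-1}$ only rescale the coefficient attached to each fixed exponential $e^{\mathrm{i}\Delta\Omega(\mathbf{m})t}$ and never create a new mode. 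Hence it suffices to show that every $\mathbf{m}\in\tilde{\mathbf{M}}(r_{ij,\mu\nu;l})$ has $m_{\mu},m_{\nu}$ odd and $m_{\lambda}$ even for $\lambda\ne\mu,\nu$ (the content being the off-diagonal case $\mu\ne\nu$), and to check separately that no constant mode leaks in.

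The key algebraic fact I would isolate first is that the reduction $\mathbf{m}\mapsto\mathbf{m}\bmod 2$ is additive and compatible with multiplication of multi-Fourier series: since $\Delta\Omega(\mathbf{m})+\Delta\Omega(\mathbf{n})=\Delta\Omega(\mathbf{m}+\mathbf{n})$, multiplying two finite series produces modes $\mathbf{m}+\mathbf{n}$, so parities add coordinatewise. In particular the functions all of whose modes have even parity in every slot are closed under sums, differences and products. By hypothesis each $p_{i,\mu;l}$ is of this "all-even" type, hence so is every $p_{ij,\mu\nu;m}=p_{i,\mu;m}-p_{j,\nu;m}$ and, by closure, so is every product $\prod_{m}(p_{ij,\mu\nu;m})^{k_{m}}$ appearing in Fa\`a di Bruno's formula (Lemma~\ref{lem:FaaDiBrunoFormula}).

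Next I would apply Fa\`a di Bruno to $r_{ij,\mu\nu;l}=\sum_{(k_{1},\dots,k_{l})\in T_{l}}(\text{const})\,(D^{k}\sin)(\Delta\Omega_{\mu\nu}t+a_{ij,\mu\nu}+\beta)\prod_{m}(p_{ij,\mu\nu;m})^{k_{m}}$. The base factor $(D^{k}\sin)(\Delta\Omega_{\mu\nu}t+\text{const})$ is $\pm\sin$ or $\pm\cos$ and carries only the modes $\pm\mathbf{m}_{0}$, where $\mathbf{m}_{0}$ has $+1$ in slot $\mu$, $-1$ in slot $\nu$ and $0$ elsewhere; its parity is $1$ in slots $\mu,\nu$ and $0$ elsewhere. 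Multiplying this base factor by the all-even product leaves the parity unchanged, giving exactly modes with $m_{\mu},m_{\nu}$ odd and $m_{\lambda}$ even for $\lambda\ne\mu,\nu$, which is the assertion. To rule out an even-parity constant term, I note that a nonzero coefficient $d_{ij,\mu\nu;l;\mathbf{0}}$ would require the all-even product to supply the mode $\mp\mathbf{m}_{0}$ to cancel the base frequency $\Delta\Omega_{\mu\nu}$; this is impossible by parity when $\mu\ne\nu$, so $d_{ij,\mu\nu;l;\mathbf{0}}=0$ and no even modes enter $\kappa_{ij,\mu\nu}^{s}$. (For $\mu=\nu$ the base factor is constant and all factors are even, so the diagonal blocks carry only even modes, consistent with $\kappa_{ij,\mu\mu}^{(1)}$ being constant.)

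The main obstacle I anticipate is the bookkeeping forced by the equivalence $\sim$ on mode vectors: resonances among the $\Omega_{\mu}$ could in principle let a single frequency class contain representatives of different parity, so that the parity statement is only meaningful for the vectors $\mathbf{m}$ actually produced by the expansion. I would therefore phrase and prove the parity claim at the level of the mode vectors generated by Fa\`a di Bruno and the integration step --- for which parity is well defined and, as shown above, invariant --- rather than for arbitrary representatives of the classes in $\tilde{\mathbf{M}}$.
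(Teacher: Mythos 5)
Your proof is correct and follows essentially the same route as the paper's: parity of mode vectors is preserved under differences and products of the $p_{ij,\mu\nu;l}$, the Fa\`a di Bruno base factor $(D^{k}\sin)(\Delta\Omega_{\mu\nu}t+a_{ij,\mu\nu}+\beta)$ contributes exactly the odd parity in slots $\mu,\nu$, and the pull-back integration (and Laurent rescaling) never creates new modes. Your two additions --- explicitly ruling out the constant mode $d_{ij,\mu\nu;l;\mathbf{0}}$ for $\mu\ne\nu$, and flagging that resonances could make parity ill-defined on the quotient $\tilde{\mathbf{M}}$ so the claim should be read at the level of the mode vectors actually generated --- are careful refinements of points the paper leaves implicit, not a different argument.
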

\begin{proof}
	It is fairly easy to verify that if $p_{i,\mu;l}(t)$ can be completely
	written in terms of even modes for all $i,\mu,l$, so can $p_{ij,\mu\nu;l}(t)$
	and moreover, the product $p_{ij,\mu\nu;l}\cdot p_{ij,\mu\nu;m}$.
	According to~(\ref{eq:SinExpansionFaaDiBruno}) $r_{ij,\mu\nu;l}$
	consists only of terms of the form $e^{\pm\mathrm{i}\Delta\Omega_{\mu\nu}t}\cdot e^{\mathrm{i}\Delta\Omega(m)t}$
	and hence $m_{\lambda}$ are even for every $\lambda\ne\mu,\nu$ and
	odd otherwise for all $(m_{1},\dots,m_{M})\in\tilde{\mathbf{M}}(r_{ij,\mu\nu;l})$.
	Since integration by time (\ref{eq:kappaTaylorExpansion}) does not
	make any changes in the modes, the same holds for $k_{ij,\mu\nu;l}(t)$
	and hence $\kappa_{ij,\mu\nu}^{s}(t)$.
\end{proof}

Now, we have everything which is needed to proof the main result Prop~\ref{thm:FormalExpansionThm}.
\begin{proof}
Note, whenever we write $\Delta\Omega(\mathbf{m})$, $\Delta\Omega^{(r)}(\mathbf{m})$ is meant. We omit the superscript for the sake of readability.
(i) Combing the formal time derivative of the first equation in~(\ref{eq:FormalExp_MultiTimeScale}),
the system equations~(\ref{eq:PhiDGL_general})\textendash (\ref{eq:KappaDGL_general})
and Lemma~\ref{lem:KappaFormalExpansion} we get
\begin{align}
\dot{\phi}_{i,\mu}^{(r)} & =\sum_{l=0}^{r}\epsilon^{l}\frac{\partial\Omega_{\mu}^{(r)}}{\partial\tau_{l}}+\sum_{l=1}^{r}\epsilon^{l}\dot{p}_{i,\mu;l}(t)=-\frac{1}{N}\sum_{\nu=1}^{M}\sum_{j=1}^{N_{\nu}}\sum_{l=0}^{r}\sum_{n=0}^{r}\epsilon^{l+n}k_{ij,\mu\nu;l}(t)r_{ij,\mu\nu,n}({\alpha},t).\label{eq:FormalExpan_phiDer}
\end{align}
Assume that $p_{i,\mu;l}(t)=\sum_{\mathbf{m}\in\tilde{\mathbf{M}}(p_{i,\mu;l})}c_{i,\mu;l;\mathbf{m}}e^{\mathrm{i}\Delta\Omega(\mathbf{m})t}$
with $|\tilde{\mathbf{M}}(p_{i,\mu;l})|\in\mathbb{N}$. We get
\begin{align}
\frac{\partial\Omega_{\mu}^{(r)}}{\partial t} & =\frac{1}{N}\sum_{\nu=1}^{M}\sum_{j=1}^{N_{\nu}}\sum_{\mathbf{m}\in\tilde{\mathbf{M}}(r_{ij,\mu\nu;0}({\alpha}))}d_{ij,\mu\nu;0;\mathbf{m}}({\alpha})d_{ij,\mu\nu;0;\mathbf{0}}({\beta})e^{\mathrm{i}\Delta\Omega(\mathbf{m})t},\nonumber \\
\frac{\partial\Omega_{\mu}^{(r)}}{\partial\tau_{l}}+\dot{p}_{i,\mu;l}(t) & =-\frac{1}{N}\sum_{\nu=1}^{M}\sum_{j=1}^{N_{\nu}}\sum_{m=0}^{l}k_{ij,\mu\nu;m}r_{ij,\mu\nu,l-m}({\alpha})\label{eq:GenPerturbationOmegaAndP}
\end{align}
by comparing both sides of the equation~(\ref{eq:FormalExpan_phiDer}) with
respect to the order of $\epsilon$. Due to Lemma~\ref{lem:KappaFormalExpansion}
\begin{align*}
\frac{\partial\Omega_{\mu}^{(r)}(\tau_{0},\dots,\tau_{m})}{\partial t} & =\frac{1}{N}\sum_{j=1}^{N_{\mu}}d_{ij,\mu\mu;0;\mathbf{0}}({\alpha})d_{ij,\mu\mu;0;\mathbf{0}}({\beta})=:\Omega_{\mu,0}\in\mathbb{R}.
\end{align*}
This equation can be solved by $\tilde{\Omega}_{\mu}=\Omega_{\mu,0}t+\tilde{\Omega}_{\mu,0}(\tau_{1},\dots,\tau_{m})$.
Due to our assumptions the right hand side of equation~(\ref{eq:GenPerturbationOmegaAndP})
can be written as
\begin{multline*}
-\sum_{\nu=1}^{M}\sum_{j=1}^{N_{\nu}}\sum_{m=0}^{l}k_{ij,\mu\nu;m}r_{ij,\mu\nu,l-m}({\alpha})=\\
\sum_{\nu=1}^{M}\sum_{j=1}^{N_{\nu}}\sum_{m=0}^{l}\left[d_{ij,\mu\nu;m;\mathbf{0}}({\beta})-\sum_{n=0}^{m-1}\sum_{\mathbf{m}\in\tilde{\mathbf{M}}(r_{ij,\mu\nu;n}({\beta}))/\{0\}}i^{m-n}\frac{d_{ij,\mu\nu;n;\mathbf{m}}({\beta})}{\left(\Delta\Omega(\mathbf{m})\right)^{m-n}}e^{\mathrm{i}\Delta\Omega(\mathbf{m})t}\right]\times\\
\left[d_{ij,\mu\nu;l-m;\mathbf{0}}({\alpha})+\sum_{\mathbf{m}\in\tilde{\mathbf{M}}(r_{ij,\mu\nu;l-m}({\alpha}))/\{0\}}d_{ij,\mu\nu;l-m;\mathbf{m}}({\alpha})e^{\mathrm{i}\Delta\Omega(\mathbf{m})t}\right].
\end{multline*}
By using 1.) and 2.) of Lemma~\ref{lem:KappaFormalExpansion} we
find
\begin{multline*}
\frac{\partial\Omega_{\mu}^{(r)}}{\partial\tau_{l}} =\frac{1}{N}\sum_{\nu=1}^{M}\sum_{j=1}^{N_{\nu}}\sum_{m=0}^{l}\left[d_{ij,\mu\nu;m;\mathbf{0}}({\beta})d_{ij,\mu\nu;l-m;\mathbf{0}}({\alpha})\right.\\
\left.-\sum_{n=0}^{m-1}\sum_{\mathbf{m}\in\tilde{\mathbf{M}}(r_{ij,\mu\nu;n}({\beta}))/\{0\}}i^{m-n}\frac{d_{ij,\mu\nu;n;\mathbf{m}}({\beta})d_{ij,\mu\nu;l-m;\mathbf{-m}}({\alpha})}{\left(\Delta\Omega(\mathbf{m})\right)^{m-n}}\right],
\end{multline*}
\begin{multline*}
	\dot{p}_{i,\mu;l}  =\\
	\frac{1}{N}\sum_{\nu=1}^{M}\sum_{j=1}^{N_{\nu}}\sum_{m=0}^{l}\left[d_{ij,\mu\nu;m;\mathbf{0}}({\beta})\left(\sum_{\mathbf{m}\in\tilde{\mathbf{M}}(r_{ij,\mu\nu;l-m}({\beta}))/\{0\}}d_{ij,\mu\nu;l-m;\mathbf{m}}({\alpha})e^{\mathrm{i}\Delta\Omega(\mathbf{m})t}\right)\right.\\
	 \left.-d_{ij,\mu\nu;l-m;\mathbf{0}}({\alpha})\left(\sum_{n=0}^{m-1}\sum_{\mathbf{m}\in\tilde{\mathbf{M}}(r_{ij,\mu\nu;n}({\beta}))/\{0\}}i^{m-n}\frac{d_{ij,\mu\nu;n;\mathbf{m}}({\beta})}{\left(\Delta\Omega(\mathbf{m})\right)^{m-n}}e^{\mathrm{i}\Delta\Omega(\mathbf{m})t}\right)\right.\\
	 \left.-\sum_{n=0}^{m-1}\sum_{\overset{\scriptstyle \mathbf{m}\in\tilde{\mathbf{M}}(r_{ij,\mu\nu;n}({\beta}))/\{0\}}{\mathbf{n}\in\tilde{\mathbf{M}}(r_{ij,\mu\nu;l-m}({\alpha}))/\{0,-\mathbf{m}\}}}i^{m-n}\frac{d_{ij,\mu\nu;n;\mathbf{m}}({\beta})d_{ij,\mu\nu;l-m;\mathbf{n}}({\alpha})}{\left(\Delta\Omega(\mathbf{m})\right)^{m-n}}e^{\mathrm{i}\Delta\Omega(\mathbf{m}+\mathbf{n})t}\right].
\end{multline*}
Note that we use the multi-time scale function $\Omega_{\mu}^{(r)}$
to deal with all terms of the expansion describing a linear growth.
All the other terms are considered to determine the behaviour of $p_{i,\mu;l}.$
With this ansatz we are able to maintain the boundedness of $p_{i,\mu;l}$
while letting $\Omega_{\mu}^{(r)}$ alone describing unbounded behaviour
in $t\in\mathbb{R}$. Note further that $\Omega_{\mu}^{(r)}$ can be directly
computed if all functions $p_{i,\mu;k\le l}(t)$ are known. Thus,
we finally end up with
\begin{align*}
\Omega_{\mu}^{(r)} & =\sum_{l=0}^{r}\epsilon^{l}\Omega_{\mu,l}t.
\end{align*}
We assume now that for all $i,\mu$ and $l>1$, $p_{i,\mu;l}$ can
be written completely in terms of even modes, c.f., Lemma~\ref{lem:Lemma2}.
In particular, $(\mu\nu)\notin M(p_{i,\mu;l})$. Thus, $d_{ij,\mu\nu;l;\mathbf{0}}({\beta})=0$
by~(\ref{eq:SinExpansionFaaDiBruno}) for all $\mu,\nu=1,\dots,M$,
$i=1,\dots,N_{\mu}$, $j=1,\dots,N_{\nu}$ and $l\ge1$ and 
\begin{multline*}
\dot{p}_{i,\mu;l} =-\frac{1}{N}\sum_{\nu=1}^{M}\sum_{j=1}^{N_{\nu}}\sum_{\overset{\scriptstyle m=1}{n=0}}^{l,m-1}\sum_{\overset{\scriptstyle \mathbf{m}\in\tilde{\mathbf{M}}(r_{ij,\mu\nu;n}({\beta}))/\{0\}}{\mathbf{n}\in\tilde{\mathbf{M}}(r_{ij,\mu\nu;l-m}({\alpha}))/\{0,-\mathbf{m}\}}}\hspace{-10pt}i^{m-n}\times\\
\frac{d_{ij,\mu\nu;n;\mathbf{m}}({\beta})d_{ij,\mu\nu;l-m;\mathbf{n}}({\alpha})}{\left(\Delta\Omega(\mathbf{m})\right)^{m-n}}e^{\mathrm{i}\Delta\Omega(\mathbf{m}+\mathbf{n})t}.
\end{multline*}
Hence, we get an equation to determine the value of $p_{i,\mu;l}$
inductively. Due to Lemma~\ref{lem:Lemma2}, we
know that if all $p_{i,\mu;l}$ can be written in terms of even modes
then $m_{\mu}$ and $m_{\nu}$ are odd for all $(m_{1},\dots,m_{M})\in\tilde{\mathbf{M}}(r_{ij,\mu\nu;l}({\alpha}))$.
Therefore $p_{i,\mu;l}$ can be written in terms of even modes. This
is consistent with our assumption that $p_{i,\mu;l}$ can be written
in terms of even modes. Consider further 
\begin{align*}
\dot{p}_{i,\mu;1} & =-\frac{1}{N}\sum_{\nu=1}^{M}\sum_{j=1}^{N_{\nu}}\sum_{\overset{\scriptstyle \mathbf{m}\in\tilde{\mathbf{M}}(r_{ij,\mu\nu;0}({\beta}))/\{0\}}{\mathbf{n}\in\tilde{\mathbf{M}}(r_{ij,\mu\nu;0}({\alpha}))/\{0,-\mathbf{m}\}}}i\frac{d_{ij,\mu\nu;0;\mathbf{m}}({\beta})d_{ij,\mu\nu;0;\mathbf{n}}({\alpha})}{\left(\Delta\Omega(\mathbf{m})\right)^{m-n}}e^{\mathrm{i}\Delta\Omega(\mathbf{m}+\mathbf{n})t}.
\end{align*}
The expression for $p_{i,\mu,1}$ can be found by integration
\begin{align*}
p_{i,\mu;1}(t) & =-\frac{1}{N}\sum_{\nu=1}^{M}\sum_{j=1}^{N_{\nu}}\sum_{\overset{\scriptstyle \mathbf{m}\in\tilde{\mathbf{M}}(r_{ij,\mu\nu;0}({\beta}))/\{0\}}{\mathbf{n}\in\tilde{\mathbf{M}}(r_{ij,\mu\nu;0}({\alpha}))/\{0,-\mathbf{m}\}}}\frac{d_{ij,\mu\nu;0;\mathbf{m}}({\beta})d_{ij,\mu\nu;0;\mathbf{n}}({\alpha})}{\left(\Delta\Omega(\mathbf{m})\right)\Delta\Omega(\mathbf{m}+\mathbf{n})}e^{\mathrm{i}\Delta\Omega(\mathbf{m}+\mathbf{n})t}.
\end{align*}
 Since $p_{i,\mu;1}$ can be written as a Fourier sum the same holds
true for all $p_{i,\mu,l}$ by induction. This is consistent with
our assumption for $p_{i,\mu,l}$ in the beginning of this proof.
The expressions $k_{ij,\mu\nu;l}$ follow from Lemma~\ref{lem:KappaFormalExpansion}.
Furthermore, analog to Lemma~\ref{lem:KappaFormalExpansion},
from Theorem 2.4.15 in~\cite[pp. 93--94]{ABR88} we conclude $\kappa_{ij,\mu\nu}(\epsilon,t)-\kappa_{ij,\mu\nu}^{r}(\epsilon,t)\in O(\epsilon^{r})$
and $\phi_{i,\mu}(\epsilon,t)-\phi_{i,\mu}^{(r)}(\epsilon,t)\in O(\epsilon^{r})$
for $t\in O(1/\epsilon^{r})$ as $\epsilon\to0$.\\
(ii) To achieve this result we apply now (i) which allows for iteratively determining
the function appearing in the asymptotic expansion. 

$0$-th order: For the expansion of the sine-function from equation~(\ref{eq:SinExpansionFaaDiBruno})
we find $r_{ij,\mu\mu;0}({\beta})=\sin(a_{ij,\mu\mu}+\beta)=d_{ij,\mu\mu;0;\mathbf{0}}({\beta})$
and
\begin{align*}
r_{ij,\mu\nu;0}({\beta}) & =\sin(\Delta\Omega^{(1)}_{\mu\nu}t+a_{ij,\mu\nu}+\beta)=d_{ij,\mu\nu;0;(\mu\nu)}({\beta})e^{\mathrm{i}\Delta\Omega^{(1)}_{\mu\nu}t}+c.c.
\end{align*}
where $d_{ij,\mu\nu;0;(\mu\nu)}({\beta}):=(1/2\mathrm{i})e^{\mathrm{i}\left(a_{ij,\mu\nu}+\beta\right)}$
and $c.c.$ stands for complex conjugated. Hence, for the coupling
matrix we find
\begin{align*}
\kappa_{ij,\mu\mu;0} & =-\sin(a_{ij,\mu\mu}+\beta).
\end{align*}
Depending on the cluster the zero-th order approximation for the frequencies
read
\begin{align*}
\Omega_{\mu,0} & =\frac{1}{N}\sum_{j=1}^{N_{\mu}}\sin(a_{ij,\mu\mu}+\beta)\sin(a_{ij,\mu\mu}+\beta)=\frac{n_{\mu}}{2}\left(\cos(\alpha-\beta)-\cos(\alpha+\beta)\right)
\end{align*}
for all $\mu=1,\dots,M_{1}$ and analogously $\Omega_{\mu,0}=\frac{n_{\mu}}{2}\cos(\alpha-\beta)$
for all $\mu=M_{1}+1,\dots,M$.

$1$-th order: Since we know the $0$-th order expansion we are able
to calculate the next order. We get
\begin{align*}
\Omega_{\mu,1} & =-\frac{1}{N}\sum_{\overset{\nu=1}{\nu\ne\mu}}^{M_{1}}\sum_{j=1}^{N_{\nu}}\sum_{\mathbf{m}\in\{(\mu\nu),-(\mu\nu)\}}i\frac{d_{ij,\mu\nu;0;\mathbf{m}}({\beta})d_{ij,\mu\nu;0;\mathbf{-m}}({\alpha})}{\Delta\Omega^{(1)}(\mathbf{m})}\\
 & =\frac{1}{2N}\sum_{\overset{\nu=1}{\nu\ne\mu}}^{M}\sum_{j=1}^{N_{\nu}}\frac{1}{2\mathrm{i}}\left(\frac{e^{\mathrm{i}\left(a_{ij,\mu\nu}+\beta\right)}e^{-\mathrm{i}\left(a_{ij,\mu\nu}+\alpha\right)}}{\Delta\Omega^{(1)}_{\mu\nu}}-\frac{e^{-\mathrm{i}\left(a_{ij,\mu\nu}+\beta\right)}e^{\mathrm{i}\left(a_{ij,\mu\nu}+\alpha\right)}}{\Delta\Omega^{(1)}_{\mu\nu}}\right)\\
 & =\frac{1}{2N}\sum_{\overset{\nu=1}{\nu\ne\mu}}^{M}\sum_{j=1}^{N_{\nu}}\frac{1}{2\mathrm{i}}\left(\frac{e^{\mathrm{i}(\beta-\alpha)}}{\Delta\Omega^{(1)}_{\mu\nu}}-\frac{e^{-\mathrm{i}(\beta-\alpha)}}{\Delta\Omega^{(1)}_{\mu\nu}}\right)=-\sum_{\overset{\nu=1}{\nu\ne\mu}}^{M}\frac{n_{\nu}}{2\Delta\Omega^{(1)}_{\mu\nu}}\sin(\alpha-\beta).
\end{align*}
For all $\mu=1,\dots,M_{1}$ we get
\begin{align*}
\dot{p}_{\mu;1} & =-\frac{1}{N}\sum_{\overset{\nu=1}{\nu\ne\mu}}^{M_{1}}\sum_{j=1}^{N_{\nu}}\sum_{\mathbf{m}\in\{(\mu\nu),-(\mu\nu)\}}\left(i\frac{d_{ij,\mu\nu;0;\mathbf{m}}({\beta})d_{ij,\mu\nu;0;\mathbf{m}}({\alpha})}{\Delta\Omega^{(1)}(\mathbf{m})}\right)e^{\mathrm{i}2\Delta\Omega^{(1)}(\mathbf{m})t}\\
 & +\sum_{\nu=M_{1}+1}^{M}\sum_{j=1}^{N_{\nu}}\sum_{\mathbf{m}\in\{(\mu\nu),-(\mu\nu)\}}\left(i\frac{d_{ij,\mu\nu;0;\mathbf{m}}({\beta})d_{ij,\mu\nu;0;\mathbf{m}}({\alpha})}{\Delta\Omega^{(1)}(\mathbf{m})}\right)e^{\mathrm{i}2\Delta\Omega^{(1)}(\mathbf{m})t}\\
 & =-\frac{1}{2}\sum_{\overset{\nu=1}{\nu\ne\mu}}^{M}\frac{n_{\nu}}{2\mathrm{i}}\left(\frac{e^{\mathrm{i}({a}_{\mu}-{a}_{\nu}+\alpha+\beta)}}{\Delta\Omega^{(1)}_{\mu\nu}}\right)e^{\mathrm{i}2\Delta\Omega^{(1)}_{\mu\nu}t}-\frac{n_{\nu}}{2\mathrm{i}}\left(\frac{e^{-\mathrm{i}({a}_{\mu}-{a}_{\nu}+\alpha+\beta)}}{\Delta\Omega^{(1)}_{\mu\nu}}\right)e^{-\mathrm{i}2\Delta\Omega^{(1)}_{\mu\nu}t}
\end{align*}
and for all $\mu=M_{1}+1,\dots,M$ we get
\begin{align*}
\dot{p}_{i,\mu;1} & =-\frac{1}{N}\sum_{\nu=1}^{M_{1}}\sum_{j=1}^{N_{\nu}}\sum_{\mathbf{m}\in\{(\mu\nu),-(\mu\nu)\}}\left(i\frac{d_{ij,\mu\nu;0;\mathbf{m}}({\beta})d_{ij,\mu\nu;0;\mathbf{m}}({\alpha})}{\Delta\Omega^{(1)}(\mathbf{m})}\right)e^{\mathrm{i}2\Delta\Omega^{(1)}(\mathbf{m})t}\\
 & +\sum_{\overset{\nu=M_{1}+1}{\nu\ne\mu}}^{M}\sum_{j=1}^{N_{\nu}}\sum_{\mathbf{m}\in\{(\mu\nu),-(\mu\nu)\}}\left(i\frac{d_{ij,\mu\nu;0;\mathbf{m}}({\beta})d_{ij,\mu\nu;0;\mathbf{m}}({\alpha})}{\Delta\Omega^{(1)}(\mathbf{m})}\right)e^{\mathrm{i}2\Delta\Omega^{(1)}(\mathbf{m})t}\\
 & =-\frac{1}{2}\sum_{\nu=1}^{M}\frac{n_{\nu}}{2\mathrm{i}}\left(\frac{e^{\mathrm{i}(a_{i,\mu}-{a}_{\nu}+\alpha+\beta)}}{\Delta\Omega^{(1)}_{\mu\nu}}\right)e^{\mathrm{i}2\Delta\Omega^{(1)}_{\mu\nu}t}-\frac{n_{\nu}}{2\mathrm{i}}\left(\frac{e^{-\mathrm{i}(a_{i,\mu}-{a}_{\nu}+\alpha+\beta)}}{\Delta\Omega^{(1)}_{\mu\nu}}\right)e^{-\mathrm{i}2\Delta\Omega^{(1)}_{\mu\nu}t}.
\end{align*}
Thus, solving this fairly easy differential equation
the following expression is obtained for $\mu=1,\dots,M_{1}$
\begin{align*}
p_{\mu;1} & =\frac{1}{4}\sum_{\overset{\nu=1}{\nu\ne\mu}}^{M}\frac{n_{\nu}}{2}\left(\frac{e^{\mathrm{i}({a}_{\mu}-{a}_{\nu}+\alpha+\beta)}}{\left(\Delta\Omega^{(1)}_{\mu\nu}\right)^{2}}\right)e^{\mathrm{i}2\Delta\Omega^{(1)}_{\mu\nu}t}+\frac{n_{\nu}}{2}\left(\frac{e^{-\mathrm{i}({a}_{\mu}-{a}_{\nu}+\alpha+\beta)}}{\left(\Delta\Omega^{(1)}_{\mu\nu}\right)^{2}}\right)e^{-\mathrm{i}2\Delta\Omega^{(1)}_{\mu\nu}t}.
\end{align*}
 Analogously we find the expression for $p_{i,\mu;1}$ with $\mu=M_{1}+1,\dots,M$.
For the coupling matrix we get
\begin{align*}
\kappa_{ij,\mu\nu;1} & =\sum_{\mathbf{m}\in\{(\mu\nu),-(\mu\nu)\}}\left(i\frac{d_{ij,\mu\nu;0;\mathbf{m}}({\beta})}{\Delta\Omega^{(1)}(\mathbf{m})}\right)e^{\mathrm{i}\Delta\Omega^{(1)}(\mathbf{m})t}\\
&=\frac{e^{\mathrm{i}\left(a_{ij,\mu\nu}+\beta\right)}}{2\Delta\Omega^{(1)}_{\mu\nu}}e^{\mathrm{i}\Delta\Omega^{(1)}_{\mu\nu}t}+\frac{e^{-\mathrm{i}\left(a_{ij,\mu\nu}+\beta\right)}}{2\Delta\Omega^{(1)}_{\mu\nu}}e^{-\mathrm{i}\Delta\Omega^{(1)}_{\mu\nu}t}.
\end{align*}
\end{proof}
\section*{Acknowledgments}
We thank Vladimir Nekorkin for stimulating discussions.
\bibliographystyle{siamplain}
\bibliography{refs_MC_adaptive_networks}

\begin{thebibliography}{10}

\bibitem{ABB00}
{\sc L.~F. Abbott and S.~Nelson}, {\em Synaptic plasticity: taming the beast},
  Nat. Neurosci., 3 (2000), pp.~1178--1183.

\bibitem{ABR88}
{\sc R.~Abraham, J.~E. Marsden, and T.~Ratiu}, {\em Manifolds, Tensor Analysis,
  and Applications}, Springer, New York, 1988.

\bibitem{ABR04}
{\sc D.~M. Abrams and S.~H. Strogatz}, {\em Chimera states for coupled
  oscillators}, Phys. Rev. Lett., 93 (2004), 174102 (4~pages),
  \url{https://doi.org/10.1103/physrevlett.93.174102}.

\bibitem{ACE05a}
{\sc J.~A. Acebr{\'o}n, L.~L. Bonilla, C.~J.~P. Vicente, F.~Ritort, and
  R.~Spigler}, {\em The kuramoto model: A simple paradigm for synchronization
  phenomena}, Rev. Mod. Phys., 77 (2005), pp.~137--185.

\bibitem{AOK15}
{\sc T.~Aoki}, {\em Self-organization of a recurrent network under ongoing
  synaptic plasticity}, Neural Networks, 62 (2015), pp.~11--19,
  \url{https://doi.org/10.1016/j.neunet.2014.05.024}.

\bibitem{AOK09}
{\sc T.~Aoki and T.~Aoyagi}, {\em Co-evolution of phases and connection
  strengths in a network of phase oscillators}, Phys. Rev. Lett., 102 (2009),
  034101 (4~pages), \url{https://doi.org/10.1103/physrevlett.102.034101}.

\bibitem{AOK11}
{\sc T.~Aoki and T.~Aoyagi}, {\em Self-organized network of phase oscillators
  coupled by activity-dependent interactions}, Phys. Rev. E, 84 (2011), 066109
  (14~pages), \url{https://doi.org/10.1103/physreve.84.066109}.

\bibitem{ASH16a}
{\sc P.~Ashwin, C.~Bick, and O.~Burylko}, {\em Identical phase oscillator
  networks: Bifurcations, symmetry and reversibility for generalized coupling},
  Front. Appl. Math. Stat., 2 (2016),
  \url{https://doi.org/10.3389/fams.2016.00007}.

\bibitem{ASH08}
{\sc P.~Ashwin, O.~Burylko, and Y.~Maistrenko}, {\em Bifurcation to
  heteroclinic cycles and sensitivity in three and four coupled phase
  oscillators}, Physica D, 237 (2008),
  \url{https://doi.org/doi:10.1016/j.physd.2007.09.015}.

\bibitem{AVA18}
{\sc V.~Avalos-Gayt\'an, J.~A. Almendral, I.~Leyva, F.~Battiston, V.~Nicosia,
  V.~Latora, and S.~Boccaletti}, {\em Emergent explosive synchronization in
  adaptive complex networks}, Phys. Rev. E, 97 (2018), 042301 (7~pages),
  \url{https://doi.org/10.1103/physreve.97.042301}.

\bibitem{BLA13}
{\sc K.~Blaha, J.~Lehnert, A.~Keane, T.~Dahms, P.~H{\"o}vel, E.~Sch{\"o}ll, and
  J.~L. Hudson}, {\em Clustering in delay-coupled smooth and relaxational
  chemical oscillators}, Phys. Rev. E, 88 (2013), 062915 (9~pages),
  \url{https://doi.org/10.1103/physreve.88.062915}.

\bibitem{BLA99a}
{\sc B.~Blasius, A.~Huppert, and L.~Stone}, {\em Complex dynamics and phase
  synchronization in spatially extended ecological systems}, Nature, 399
  (1999), pp.~354--359.

\bibitem{BOY04}
{\sc S.~Boyd and L.~Vandenberghe}, {\em Convex Optimization}, Cambridge
  University Press, 2004.

\bibitem{BUR11}
{\sc O.~Burylko and A.~Pikovsky}, {\em Desynchronization transitions in
  nonlinearly coupled phase oscillators}, Physica D, 240 (2011),
  \url{https://doi.org/10.1016/j.physd.2011.05.016}.

\bibitem{CAP08a}
{\sc N.~Caporale and Y.~Dan}, {\em Spike timing$-$dependent plasticity: A
  hebbian learning rule}, Annu. Rev. Neurosci., 31 (2008), pp.~25--46,
  \url{https://doi.org/10.1146/annurev.neuro.31.060407.125639}.

\bibitem{CHO09}
{\sc C.~U. Choe, T.~Dahms, P.~H{\"o}vel, and E.~Sch{\"o}ll}, {\em Controlling
  synchrony by delay coupling in networks: from in-phase to splay and cluster
  states}, Phys. Rev. E, 81 (2010), 025205(R) (4~pages),
  \url{https://doi.org/10.1103/physreve.81.025205}.

\bibitem{DAH12}
{\sc T.~Dahms, J.~Lehnert, and E.~Sch{\"o}ll}, {\em Cluster and group
  synchronization in delay-coupled networks}, Phys. Rev. E, 86 (2012), 016202
  (10~pages), \url{https://doi.org/10.1103/physreve.86.016202}.

\bibitem{GRO08a}
{\sc T.~Gross and B.~Blasius}, {\em Adaptive coevolutionary networks: a
  review}, J. R. Soc. Interface, 5 (2008), pp.~259--271,
  \url{https://doi.org/10.1098/rsif.2007.1229},
  \url{https://arxiv.org/abs/http://rsif.royalsocietypublishing.org/content/5/20/259.full.pdf+html}.

\bibitem{GUS15a}
{\sc A.~Gushchin, E.~Mallada, and A.~Tang}, {\em Synchronization of
  phase-coupled oscillators with plastic coupling strength}, in Information
  Theory and Applications Workshop, 2015, pp.~291--300,
  \url{https://doi.org/10.1109/ita.2015.7309003}.

\bibitem{HAG12}
{\sc A.~M. Hagerstrom, T.~E. Murphy, R.~Roy, P.~H{\"o}vel, I.~Omelchenko, and
  E.~Sch{\"o}ll}, {\em Experimental observation of chimeras in coupled-map
  lattices}, Nature Phys., 8 (2012), pp.~658--661,
  \url{https://doi.org/10.1038/nphys2372}.

\bibitem{HAM07}
{\sc C.~Hammond, H.~Bergman, and P.~Brown}, {\em Pathological synchronization
  in {P}arkinson's disease: networks, models and treatments}, Trends Neurosci.,
  30 (2007), pp.~357--364.

\bibitem{HEB49}
{\sc D.~Hebb}, {\em {The Organization of Behavior: A Neuropsychological
  Theory}}, Wiley, New York, new edition~ed., 1949.

\bibitem{HOP96}
{\sc F.~C. Hoppensteadt and E.~M. Izhikevich}, {\em Synaptic organizations and
  dynamical properties of weakly connected neural oscillators {II}. {L}earning
  phase information}, Biol. Cybern., 75 (1996), pp.~129 --135,
  \url{https://doi.org/10.1007/s004220050280}.

\bibitem{JAI01}
{\sc S.~Jain and S.~Krishna}, {\em A model for the emergence of cooperation,
  interdependence, and structure in evolving networks}, Proc. Natl. Acad. Sci.,
  98 (2001), pp.~543--547, \url{https://doi.org/10.1073/pnas.98.2.543}.

\bibitem{KAS16a}
{\sc D.~V. Kasatkin and V.~I. Nekorkin}, {\em Dynamics of the phase oscillators
  with plastic couplings}, Radiophysics and Quantum Electronics, 58 (2016),
  pp.~877--891, \url{https://doi.org/10.1007/s11141-016-9662-1}.

\bibitem{KAS17}
{\sc D.~V. Kasatkin, S.~Yanchuk, E.~Sch{\"o}ll, and V.~I. Nekorkin}, {\em
  {S}elf-organized emergence of multi-layer structure and chimera states in
  dynamical networks with adaptive couplings}, Phys. Rev. E, 96 (2017), 062211
  (5~pages), \url{https://doi.org/10.1103/physreve.96.062211}.

\bibitem{KLO11}
{\sc P.~E. Kloeden and M.~Rasmussen}, {\em Nonautonomous {D}ynamical
  {S}ystems}, American Mathemaical Society, Providence, Rhode Island, 2011.

\bibitem{KUR84}
{\sc Y.~Kuramoto}, {\em Chemical Oscillations, Waves and Turbulence},
  Springer-Verlag, Berlin, 1984.

\bibitem{KUR02a}
{\sc Y.~Kuramoto and D.~Battogtokh}, {\em {Coexistence of Coherence and
  Incoherence in Nonlocally Coupled Phase Oscillators.}}, Nonlin. Phen. in
  Complex Sys., 5 (2002), pp.~380--385.

\bibitem{LEH14}
{\sc J.~Lehnert, P.~H{\"o}vel, A.~A. Selivanov, A.~L. Fradkov, and
  E.~Sch{\"o}ll}, {\em Controlling cluster synchronization by adapting the
  topology}, Phys. Rev. E, 90 (2014), 042914 (8~pages),
  \url{https://doi.org/10.1103/physreve.90.042914}.

\bibitem{LUE16}
{\sc L.~L{\"u}cken, O.~Popovych, P.~Tass, and S.~Yanchuk}, {\em
  {N}oise-enhanced coupling between two oscillators with long-term plasticity},
  Phys. Rev. E, 93 (2016), 032210 (15~pages).

\bibitem{LUE12a}
{\sc L.~L{\"u}cken and S.~Yanchuk}, {\em Two-cluster bifurcations in systems of
  globally pulse-coupled oscillators}, Physica D, 241 (2012), pp.~350--359,
  \url{https://doi.org/10.1016/j.physd.2011.10.017}.

\bibitem{MAI07}
{\sc Y.~Maistrenko, B.~Lysyansky, C.~Hauptmann, O.~Burylko, and P.~A. Tass},
  {\em Multistability in the kuramoto model with synaptic plasticity}, Phys.
  Rev. E, 75 (2007), 066207 (8~pages),
  \url{https://doi.org/10.1103/physreve.75.066207}.

\bibitem{MAR97a}
{\sc H.~Markram, J.~L\"ubke, and B.~Sakmann}, {\em Regulation of synaptic
  efficacy by coincidence of postsynaptic aps and epsps.}, Science, 275 (1997),
  pp.~213--215, \url{https://doi.org/10.1126/science.275.5297.213}.

\bibitem{NEK16}
{\sc V.~I. Nekorkin and D.~V. Kasatkin}, {\em Dynamics of a network of phase
  oscillators with plastic couplings}, AIP Conference Proceedings, 1738 (2016),
  210010 (4~pages), \url{https://doi.org/10.1063/1.4951993}.

\bibitem{OME11}
{\sc I.~Omelchenko, Y.~Maistrenko, P.~H{\"o}vel, and E.~Sch{\"o}ll}, {\em Loss
  of coherence in dynamical networks: spatial chaos and chimera states}, Phys.
  Rev. Lett., 106 (2011), 234102 (4~pages),
  \url{https://doi.org/10.1103/physrevlett.106.234102}.

\bibitem{OME12b}
{\sc O.~E. Omel'chenko and M.~Wolfrum}, {\em Nonuniversal transitions to
  synchrony in the sakaguchi-kuramoto model}, Phys. Rev. Lett., 109 (2012),
  164101 (4~pages), \url{https://doi.org/10.1103/physrevlett.109.164101}.

\bibitem{PER10c}
{\sc P.~Perlikowski, S.~Yanchuk, O.~Popovych, and P.~Tass}, {\em Periodic
  patterns in a ring of delay-coupled oscillators}, Phys. Rev. E, 82 (2010),
  036208 (12~pages), \url{https://doi.org/10.1103/physreve.82.036208}.

\bibitem{PIC11a}
{\sc C.~B. Picallo and H.~Riecke}, {\em Adaptive oscillator networks with
  conserved overall coupling: {S}equential firing and near-synchronized
  states}, Phys. Rev. E, 83 (2011), 036206 (12~pages),
  \url{https://doi.org/10.1103/physreve.83.036206}.

\bibitem{PIK08}
{\sc A.~Pikovsky and M.~G. Rosenblum}, {\em Partially integrable dynamics of
  hierarchical populations of coupled oscillators}, Phys. Rev. Lett., 101
  (2008), 264103 (4~pages),
  \url{https://doi.org/10.1103/physrevlett.101.264103}.

\bibitem{PIK01}
{\sc A.~Pikovsky, M.~G. Rosenblum, and J.~Kurths}, {\em Synchronization: a
  universal concept in nonlinear sciences}, Cambridge University Press,
  Cambridge, 2001.

\bibitem{POP13}
{\sc O.~Popovych, S.~Yanchuk, and P.~Tass}, {\em Self-organized noise
  resistance of oscillatory neural networks with spike timing-dependent
  plasticity}, Sci. Rep., 3 (2013), 2926 (6~pages).

\bibitem{POP15}
{\sc O.~V. Popovych, M.~N. Xenakis, and P.~A. Tass}, {\em The spacing principle
  for unlearning abnormal neuronal synchrony}, PLOS ONE, 10 (2015),
  \url{https://doi.org/10.1371/journal.pone.0117205}.

\bibitem{RAS06a}
{\sc C.~P\"{o}tzsche and M.~Rasmussen}, {\em Taylor approximation of integral
  manifolds}, Journal of Dynamics and Differential Equations, 18 (2006),
  pp.~427--460, \url{https://doi.org/10.1007/s10884-006-9011-8}.

\bibitem{REN07}
{\sc Q.~Ren and J.~Zhao}, {\em Adaptive coupling and enhanced synchronization
  in coupled phase oscillators}, Phys. Rev. E, 76 (2007), 016207 (6~pages),
  \url{https://doi.org/10.1103/physreve.76.016207}.

\bibitem{SAK86}
{\sc H.~Sakaguchi and Y.~Kuramoto}, {\em A soluble active rotater model showing
  phase transitions via mutual entertainment}, Prog. Theor. Phys, 76 (1986),
  pp.~576--581.

\bibitem{SEL02}
{\sc P.~Seliger, S.~C. Young, and L.~S. Tsimring}, {\em Plasticity and learning
  in a network of coupled phase oscillators}, Phys. Rev. E, 65 (2002), 041906
  (7~pages), \url{https://doi.org/10.1103/physreve.65.041906}.

\bibitem{SOR13}
{\sc M.~C. Soriano, J.~Garc{\'i}a-Ojalvo, C.~R. Mirasso, and I.~Fischer}, {\em
  Complex photonics: Dynamics and applications of delay-coupled semiconductors
  lasers}, Rev. Mod. Phys., 85 (2013), pp.~421--470.

\bibitem{STR00}
{\sc S.~H. Strogatz}, {\em From {K}uramoto to {C}rawford: exploring the onset
  of synchronization in populations of coupled oscillators}, Physica D, 143
  (2000), pp.~1--20.

\bibitem{TIM14}
{\sc L.~Timms and L.~Q. English}, {\em {S}ynchronization in phase-coupled
  {K}uramoto oscillator networks with axonal delay and synaptic plasticity},
  Phys. Rev. E, 89 (2014), 032906 (9~pages),
  \url{https://doi.org/10.1103/physreve.89.032906}.

\bibitem{TIN12}
{\sc M.~R. Tinsley, S.~Nkomo, and K.~Showalter}, {\em Chimera and phase cluster
  states in populations of coupled chemical oscillators}, Nat. Phys., 8 (2012),
  pp.~662--665, \url{https://doi.org/10.1038/nphys2371}.

\bibitem{VAN00}
{\sc V.~K. Vanag, L.~Yang, M.~Dolnik, A.~M. Zhabotinsky, and I.~R. Epstein},
  {\em Oscillatory cluster patterns in a homogeneous chemical system with
  global feedback}, Nature, 406 (2000), pp.~389--391,
  \url{https://doi.org/10.1038/35019038 10.1038/35019038}.

\bibitem{VER06}
{\sc F.~Verhulst}, {\em {Methods and applications of singular perturbations:
  boundary layers and multiple timescale dynamics}}, Springer, 2006.

\bibitem{WIL06}
{\sc D.~A. Wiley, S.~H. Strogatz, and M.~Girvan}, {\em The size of the sync
  basin}, Chaos, 16 (2006), 015103 (8~pages),
  \url{https://doi.org/10.1063/1.2165594}.

\end{thebibliography}
\end{document}